\providecommand\CT@arc@{}
\providecommand\CT@row@color{}
\providecommand\CT@cell@color{}
\providecommand\CT@do@color{}
\renewenvironment{proof}[1][\proofname]{%
  \par\pushQED{\qed}%
  \normalfont
  \topsep=2pt\partopsep=2pt\parsep=0pt\itemsep=0pt
  \trivlist
  \item[\hskip\labelsep\itshape #1\@addpunct{.}]%
}{%
  \popQED\endtrivlist\@endpefalse
}
\newtheoremstyle{tight}
  {0pt}   
  {1pt}   
  {}      
  {}      
  {\bfseries\scshape} 
  {.}     
  {.5em}  
  {}      
\theoremstyle{tight}
\newtheorem{theorem}{Theorem}[section]
\newaliascnt{definition}{theorem}
\newtheorem{definition}[definition]{Definition}
\newaliascnt{example}{theorem}
\newtheorem{example}[example]{Example}
\newaliascnt{proposition}{theorem}
\newtheorem{proposition}[proposition]{Proposition}
\newaliascnt{remark}{theorem}
\newtheorem{remark}[remark]{Remark}
\newaliascnt{lemma}{theorem}
\newtheorem{lemma}[lemma]{Lemma}
\newaliascnt{corollary}{theorem}
\newtheorem{corollary}[corollary]{Corollary}
\definecolor{e-color}{named}{orange}
\definecolor{closed-color}{named}{blue}
\newcommand{\mathcolorbox}[2]{\colorbox{#1}{\ensuremath{#2}}}
\newcommand\HypI[1]{\textbf{HypI}(#1)}
\newcommand\MdaCospans{\textbf{MHypI}(\Sigma)}
\newcommand{\Ecospans}{{\catname{CEHypI(\Sigma)}}}
\newcommand{\MdaEcospans}{{\catname{MCEHypI(\Sigma)}}}
\newcommand{\WellTypedMdaEcospans}{{\catname{MCEHypI(\Sigma)}}}
\newcommand{\hashtag}{{\#}}
\newcommand{\consistency}{{\smile}}
\newcommand{\join}{+}
\newcommand{\defeq}{\stackrel{\triangle}{=}}
\newcommand{\missing}[1]{{\color{red}\bfseries [TODO]}}
\newcommand{\defTodo}[2]{%
	\expandafter\newcommand\csname #1\endcsname[1]{%
		\todo[linecolor=#2,backgroundcolor=#2!25,bordercolor=#2,inline,size=\tiny]{\textbf{#1}: ##1}}}
\newcommand{\defTODO}[2]{%
	\expandafter\newcommand\csname #1\endcsname[1]{%
		\todo[linecolor=#2,backgroundcolor=#2!25,bordercolor=#2,inline,size=\tiny,caption={\textbf{(#1 LONG TODO)}}]{##1}}}
\newcommand{\AlekseiColor}{RoyalBlue}
\newcommand{\catname}[1]{\mathbf{#1}}
\newcommand\id{\textsf{id}}
\newcommand\sym{\textsf{sym}}
\newcommand\I{I}
\newcommand{\comonoid}{%
	\begin{tikzpicture}[baseline=0.4ex, scale=0.4, line width=0.8pt]
		\path [use as bounding box] (0,0) rectangle (0.6,0.8);
		\draw (0.2,0) -- (0.2,0.4);
		\filldraw[black] (0.2,0.4) circle (0.08);
		\draw (0,0.8) .. controls (0,0.65) and (0.1,0.45) .. (0.2,0.4);
		\draw (0.4,0.8) .. controls (0.4,0.65) and (0.3,0.45) .. (0.2,0.4);
	\end{tikzpicture}%
}
\newcommand{\counit}{%
	\begin{tikzpicture}[baseline=0.4ex, scale=0.4, line width=0.8pt]
		\path [use as bounding box] (0,0) rectangle (0.4,0.8);
		\draw (0.2,0) -- (0.2,0.4);
		\filldraw[black] (0.2,0.4) circle (0.08);
	\end{tikzpicture}%
}
\tikzstyle{termbox}=[draw=term, fill={term!10}, rounded corners, minimum size=20pt]
\tikzstyle{tallbox}=[draw=term, fill={term!10}, rounded corners, minimum width=20pt, minimum height=40pt]
\tikzstyle{termpic}=[x=1pt, y=1pt, inner sep=0pt, outer sep=0pt, thick]
\tikzstyle{box}=[shape=rectangle, text height=1.5ex, text depth=0.25ex, yshift=0.5mm, fill=white, draw=black, minimum height=12.5mm, yshift=-0.5mm, minimum width=7.5mm, font={\small}]
\tikzstyle{Z dot}=[inner sep=0mm, minimum size=2mm, shape=circle, draw=black, fill={rgb,255: red,160; green,255; blue,160}]
\tikzstyle{Z phase dot}=[minimum size=5mm, font={\footnotesize\boldmath}, shape=rectangle, rounded corners=2mm, inner sep=0.2mm, outer sep=-2mm, scale=0.8, tikzit shape=circle, draw=black, fill={rgb,255: red,160; green,255; blue,160}, tikzit draw=blue]
\tikzstyle{X dot}=[Z dot, shape=circle, draw=black, fill={rgb,255: red,220; green,0; blue,0}]
\tikzstyle{X phase dot}=[Z phase dot, tikzit shape=circle, tikzit draw=blue, fill={rgb,255: red,220; green,0; blue,0}, font={\footnotesize\color{white}\boldmath}]
\tikzstyle{hadamard}=[fill=yellow, draw=black, shape=rectangle, inner sep=0.6mm, minimum height=1.5mm, minimum width=1.5mm]
\tikzstyle{small hadamard}=[hadamard]
\tikzstyle{vertex}=[inner sep=0mm, minimum size=3pt, shape=circle, draw=black, fill=black]
\tikzstyle{vertex set}=[inner sep=0mm, minimum size=1mm, shape=circle, draw=black, fill=white, font={\footnotesize\boldmath}]
\tikzstyle{new style 0}=[draw=term, fill={term!10}, rounded corners, minimum width=25pt, minimum height=30pt]
\tikzstyle{new style 2}=[draw=term, fill={term!10}, rounded corners, minimum width=25pt, minimum height=30pt]
\tikzstyle{round box}=[fill=white, draw=black, shape=rectangle, rounded corners=5pt, font={\small}, minimum height=6mm, minimum width=6mm]
\tikzstyle{e-box}=[fill=white, draw=black, shape=rectangle, minimum width=20mm, minimum height=20mm]
\tikzstyle{empty diag string}=[fill=white, draw={rgb,255: red,165; green,165; blue,165}, shape=rectangle, minimum size=1.2 cm, dashed, thick]
\tikzstyle{empty diag red}=[fill=white, draw={rgb,255: red,237; green,105; blue,94}, shape=rectangle, minimum size=1.2 cm, dashed, thick]
\tikzstyle{empty diag yellow}=[fill=white, draw={rgb,255: red,245; green,194; blue,81}, shape=rectangle, minimum size=1.2 cm, dashed, thick]
\tikzstyle{empty diag blue}=[fill=white, draw={rgb,255: red,99; green,182; blue,240}, shape=rectangle, minimum size=1.2 cm, dashed, thick]
\tikzstyle{empty diag green}=[fill=white, draw={rgb,255: red,82; green,158; blue,86}, shape=rectangle, minimum size=1.2 cm, dashed, thick]
\tikzstyle{empty diag black}=[fill=white, draw=black, shape=rectangle, minimum size=1.2 cm, dashed, thick]
\tikzstyle{orange_red_v}=[fill={rgb,255: red,237; green,19; blue,90}, draw={rgb,255: red,237; green,19; blue,90}, shape=circle]
\tikzstyle{turquoise_v}=[fill={rgb,255: red,0; green,180; blue,206}, draw={rgb,255: red,0; green,180; blue,206}, shape=circle]
\tikzstyle{forest_v}=[fill={rgb,255: red,0; green,155; blue,85}, draw={rgb,255: red,0; green,155; blue,85}, shape=circle]
\tikzstyle{new style 1}=[fill={rgb,255: red,128; green,128; blue,128}, draw={rgb,255: red,128; green,128; blue,128}, shape=circle]
\tikzstyle{tophalfcircle}=[draw=none,minimum size=1cm,inner sep=0pt,outer sep=0pt]
\tikzstyle{new edge style 0}=[-, draw=black, line width=0.5pt]
\tikzstyle{blue 0}=[-, draw=blue, line width=0.8pt]
\tikzstyle{red 0}=[-, draw=red, line width=0.8pt]
\tikzstyle{black dash}=[-, color=black, dashed, dash pattern=on 1.5pt off 1.5pt, draw=black, line width=0.8pt]
\tikzstyle{red dash}=[-, color=red, dashed, dash pattern=on 1pt off 1.5pt, draw=red, line width=0.8pt, line cap=round]
\tikzstyle{blue dash}=[-, color=blue, dashed, dash pattern=on 1pt off 1.5pt, draw=blue, line width=0.8pt, line cap=round]
\tikzstyle{red term dash}=[-, dotted, draw={rgb,255: red,210; green,0; blue,0}, dash pattern=on 1pt off 2pt, line width=0.8pt, line cap=round]
\tikzstyle{brace edge}=[-, tikzit draw=blue, decorate, decoration={brace,amplitude=1mm,raise=-1mm}, line width=0.8pt]
\tikzstyle{gray}=[-, draw={rgb,255: red,191; green,191; blue,191}]
\tikzstyle{arrow}=[<-, draw={rgb,255: red,128; green,128; blue,128}]
\tikzstyle{double-arrow}=[draw={rgb,255: red,128; green,128; blue,128}, <->]
\tikzstyle{thick line}=[-, line width=0.8pt]
\tikzstyle{new edge style 1}=[draw=black, fill=none, ->, >=latex, line width=0.5pt]
\tikzstyle{new edge style 2}=[-, fill=gray]
\tikzstyle{grey identity}=[-, draw={rgb,255: red,191; green,191; blue,191}, color=gray, dashed, dash pattern=on 1.5pt off 1.5pt, draw=black, line width=0.8pt]
\tikzstyle{lambda_unit_string}=[-, draw={rgb,255: red,0; green,0; blue,0}, fill={rgb,255: red,238; green,238; blue,255}, thick]
\tikzstyle{fancy red dash}=[-, draw={rgb,255: red,237; green,105; blue,94}, dashed, dash pattern=on 1pt off 1.5pt, draw={rgb,255: red,237; green,105; blue,94}, line width=0.8pt, line cap=round]
\tikzstyle{fancy yellow dash}=[-, draw={rgb,255: red,245; green,194; blue,81}, dashed, dash pattern=on 1pt off 1.5pt, draw={rgb,255: red,245; green,194; blue,81}, line width=0.8pt, line cap=round]
\tikzstyle{fancy blue dash}=[-, draw={rgb,255: red,99; green,182; blue,240}, dashed, dash pattern=on 1pt off 1.5pt, draw={rgb,255: red,99; green,182; blue,240}, line width=0.8pt, line cap=round]
\tikzstyle{fancy green dash}=[-, draw={rgb,255: red,82; green,158; blue,86}, dashed, dash pattern=on 1pt off 1.5pt, draw={rgb,255: red,82; green,158; blue,86}, line width=0.8pt, line cap=round]
\tikzstyle{slotted_e_graph_header}=[-, fill=lightgray, draw=black, line width=0.75pt]
\tikzstyle{orange_red_e}=[-, draw={rgb,255: red,237; green,19; blue,90}]
\tikzstyle{turquoise_e}=[-, draw={rgb,255: red,0; green,180; blue,206}]
\tikzstyle{forest_e}=[-, draw={rgb,255: red,0; green,155; blue,85}]
\tikzstyle{hedge}=[fill=white, draw=black, shape=rectangle, rounded corners=2mm, inner sep=0.2mm, outer sep=-2mm, scale=0.8, minimum height=8mm, minimum width=8mm, tikzit category=hypergraph]
\tikzstyle{hedge interface}=[fill=white, draw=black, shape=rectangle, rounded corners=1.5mm, inner sep=0.2mm, outer sep=-2mm, scale=0.8, minimum height=5mm, minimum width=5mm, tikzit category=hypergraph]
\tikzstyle{hedge blue}=[hedge, fill={rgb,255: red,102; green,204; blue,255}, draw=black, shape=rectangle, tikzit category=hypergraph]
\tikzstyle{node}=[fill=black, draw=black, shape=circle, minimum size=1.5mm, inner sep=0mm, tikzit category=hypergraph]
\tikzstyle{red node}=[fill=red, draw=black, shape=circle, minimum size=1.5mm, inner sep=0mm, tikzit category=hypergraph]
\tikzstyle{blue node}=[fill=blue, draw=black, shape=circle, minimum size=1.5mm, inner sep=0mm, tikzit category=hypergraph]
\tikzstyle{green node}=[fill={rgb,255: red,33; green,140; blue,33}, draw=black, shape=circle, minimum size=1.5mm, inner sep=0mm, tikzit category=hypergraph]
\tikzstyle{node highlight}=[fill=black, draw=blue, thick, shape=circle, minimum size=1.5mm, inner sep=0mm, tikzit category=hypergraph]
\tikzstyle{red node highlight}=[fill=red, draw=blue, thick, shape=circle, minimum size=1.5mm, inner sep=0mm, tikzit category=hypergraph]
\tikzstyle{yellow hedge}=[hedge, fill=yellow, draw=black, shape=rectangle, tikzit category=hypergraph]
\tikzstyle{green hedge}=[hedge, fill=green, draw=black, shape=rectangle, tikzit category=hypergraph]
\tikzstyle{small box}=[fill=white, draw=black, shape=rectangle, minimum height=6mm, minimum width=6mm, tikzit category=string diagram]
\tikzstyle{vsmall box}=[fill=black, draw=black, shape=rectangle, minimum height=4mm, minimum width=1mm, tikzit category=string diagram, inner sep=0]
\tikzstyle{medium box}=[fill=white, draw=black, shape=rectangle, minimum height=11mm, minimum width=6mm, tikzit category=string diagram]
\tikzstyle{semilarge box}=[fill=white, draw=black, shape=rectangle, minimum height=16mm, minimum width=6mm, tikzit category=string diagram]
\tikzstyle{large box}=[fill=white, draw=black, shape=rectangle, minimum height=21mm, minimum width=6mm, tikzit category=string diagram]
\tikzstyle{black dot}=[fill=black, draw=black, shape=circle, minimum size=2mm, inner sep=0mm, tikzit category=string diagram]
\tikzstyle{white dot}=[fill=white, draw=black, shape=circle, minimum size=2mm, inner sep=0mm, tikzit category=string diagram]
\tikzstyle{red dot}=[fill=red, draw=black, shape=circle, minimum size=2mm, inner sep=0mm, tikzit category=string diagram]
\tikzstyle{wlabel}=[fill=none, draw=none, shape=rectangle, tikzit category=string diagram, font={\footnotesize}, inner sep=0pt, tikzit fill={rgb,255: red,102; green,204; blue,255}, tikzit draw={rgb,255: red,102; green,204; blue,255}, yshift=0.3mm]
\tikzstyle{BRchange}=[draw=black, shape=diamond, tikzit shape=circle, tikzit fill={rgb,255: red,96; green,0; blue,0}, diamond split part fill={black,red}, inner sep=-5mm, minimum width=2.7mm, minimum height=1.7mm]
\tikzstyle{RBchange}=[draw=black, shape=diamond, tikzit shape=circle, tikzit fill={rgb,255: red,165; green,0; blue,0}, diamond split part fill={red,black}, inner sep=0, minimum width=2.7mm, minimum height=1.7mm]
\tikzstyle{dummy}=[fill=none, draw=none, shape=circle, font={\small}, inner sep=1pt, tikzit draw=blue, tikzit fill=white]
\tikzstyle{node label}=[fill=none, draw=none, shape=rectangle, tikzit fill=cyan, tikzit draw=cyan, font={\scriptsize}, tikzit shape=circle, inner sep=0pt]
\tikzstyle{empty diag}=[fill=white, draw={rgb,255: red,165; green,165; blue,165}, shape=rectangle, minimum size=1.2 cm, dashed, thick]
\tikzstyle{large horizontal box}=[fill=white, draw=black, shape=rectangle, minimum height=6mm, minimum width=21mm, tikzit category=string diagram]
\tikzstyle{medium horizontal box}=[fill=white, draw=black, shape=rectangle, minimum height=6mm, minimum width=11mm, tikzit category=string diagram]
\tikzstyle{semilarge horizontal box}=[fill=white, draw=black, shape=rectangle, minimum height=6mm, minimum width=16mm, tikzit category=string diagram]
\tikzstyle{e-box-interface}=[inner sep=0mm, minimum size=1mm, shape=circle, draw=black, fill=white, font={\footnotesize\boldmath}]
\tikzstyle{very large horizontal box}=[fill=white, draw=black, shape=rectangle, minimum height=6mm, minimum width=41mm, tikzit category=string diagram]
\tikzstyle{dashed edge}=[-, dashed, very thick]
\tikzstyle{alt sort}=[-, dashed, dash pattern=on 2pt off 0.5pt, thick, draw=red]
\tikzstyle{diredge}=[->, >={Latex[length=1.5mm]}]
\tikzstyle{diredge highlight}=[->, >={Latex[length=1.5mm]}, draw=blue, thick]
\tikzstyle{boundary frame}=[-, draw={rgb,255: red,170; green,170; blue,255}, dashed, fill={rgb,255: red,238; green,238; blue,255}, thick, dash pattern=on 2pt off 0.5pt]
\tikzstyle{graph frame}=[-, draw={rgb,255: red,191; green,191; blue,191}, dashed, fill={rgb,255: red,238; green,238; blue,238}, thick, dash pattern=on 2pt off 0.5pt]
\tikzstyle{def sort}=[-]
\tikzstyle{component}=[-, draw=red, thick]
\tikzstyle{map edge}=[{|->}, >=latex, shorten <=0.5mm, shorten >=0.5mm]
\tikzstyle{hypergraph map edge}=[{|->}, draw=red, shorten <=1mm, shorten >=1mm]
\tikzstyle{cdedge}=[->]
\tikzstyle{big cdedge}=[->, very thick, >=latex]
\tikzstyle{pointer edge}=[->, draw=gray, thick]
\tikzstyle{vertical_delimiter}=[-, dashed, very thick, fill=none, draw={rgb,255: red,170; green,170; blue,255}]
\tikzstyle{e_hyperedge}=[-, draw=black, dashed, fill={rgb,255: red,255; green,255; blue,255}, thick, dash pattern=on 3pt off 1.5pt]
\tikzstyle{lambda box}=[-, draw=black, fill={rgb,255: red,255; green,255; blue,255}, thick]
\tikzstyle{lambda_unit}=[-, draw={rgb,255: red,0; green,0; blue,0},fill={rgb,255: red,238; green,238; blue,255}, thick]
\def\@oddhead{}
\def\@evenhead{}
\author{\begin{tabular}{ccc}
  Aleksei Tiurin & Dan R. Ghica & Nick Hu\\
  {\small University of Birmingham} & {\small University of Birmingham} & {\small Kyoto University}\\
  &{\small and}&\\
  &{\small Huawei Central Software Institute}&
\end{tabular}\\
\thanks{This work was supported by EPSRC grants EP/V001612/1 and EP/Y010035/1.}
}
\title{Categorical E-Graphs for Lambda Calculi}
\begin{document}
\maketitle

\begin{abstract}
	Equality saturation is a powerful technique for program optimisation and reasoning, driven by the use of equivalence classes of terms under rewrite rules.
	These equivalence classes lie at the root of data structures called equality graphs (e-graphs).
	Despite their numerous advantages, until recently e-graphs lacked native support for variable binding, limiting their applicability to programming languages.
	We propose to address this problem from a categorical perspective by extending the interpretation of e-graphs as string diagrams, namely morphisms in symmetric monoidal semilattice-enriched categories which we additionally equip with a monoidal closed structure.
	We further define a concrete representation using hierarchical hypergraphs, and introduce a corresponding double-pushout (DPO) rewriting system.
	Finally, we establish the equivalence between term rewriting and DPO rewriting, with the combinatorial model inherently absorbing the equations of the symmetric monoidal structure.
	Our approach, specifically designed for lambda calculi, is compared and contrasted with slotted e-graphs — an alternative method for incorporating variable binding within e-graphs.
\end{abstract}
\section{Introduction}%
\label{sec:introduction}

\subsection{Motivation and contribution}

E-graphs (short for \emph{equality graphs})~\cite{EggPaper} are an efficient data structure for non-destructive rewriting, incorporating equivalence representation, used in automated theorem proving, compiler optimisation, and program analysis.
They mitigate the \emph{phase-ordering problem} of term rewriting: the way rewrites are scheduled may enable or block further rewrites due to introduction or elimination of redexes, resulting in a suboptimal result.
This is achieved by making rewrites on e-graphs non-destructive: an application of a rewrite rule, instead of replacing the left-hand side by the right-hand side, adds the latter to the e-graph so that all pre-existing redexes are protected and new ones are introduced.
Exponential growth of the e-graph is avoided by the clever use of sharing of nodes.

E-graphs were first introduced in the automated theorem proving literature in the 1980s~\cite{nelson1980techniques}, but recent years have seen renewed interest, particularly for optimisation and synthesis using the technique of \emph{equality saturation}~\cite{EqualitySaturation2009, flatt_small_2022, EggPaper}, wherein rewrites to an e-graph are repeated until a fixpoint (or timeout) is reached.
We are interested in using e-graphs in the context of functional programming languages, which builds upon an internal representation of the $\lambda$-calculus.
K{\oe}hler et al.~\cite{koehler2022sketchguided} pioneered work in this area by identifying two key, interconnected challenges—handling binding and ensuring capture-avoiding substitution—when applying equality saturation to an encoding of the $\lambda$-calculus in e-graphs.

In the textual representation of $\lambda$-terms, $\alpha$-equivalent terms may be distinct and thus cannot be shared.
De Bruijn indices~\cite{de1972lambda} syntactically identify $\alpha$-equivalent terms, but the same variable can receive different indices depending on its scope, again interfering with sharing.
As a simple example, $\lambda f . f ((\lambda x . f x) 2)$ is encoded as $\lambda . \%0 ((\lambda . \%1 \%0) 2)$, where the two occurrences of $f$ are represented differently and cannot be shared.

A simple but theoretically~\cite{AbadiExplicitSubst} and practically~\cite{anf} important extension of the $\lambda$-calculus consists of adding let-bindings or explicit substitutions to the syntax.
Terms are identified up to permutations of non-interacting substitutions, called \enquote{graph equivalence}~\cite{accattoli2014nonstandard}.
Modelling this in conventional e-graphs requires additional nodes for explicit substitutions and additional rewrite rules to capture graphical equivalence, polluting the equality saturation state space with bureaucracy rather than the domain-specific rewriting one actually cares about.

All these issues can be resolved by the use of \emph{string diagrams} in place of textual syntax.
Tiurin et al.~\cite{tiurin2025equivalencehypergraphsdporewriting} shows how e-graphs (without binding) can be given a categorical semantics using symmetric monoidal semilattice-enriched categories, which allows e-graphs to be reconstructed as string diagrams.
In this paper, we extend this technique to e-graphs with bindings, lifting to monoidal \emph{closed} categories.

\subsection{String diagrams}

String diagrams provide a topological calculus for monoidal categories, representing objects as \emph{strings} and morphisms as \emph{nodes}.
They bridge category theory---which gives semantic models to programming languages and logics---to the concrete language of graphs fundamental in compiler design; see Ghica and Zanasi~\cite{ghica2024stringdiagramslambdacalculifunctional} for the $\lambda$-calculus setting and Tiurin et al.~\cite{tiurin2025equivalencehypergraphsdporewriting} for e-graphs without bindings.
In our string diagrams, which are oriented left-to-right and bottom-to-top, the structure of symmetric monoidal categories is absorbed into the graphical syntax, with composition and tensoring represented by juxtaposition and symmetries as weaving of strings.

We require a concrete encoding of string diagrams, analogous to the de Bruijn index representation of $\lambda$-terms.
For this we use \emph{hypergraphs}~\cite{bonchi_string_2022-1}: hypergraph isomorphism rigorously expresses when string diagrams \enquote{look the same} topologically, and string-diagrammatic equational reasoning is formalised as DPO rewriting over these hypergraphs~\cite{bonchi_string_2022-1}.
For variable binding and equivalence classes we rely on \emph{functorial boxes}~\cite{10.1007/11874683_1}, represented via \emph{hierarchical hypergraphs}~\cite{fscd}.

Revisiting the earlier example in~\autoref{fig:de-brujin-string}, observe that the two occurrences of $f$ are now shared.
The scope of a $\lambda$-abstraction is a rounded box with the bound variable as a wire on the frame, the application node a half-circle
\adjustbox{scale=0.15}{\begin{tikzpicture}
		\begin{pgfonlayer}{nodelayer}
			\node [style=none] (0) at (1, 6.75) {};
			\node [style=none] (1) at (1.25, 7.75) {};
			\node [style=none] (2) at (2, 8) {};
			\node [style=none] (3) at (2.75, 7.75) {};
			\node [style=none] (4) at (3, 6.75) {};
			\node [style=none] (5) at (1.5, 6.75) {};
			\node [style=none] (6) at (2.5, 6.75) {};
			\node [style=none] (7) at (2, 9) {};
			\node [style=none] (8) at (1.5, 5.75) {};
			\node [style=none] (9) at (2.5, 5.75) {};
		\end{pgfonlayer}
		\begin{pgfonlayer}{edgelayer}
			\draw [style=lambda_unit_string] (3.center)
			to [in=60, out=-30, looseness=0.75] (4.center)
			to (0.center)
			to [in=-150, out=120, looseness=0.75] (1.center)
			to [in=-180, out=45, looseness=0.75] (2.center)
			to [in=135, out=0, looseness=0.75] cycle;
			\draw (2.center) to (7.center);
			\draw (5.center) to (8.center);
			\draw (9.center) to (6.center);
		\end{pgfonlayer}
	\end{tikzpicture}
},
and sharing a \emph{copy node} \comonoid.

\subsection{E-graphs}

E-graphs simultaneously represent several equivalent terms of an algebraic theory by generalising \emph{term graphs} to include equivalence classes of subterms.
\emph{E-classes} (dashed boxes) contain \emph{e-nodes} representing equivalent terms with the same root; e-nodes reference e-classes rather than other e-nodes.
Extracting a term amounts to choosing one e-node per e-class; different choices yield equivalent terms.
\autoref{fig:e-graph-example} shows how a series of rewrite rules are non-destructively applied to an e-graph~\cite{EggPaper}, alongside the equivalent string diagram representation using \emph{e-hypergraphs}~\cite{tiurin2025equivalencehypergraphsdporewriting}.
Applying a rewrite rule to an e-graph amounts to adding new e-nodes into the e-graph and \emph{merging} e-classes for the left-hand side and right-hand side of the rewrite rule, instantiating the variables in the rule with the e-classes according to the match.
The key difference that can be noted in string-diagrammatic representation, is that the dashed boxes that correspond to e-graph e-classes are just ordinary nodes that happen to have hierarchical structure.
These boxes have enough inputs to cater for each subdiagram inside, explicitly deleting the unnecessary wires inside each component using $\counit$.
\begin{figure*}[t!]
	\begin{minipage}{0.18\textwidth}
		\centering
		\adjustbox{scale=0.55}{
			\tikzfig{figures/de-brujin-string}
		}
		\caption{String-dia\-gram\-matic representation of $\lambda f . f ((\lambda x . f x) 2)$.}
		\label{fig:de-brujin-string}
	\end{minipage}
	\hfill
	\begin{minipage}{0.8\textwidth}
		\begin{center}
			\adjustbox{scale=0.55}{
				\tikzfig{figures/e-graph-example-2}
			}
		\end{center}
		\caption{E-graph example (top) and its equivalent string diagram representation (bottom)~\cite{tiurin2025equivalencehypergraphsdporewriting}}
		\label{fig:e-graph-example}
	\end{minipage}
\end{figure*}

\subsection{Conventional vs.\ slotted vs.\ categorical e-graphs with binding}%
\label{sec:vs-e-graphs-with-binding}

A common approach to $\beta$-reduction in e-graphs uses special explicit-substitution nodes with associated rewrite rules~\cite{EggPaper,koehler2022sketchguided}.
In our categorical representation these nodes are unnecessary, since explicit substitution is modelled as composition.
\autoref{fig:e-graph-substitution} illustrates successive $\beta$-reductions in a conventional e-graph for $(\lambda x . (y + y) + x) 1$; \autoref{fig:e-string-substitution} shows the same reduction via string diagrams, where $\beta$-reduction is simply re-wiring when an application node meets an abstraction.
Using de Bruijn indices instead would only push the problem elsewhere: it greatly increases the complexity of encoding the reduction as a rewrite rule, requiring meta-syntactic shifting operators on the right-hand side~\cite{koehler2022sketchguided}.

\begin{wrapfigure}{r}{0.5\textwidth}
	\vspace{-8pt}
	\captionsetup{skip=2pt}
	\begin{subfigure}[b]{0.48\linewidth}
		\centering
		\adjustbox{scale=0.55}{
			\tikzfig{figures/slotted-e-graph-example}
		}
		\caption{Slotted e-graph}%
		\label{fig:slotted}
	\end{subfigure}
	\hfill
	\begin{subfigure}[b]{0.48\linewidth}
		\centering
		\adjustbox{scale=0.45}{
			\tikzfig{figures/slotted-string-2}
		}
		\caption{Closed e-hypergraph}%
		\label{fig:closed}
	\end{subfigure}
	\caption{Extensions of e-graphs to handle binding in $\lambda x . (a + b) + (x + c)$.}%
	\label{fig:extended-egraphs}
	\vspace{-8pt}
\end{wrapfigure}
Recent work on \emph{slotted} e-graphs~\cite{slotted-egraphs} addresses the sharing issues by incorporating variables as first-class data and making binding a built-in operation, guaranteeing freshness inside binders and differentiating bound from free variables.
An example is shown in~\autoref{fig:slotted}: each e-class is parameterised by a set of free variables and can be \enquote{invoked} by supplying concrete variable names. 
We can achieve a similar degree of sharing via closure conversion in our setting (\autoref{fig:closed}).
The key difference is that slotted e-graphs solve variable binding generically using nominal techniques applicable to arbitrary formal languages with binders (e.g.\ $\pi$-calculus), while our approach embeds into the internal language of our categorical model, using metatheoretic $\lambda$-abstraction to handle variable binding.
We provide a case study in~\autoref{sec:application}, where we argue that our approach has particular advantages for e-graphs for (variations of) $\lambda$-calculus.
Our categorical model can also help justify the congruences maintained by slotted e-graphs, as we elaborate in~\autoref{sec:categorical}.
We note that Schneider et al.~\cite{slotted-egraphs} focus on the e-graph as a data structure with a full implementation, while we focus on the foundational categorical semantics of e-graphs with binding and their combinatorial representation.
We hope our work serves as a foundation for semantics of e-graphs with bindings, a currently underdeveloped area~\cite{zhang_relational_2022,tree_automata_e_graphs}.

\begin{figure*}
	\captionsetup{skip=0pt}
	\begin{subfigure}{\linewidth}
		\adjustbox{width=\textwidth}{
			\tikzfig{figures/e-graph-substitution}
		}
		\subcaption{E-graph explicit substitution example.}
		\label{fig:e-graph-substitution}
	\end{subfigure}
	\begin{subfigure}{\linewidth}
		\adjustbox{width=\textwidth}{
			\tikzfig{figures/e-string-substitution-2}
		}
		\subcaption{String-diagrammatic substitution example.}\label{fig:e-string-substitution}
	\end{subfigure}
	\caption{Conventional e-graphs vs.\ categorical e-graphs with binding.}
\end{figure*}

\subsection{Contributions}
Our contribution is the construction of a form of e-graphs that support binding, developed through a principled categorical approach.
Specifically, we extend a prior categorical semantics for e-graphs without binding by incorporating a monoidal closed structure.
Tiurin et al.~\cite{tiurin2025equivalencehypergraphsdporewriting} shows that e-graphs over signature $\Sigma$ can be represented as morphisms of a free Cartesian symmetric monoidal semilattice-enriched category over the same signature.
We argue that restricting the categorical domain to free \textbf{closed} symmetric monoidal semilattice-enriched categories gives rise to structures that naturally support binding and equivalence classes of morphisms.
That is, morphisms (string diagrams) of such category can both encode variable binding (using the closed structure) and equivalence (using the semilattice enrichment) between subterms (sub-diagrams).
We call these string diagrams, or more precisely, the combinatorial representation of them, \emph{closed e-hypergraphs}.
First, we define closed enriched $\Sigma$-terms which generate a theory with equivalences and bindings.
Then we define what closed e-hypergraphs are and provide the interpretation of closed enriched $\Sigma$-terms as cospans of e-hypergraphs (and by doing this we also formalise the corresponding string diagrams as concrete combinatorial objects).
Next we formulate the DPO rewriting for closed e-hypergraphs and show that this notion of rewriting is sound and complete with respect to rewriting of closed enriched $\Sigma$-terms modulo SMC equations.
We describe what benefits the closed e-hypergraph representation may bring for equality saturation in the context of $\lambda$-calculus with explicit substitution compared to conventional (slotted) e-graphs in~\autoref{sec:application}.

\section{Categorical semantics of E-Graphs with Bindings}%
\label{sec:categorical}

This section introduces preliminaries on symmetric monoidal semilattice-enriched categories generated by monoidal theories, and the string diagram formalism used to represent them.
We also show a specific case of closed monoidal semilattice-enriched categories and how the closed structure interacts with the enrichment.

We use $(\fatsemi\!)$ to denote the composition in diagrammatic order and use $\sym$ for the symmetry natural transformation for a (strict) tensor ($\otimes$).
We also omit subscripts on identities and natural transformations where it can be inferred.

\begin{definition}
	A monoidal signature $\Sigma = (\Sigma_{O}, \Sigma_{M}, t)$ is given by a set of objects (called types) $\Sigma_{O}$, a set of generators (operations) $\Sigma_{M}$ and a type assignment function $t\colon \Sigma_{M} \to \Sigma_{O}^{*} \times \Sigma_{O}^{*}$ assigning lists of input and output types to a given operation.
\end{definition}
\begin{definition}
	The set of $\Sigma$-terms generated by monoidal signature $\Sigma$ is given by the grammar:
	$\tau \Coloneqq \phi \;|\; \id_{I} \;|\; \id_{A} \;|\; \sym_{A,B} \;|\; \tau_{1} \fatsemi \tau_{2} \;|\; \tau_{1} \otimes \tau_{2}$,
	where $\phi \in \Sigma_{M}$ and $A,B \in \Sigma_{O}$.
\end{definition}
Identities and symmetries of complex types, e.g.\ $\id_{A \otimes B}$ and $\sym_{A \otimes B, C}$, are given by the composition and tensoring of $\id_{A}$ and $\sym_{A,B}$ as, e.g.\ $\id_{A \otimes B} = \id_{A} \otimes \id_{B}$ and $\sym_{A \otimes B, C} = (\id_{A} \otimes \sym_{B,C}) \fatsemi (\sym_{A,C} \otimes \id_{B})$.
\begin{definition}
	The free symmetric monoidal category $\catname{S}(\Sigma)$ is given by a set of $\Sigma$-terms quotiented by the axioms of a symmetric monoidal category.
	The objects are lists of types from $\Sigma_{O}$ with tensoring given by list concatenation.
\end{definition}
\begin{definition}
	A symmetric monoidal theory $\catname{SMT}(\Sigma, \mathcal{E})$ is given by a set of $\Sigma$-terms quotiented by a set of equations $\mathcal{E}$ consisting of pairs of appropriately typed $\Sigma$ terms $l = r$.
\end{definition}
\begin{definition}
	A categorical presentation of  $\catname{SMT}(\Sigma, \mathcal{E})$ is a free symmetric monoidal category $\catname{S}(\Sigma, \mathcal{E})$ given by a set of $\Sigma$ terms quotiented by the laws of a symmetric monoidal category and $\mathcal{E}$.
\end{definition}

We can similarly consider \emph{closed} terms that give rise to a free \emph{closed} symmetric monoidal category.

\begin{definition}%
	\label{def:closed}
	A (right) closed monoidal category is a monoidal category $\catname{C}$ satisfying that for
	every pair of objects $B,C$ there is an object $B \multimap C$ and a morphism $\textbf{ev}_{B,C} : (B \multimap C) \otimes B \to
		C$, and for every triple of objects $A,B,C$ there is an operation $\Lambda_{A,B,C} : \catname{C}(A \otimes B,C) \to
		\catname{C}(A,B \multimap{} C)$ such that for all $f : A \otimes B \to C$ and $g : Z \to A$ the following equations hold:
	\begin{gather*}
		f = \Lambda_{A,B,C}(f) \otimes \id_{B} \fatsemi \textbf{ev}_{B,C},
		\quad
		\id_{B \multimap{} C} = \Lambda_{B \multimap{} C,B,C}(\textbf{ev}_{B,C}),
		\\
		\Lambda_{Z,B,C}(g \otimes \id_{B} \fatsemi f) = g \fatsemi \Lambda_{A,B,C}(f).
	\end{gather*}
\end{definition}

This is equivalent to defining a (right) closed monoidal category as one admitting an adjunction $- \otimes B \dashv B \multimap{} -$ for every object $B$.
If the tensor is symmetric, then the monoidal category is \emph{closed}.

\begin{definition}
	Closed $\Sigma$-terms are defined inductively by first defining a family of objects $obj_{\Sigma_{O}}$.
	\begin{itemize}
		\item A designated $I$ is in $obj_{\Sigma_{O}}$ and for all $A \in \Sigma_{O}$, $A \in obj_{\Sigma_{O}}$.
		\item If $A \in obj_{\Sigma_{O}}$ and $B \in obj_{\Sigma_{O}}$, then $A \otimes B \in obj_{\Sigma_{O}}$ and $A \multimap B \in obj_{\Sigma_{O}}$.
	\end{itemize}
	A family of closed $\Sigma$ terms is then defined by:
	\begin{itemize}
		\item $\phi \colon A \to B \in \Sigma_{M}$ is a closed $\Sigma$-term, once we turned lists into tensors;
		\item if $A$ and $B$ are $obj_{\Sigma_{O}}$, then $\id_{A}$ and $\sym_{A,B}$ are closed $\Sigma$-terms;
		\item if $f \colon A \to B$ and $g : B \to C$ are closed $\Sigma$-terms, then so is $f \fatsemi g : A \to C$;
		\item if $f \colon A \to B$ and $g : C \to D$ are closed $\Sigma$-terms, then so is $f \otimes g \colon A \otimes C \to B \otimes D$;
		\item if $A$ and $B$ are $obj_{\Sigma_{O}}$, then $\textbf{ev}_{A,B} \colon (A \multimap B \otimes A) \to B$ is a closed $\Sigma$-term;
		\item if $h \colon X \otimes A \to B$ is a closed $\Sigma$-term, then $\Lambda_{X,A,B}(h) \colon X \to (A \multimap B)$ is a closed $\Sigma$-term.
	\end{itemize}
\end{definition}

\begin{definition}
	The free closed symmetric monoidal category $\catname{CS}(\Sigma)$ is given by a set of closed $\Sigma$-terms quotiented by the laws of a symmetric monoidal category and the equations from~\autoref{def:closed}.
	We can similarly consider a free closed symmetric monoidal category $\catname{CS}(\Sigma, \mathcal{E})$ induced by a $\catname{SMT}(\Sigma, \mathcal{E})$ over closed $\Sigma$-terms.
\end{definition}

To encode equivalences between morphisms (or $\Sigma$-terms), the Hom sets of the corresponding free categories can be endowed with the additional structure of a semilattice induced by an idempotent, commutative and associative operator \textit{join} $+$ as presented by Tiurin et al.~\cite{tiurin2025equivalencehypergraphsdporewriting}.
This is modelled by enrichment in semilattices, yielding $\catname{SLat}$-categories where $\catname{SLat}$ is the category of \textit{unbounded} semilattices and homomorphisms between them (see~\autoref{sec:appendix:slat}).

\begin{definition}[$\catname{SLat}$-category]%
	\label{def:slat-cat}
	An $\catname{SLat}$-category $\mathbb{C}$ is defined by the following data:
	\begin{itemize}
		\item a set of objects $\mathbb{C}$;
		\item for every pair of objects $A,B \in \mathbb{C}$ --- a Hom semilattice $\mathbb{C}(A,B) \in \catname{SLat}$;
		\item for every triple of objects $A,B,C \in \mathbb{C}$ --- a composition morphism
		      $\circ\colon \mathbb{C}(B,C) \otimes \mathbb{C}(A,B) \to \mathbb{C}(A,C)$ of $\catname{SLat}$;
		\item for every object $A \in \mathbb{C}$ --- a unit morphism
		      $u_{A}\colon I \to \mathbb{C}(A,A)$, where $I$ is the monoidal unit $\{*\}$ for $\catname{SLat}$\footnote{Semilattice homomorphisms $\{*\} \to P$ are in bijection with the elements of $P$ as a set, so this is an element of the underlying set of $\mathbb{C}(A,A)$.},
	\end{itemize}
   such that the coherence diagrams for associativity and unitality of composition commute.
\end{definition}
$\catname{SLat}$ is sufficiently close to $\catname{Set}$ that Hom semilattices can be thought of as the ordinary notion of Hom sets with additional semilattice structure; unit morphisms are as usual, and composition works similarly to composition in categories with an extra condition of respecting the join: as it is a semilattice homomorphism, we require that $(f + g)\fatsemi\; h = f\fatsemi\;h + g\fatsemi\;h$ and $f\fatsemi\;(g+h) = f\fatsemi\;g + f\fatsemi\;h$.

\begin{definition}[$\catname{SLat}$-functor]
	Let $\mathbb{C}$ and $\mathbb{D}$ be two $\catname{SLat}$-categories.
	An $\catname{SLat}$-functor $F\colon \mathbb{C} \to \mathbb{D}$ is defined by the following data:
	\begin{itemize}
		\item a mapping on objects $A$ of $\mathbb{C}$ to $F A$ of $\mathbb{D}$;
		\item for each pair of objects $A, B$ of $\mathbb{C}$, a morphism in $\catname{SLat}$ $F_{A,B}\colon \mathbb{C}(A,B) \to \mathbb{D}(FA,FB)$;
	\end{itemize}
	such that certain coherence diagrams, enforcing the preservation of composition and units, commute.
\end{definition}
As before, this is like an ordinary functor, except with an additional requirement that its action on morphisms respects the join: $F(f + g) = F(f) + F(g)$.

The collection of $\catname{SLat}$-categories and $\catname{SLat}$-functors form an $\catname{SLat}$-category $\catname{SLat}\text{--}\catname{Cat}$\footnote{More generally, if we were to define $\catname{SLat}$-natural transformations, we could view $\catname{SLat}\text{--}\catname{Cat}$ as a symmetric monoidal 2-category~\cite{Kelly2022BASICCO}.}, and for any pair of $\catname{SLat}$-categories $\mathbb{C}$ and $\mathbb{D}$, the product $\catname{SLat}$-category $\mathbb{C} \times \mathbb{D}$ is defined analogously to categories.

\begin{definition}[Symmetric monoidal $\catname{SLat}$-category]%
	\label{def:enriched-prop}
	A symmetric monoidal $\catname{SLat}$-category is given by a $\catname{SLat}$-category $\mathbb{C}$ that additionally has: a unit object $I \in \mathbb{C}$, an $\catname{SLat}$-functor tensor $\otimes\colon \mathbb{C} \times \mathbb{C} \to \mathbb{C}$, such that particular coherence diagrams analogous to those for symmetric monoidal categories commute.
\end{definition}
This is completely analogous to a symmetric monoidal category, with the fact that the tensor is an $\catname{SLat}$-functor ensuring compatibility with the join: $f \otimes (g+h) = f \otimes g + f \otimes h $ and $ (f+g) \otimes h = f \otimes h + g \otimes h$.
We take $\otimes$ to bind more tightly than $+$.

\begin{proposition}(A specialised case of Proposition 6.4.7~\cite{Borceux_1994} and Theorem 5.7.1~\cite{cruttwell2008normed})
	There is a 2-adjunction
	\(
	\mathcal{F} \dashv \mathcal{U}\colon \catname{SLat}\text{--}\catname{Cat} \to \catname{Cat}
	\)
	induced by the usual (symmetric monoidal) free-forgetful adjunction
	\(
	F \dashv U\colon \catname{SLat} \to \catname{Set}.
	\)
	Moreover, $\mathcal{F}$ and $\mathcal{U}$ preserve symmetric monoidal structure, sending symmetric monoidal categories to symmetric monoidal $\catname{SLat}$-categories and vice versa.
\end{proposition}
In particular, the 2-functor $\mathcal{F}$ turns every category $\mathbb{C}_0 \in \catname{Cat}$ freely into a semilattice-enriched category $\mathbb{C} \in \catname{SLat}\text{--}\catname{Cat}$ by making every Hom set of $\mathbb{C}_0$ into the free semilattice on that set along $F$.
Being a 2-functor, it preserves dualities (in the 2-categorical sense): it lifts an adjunction in between categories in $\catname{Cat}$ into an $\catname{SLat}$-adjunction between their free semilattice-enriched counterparts.
We take closure, in the sense of~\autoref{def:closed}, to be defined generally for $\catname{SLat}$-categories by an analogous $\catname{SLat}$-adjunction $- \otimes B \dashv B \multimap -$ for every object $B$.
Thus, applying the 2-functor $\mathcal{F}$ to $\catname{CS}(\Sigma)$ we get a closed symmetric monoidal $\catname{SLat}$-category $\catname{CS}(\Sigma)^{+}$ by this lifting.
We can similarly derive $\catname{SLat}$-enriched versions of  $\Lambda_{A,B,C}$ and $\textbf{ev}_{A,B}$ morphisms using the lifted adjunction as $\textbf{ev}_{A,B} = \varepsilon_{A}$ and $\Lambda_{A,B,C}(f) = \eta_A \fatsemi (B \multimap{} f)$.
The following distributivity law can then be derived, forming the basis of encoding equivalences inside abstractions:
\begin{equation}
	\begin{aligned}
		\Lambda_{A,B,C}(f + g)  &= \eta_A \fatsemi (B \multimap (f + g))                    
		                        = \eta_A \fatsemi (B \multimap f + B \multimap g)\\          
		                        &= \eta_A \fatsemi (B \multimap f) + \eta_A \fatsemi (B \multimap g) 
		                        = \Lambda_{A,B,C}(f) + \Lambda_{A,B,C}(g).
	\end{aligned}%
	\label{law:distributivity}
\end{equation}

Similar to $\catname{CS}(\Sigma)$, $\catname{CS}(\Sigma)^{+}$ can be freely generated from a set of closed $\Sigma^{+}$-terms as follows.
The set of closed $\Sigma^{+}$-terms is constructed just like the set of closed $\Sigma$-terms, with $+$ being an additional constructor for terms.
Terms of $\catname{CS}(\Sigma)^{+}$ are quotiented by the laws of closed symmetric monoidal $\catname{SLat}$-categories.
The same applies in the presence of equations for $\catname{CS}(\Sigma,\mathcal{E})^{+}$.

To aid reasoning, we introduce a new language of string diagrams for closed symmetric monoidal $\catname{SLat}$-categories, using a hierarchical \enquote{dashed box} structure to capture the join operation on morphisms (as was done by Tiurin et al.~\cite{tiurin2025equivalencehypergraphsdporewriting}) as well as another hierarchical \enquote{rounded box} for $\lambda$-abstraction.
\autoref{fig:egraph-strings} displays the generators of this language, which is the usual string-diagrammatic syntax~\cite{Selinger_2010} plus the syntax for $\lambda$-abstraction as established by Ghica and Zanasi~\cite{ghica2024stringdiagramslambdacalculifunctional} and syntax for semilattice enrichment as introduced by Tiurin et al.~\cite{tiurin2025equivalencehypergraphsdporewriting}; the first component denotes an empty diagram.

\begin{figure*}
	\centering
	\adjustbox{scale=0.65}{
		\tikzfig{figures/egraph-strings-2}
	}
	\captionsetup{skip=0pt}
	\caption{String diagrams for closed symmetric monoidal $\catname{SLat}$-categories; dashes stand for arbitrary number of wires and appropriate labels.}
	\label{fig:egraph-strings}
\end{figure*}

\subsection{Categorical foundations for slotted e-graphs}

Slotted e-graphs~\cite{slotted-egraphs} extend the traditional signature employed by e-graphs by an explicit construct for binding ($\text{bind}\; \$x\; tc$) and variables (\emph{slots}).
	Their syntax is generated by the following grammar:
	\begin{alignat*}{2}
		\text{(functional symbols)} &        &                            & \; f,\;g, \ldots ;                                                                       \\
		\text{(slots)}              &        &                            & \; \$x,\; \$y, \ldots ;                                                                   \\
		\text{(term children)}      &        & tc               \Coloneqq & \; t \mid tc_{1},\; \ldots,\; tc_{n} \mid \text{bind}\; \$x\; tc \mid \$x ; \qquad n \geq 1 \\
		\text{(terms)}              & \qquad & t                \Coloneqq & \; f \mid f(tc) .
	\end{alignat*}
	The traditional congruence relation maintained by e-graphs for a set of equations $E$ is given below (plus the obvious rules for transitivity, symmetry and reflexivity of $\cong$):
	\begin{mathpar}
		\inferrule*[right=start]
		{a = b \in E}
		{E \vdash a \cong b}
		\quad
		\inferrule*[right=cong. (variadic)]
		{E \vdash tc_i \cong tc'_i,\; i = 1, \ldots, k}
		{E \vdash tc_1, \ldots, tc_k \cong tc'_1, \ldots, tc'_k}
		\quad 
		\inferrule*[right=cong. (f)]
		{E \vdash tc \cong tc'}
		{E \vdash f(tc_1) \cong f(tc')}
	\end{mathpar}
	Slotted e-graphs extend these with the following:
	\begin{equation}%
		\label{eq:cong-bind}
		\inferrule*[right=cong. (bind)]
		{E \vdash tc \cong tc'}
		{E \vdash \text{bind}\; \$x\; tc \cong \text{bind}\; \$x\; tc'},
	\end{equation}
	plus the rules for equivalence of terms up to renaming of slots.
	The earlier work of Tiurin et al.~\cite{tiurin2025equivalencehypergraphsdporewriting} that introduced free semilattice enrichment to describe e-graphs using morphism terms used the join operation $+$ of a semilattice to denote the equivalence of morphism subterms.
	For example, the e-graph (b) from~\autoref{fig:e-graph-example} can be written as a term as
	\[(a \otimes ((2\fatsemi\comonoid) \otimes (1\fatsemi id))\,\fatsemi\,\sym)\fatsemi((* \otimes \counit + (id \otimes \counit \otimes id)\fatsemi<\!\!<) \otimes id)\fatsemi/~,\] where $+$ encodes the equivalence of the subterms rooted at $*$ and $<\!\!<$.
	We can see that the first three rules (plus reflexivity, symmetry, and transitivity) above correspond to idempotence, commutativity and associativity of $+$ and the fact that $+$ respect the composition\footnote{For example, given a morphism $f\,\fatsemi\,(a + b)$, we know it is equal to $f\,\fatsemi\,a + f\,\fatsemi\,b$ as per the axioms of the category.} $(\fatsemi)$.
	Moreover, we can also justify~\autoref{eq:cong-bind} from a purely categorical perspective, by the free enrichment as per~\autoref{law:distributivity}.
	We do not require the additional rules regarding renaming as string diagrams are essentially a nameless representation of terms.
	There is a distinction here because some renaming rules are not consistent with string diagrams: slotted e-graphs try to represent, e.g., terms $(a + b)$ and $(c + d)$ as a single entity (up to renaming of free variables), unlike string diagrams (unless we do closure conversion---see~\autoref{fig:extended-egraphs}).



\section{String Diagram Rewrite Theory}%
\label{sec:combinatorial-semantics}

We recall the fundamental definitions and results of string diagram rewrite theory for symmetric monoidal categories, recapitulating the correspondence between string diagram rewriting and an appropriate notion of double pushout (DPO) rewriting of certain hypergraphs, as established by Bonchi et al.~\cite{bonchi_string_2022-1,bonchi_string_2022-2}.

We write $(-)^*$ for the free monoid monad over $\catname{Set}$, extended element-wise to relations; for $R \subseteq V \times W$ and $v \in V$ we write $R(v) \subseteq W$ in functional notation.
We elide coproduct associativity and unit isomorphisms, write $\iota_j$ for the $j^{\text{th}}$ injection into $X_1 \sqcup \cdots \sqcup X_n$, $[f,g]$ for co-pairing, and $A \sqcup_{f,g} B$ for the pushout of $A \xleftarrow{f} C \xrightarrow{g} B$; we call $A,B$ the \emph{feet} and $C$ the \emph{carrier} of the cospan.
Given an ordered sequence of objects $[A_1, \ldots, A_n]$, we write $w([A_1, \ldots, A_n])$ for the object obtained by tensoring them, $A_1 \otimes \cdots \otimes A_n$, with $w([\,]) = I$ the monoidal unit; e.g.\ $w([A,B]) = A \otimes B$.

\subsection{Hypergraphs}

Hypergraphs generalise directed graphs by allowing edges to have multiple (ordered) sources and targets.
Bonchi et al.~\cite{bonchi_string_2022-2} show that a certain class of hypergraphs \emph{with interfaces} are sound and complete for categorical presentations of arbitrary symmetric monoidal theories.
They develop the use of hypergraphs as combinatorial representations of string diagrams for symmetric monoidal categories, whereby morphisms (nodes) are represented as hyperedges, and objects (wires) as vertices.
String diagram rewriting is then formalised as DPO rewriting of said hypergraphs.
We briefly present this correspondence, but the interested reader should consult the work of Bonchi et al.~\cite{bonchi_string_2022-1,bonchi_string_2022-2}.

\begin{definition}[Category of hypergraphs]\label{def:hypergraph}
	A hypergraph $\mathcal{G}$ over a monoidal signature $\Sigma = (\Sigma_{O}, \Sigma_{M})$ is a tuple $(V,E,s,t,l)$, where $V$ is a finite set of vertices, $E$ is a finite set of hyperedges, $s\colon E \to V^{*}$ is a source function, $t\colon E \to V^{*}$ is a target function,  and $l = (l_{V}\colon V \to \Sigma_{O},l_{E}\colon E \to \Sigma_{M})$ is a pair of labelling functions that assigns each hyperedge and vertex a generator and a type from monoidal signature $\Sigma$ respectively.

	The labelling function must respect the typing of the generator:
	\[
		\forall e \in E. \, l(e) = c : A \to B \implies l_{V}^{*}(s(e)) = A \text{ and } l_{V}^{*}(t(e)) = B.
	\]
	We call a hypergraph discrete if its set of hyperedges is empty.

	A \emph{hypergraph homomorphism} $\phi\colon \mathcal{F} \to \mathcal{G}$ is given by a pair of functions $\phi_V\colon V_{\mathcal{F}} \to V_{\mathcal{G}}, \phi_E\colon E_{\mathcal{F}} \to E_{\mathcal{G}}$ which respect source, target, and labelling.

	$\catname{Hyp(\Sigma)}$ denotes the category of hypergraphs and hypergraph homomorphisms over $\Sigma$.
\end{definition}

String diagrams also have dangling wires exiting at their top/bottom boundaries, corresponding to special \emph{input} and \emph{output} vertices.



\begin{definition}[Hypergraphs with interfaces]%
	\label{def:cspd}
	A hypergraph with interfaces is a cospan $n \xrightarrow{\mathsf{in}} \mathcal{G} \xleftarrow{\mathsf{out}} m$ in $\catname{Hyp}(\Sigma)$ where $n$, $m$ are discrete hypergraphs denoting input and output interfaces.
	We write $\catname{HypI}(\Sigma)$ for the full subcategory of the quotient category of cospans\footnote{This is the 1-categorical version of the bicategory of cospans: for a bicategory, its quotient category is given by the same objects but by taking \emph{isomorphism classes} of morphisms instead.} of $\catname{Hyp}(\Sigma)$ whose objects are discrete hypergraphs and morphisms are isomorphism classes of cospans of hypergraphs\footnote{By abuse of notation, we ignore the distinction between a cospan and its isomorphism class --- this is validated by replacing \enquote{pushout} (which is only determined up to unique isomorphism) by \enquote{chosen pushouts}, which any concrete setting (e.g.\ an implementation of DPO rewriting) would satisfy.}.
\end{definition}

Additional constraints on hypergraphs are needed to put them in bijective correspondence with string diagrams in a symmetric monoidal category: \emph{monogamy}, which enforces that $\mathsf{in}$ and $\mathsf{out}$ are injective and constrain the in-degree and out-degree of vertices in $\mathcal{G}$, and \emph{acyclicity}, which states that there are no loops in the hypergraph.
Together, these ensure that a hypergraph with interfaces \enquote{looks like} a string diagram, and determines a subcategory $\catname{MHypI}(\Sigma) \hookrightarrow \catname{HypI}(\Sigma)$ --- the category of \emph{monogamous directed acyclic (MDA) hypergraphs with interfaces}.
An example and a non-example of an MDA cospan can be found in~\autoref{sec:appendix:hyp}.

There is a symmetric monoidal equivalence which witnesses that MDA hypergraphs with interfaces are precisely string diagrams.
\begin{theorem}[Corollary 26 of~\cite{bonchi_string_2022-2}]\label{thm:prop-equiv}
	$\catname{S}(\Sigma) \cong \MdaCospans$
\end{theorem}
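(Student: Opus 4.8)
The plan is to exhibit an explicit strict symmetric monoidal functor $\sem{-} : \catname{S}(\Sigma) \to \MdaCospans$ and show it is an equivalence by proving it is essentially surjective, full, and faithful. On objects, send a list $A_1 \cdots A_n$ to the discrete ordered hypergraph with $n$ vertices labelled $A_1, \ldots, A_n$; this is already a bijection on objects, so essential surjectivity is immediate (and it will eventually give the stated isomorphism rather than just an equivalence). On morphisms, proceed by induction on the grammar of $\Sigma$-terms: a generator $\phi : A \to B$ maps to the cospan whose carrier has a single hyperedge labelled $\phi$ with source the $A$-vertices and target the $B$-vertices; $\id_A$ maps to the identity cospan; $\sym_{A,B}$ maps to the discrete cospan realising the corresponding permutation of vertices; $\tau_1 ; \tau_2$ maps to the cospan composite (the pushout in $\catname{Hyp}(\Sigma)$ described before the statement); and $\tau_1 \otimes \tau_2$ maps to the monoidal product of $\MdaCospans$ (the coproduct). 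One checks that each such cospan is MDA, and that the assignment respects the axioms of a symmetric monoidal category --- bifunctoriality of $\otimes$, the interchange law, naturality of $\sym$, and the hexagon identities --- so that $\sem{-}$ is well defined; this part is routine and follows from the universal properties of pushout and coproduct together with the elementary arithmetic of degrees under these operations.

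For fullness, I would take an arbitrary MDA cospan $n \xrightarrow{f} \mathcal{G} \xleftarrow{g} m$ and reconstruct a term mapping to it. Directed acyclicity yields a topological order $e_1, \ldots, e_k$ on the hyperedges of $\mathcal{G}$, and the monogamy conditions (in- and out-degree at most $1$, with the degree-$0$ vertices being exactly the images of $f$ and $g$) guarantee that $\mathcal{G}$ decomposes as a vertical composite of ``layers'', each layer being a permutation in parallel with identity wires, followed by a single generator box $e_i$ in parallel with the untouched wires. Composing and tensoring the corresponding term fragments --- inserting $\sym$'s to route wires into position --- produces a term $\tau$ with $\sem{\tau}$ isomorphic to the given cospan as an object of $\MdaCospans$.

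Faithfulness is the delicate point, and I expect it to be the main obstacle. The worry is that the hypergraph quotient (taking isomorphism classes of cospans) might identify terms that are not already equal under the SMC axioms. I would address this by introducing a layered normal form for $\Sigma$-terms --- essentially the form produced by the fullness argument, further canonicalised so that the interleaved permutations are pinned down uniquely --- proving that every SMC-equivalence class of terms has a unique such normal form, and then showing that this normal form can be read back off the isomorphism class of its image cospan, so that $\sem{-}$ is injective on equivalence classes. Equivalently, one can invoke the known presentation of $\HypI{\Sigma}$ as the free symmetric monoidal category on $\Sigma$ equipped with a special commutative Frobenius structure on the generating objects, together with a lemma that the monogamy and acyclicity restrictions cut this down to exactly the sub-prop generated \emph{without} the Frobenius (co)multiplication and (co)unit, which is freely $\catname{S}(\Sigma)$. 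The crux in either route is showing that monogamy genuinely forbids expressing the Frobenius comultiplication and counit, so that no relations beyond the SMC axioms are forced.

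Finally, a strict symmetric monoidal functor that is essentially surjective, full, and faithful is an equivalence; since here $\sem{-}$ is moreover bijective on objects, it is an isomorphism of props, giving $\catname{S}(\Sigma) \cong \MdaCospans$.
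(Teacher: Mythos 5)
This statement is not proved in the paper at all: it is recalled verbatim as Corollary~26 of the cited work of Bonchi, Gadducci, Kissinger, Soboci\'nski and Zanasi on string diagram rewrite theory, so there is no in-paper argument to compare against. Your outline does, however, match the structure of the proof in that reference: an inductively defined prop morphism $\sem{-}$, bijectivity on objects, fullness by decomposing an MDA cospan (there it is done by induction on the number of hyperedges, peeling off a $\leq$-minimal edge, which is equivalent to your layer decomposition), and faithfulness via the relationship with the Frobenius-equipped cospan category.

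The one genuine gap is that faithfulness is only gestured at, and of your two proposed routes the first is the wrong one to pursue: a ``unique layered normal form'' for SMC terms modulo the symmetric monoidal axioms is essentially as hard as the theorem itself (the interleaved permutations are not canonically pinned down without already having a graph-like invariant in hand), so that route is close to circular. The second route is the one the cited work actually takes: one first establishes $\catname{HypI}(\Sigma) \cong \catname{S}(\Sigma) + \catname{Frob}$ (cospans of hypergraphs with discrete interfaces are the coproduct of the free prop on $\Sigma$ with the Frobenius prop), and faithfulness of $\catname{S}(\Sigma) \to \MdaCospans$ then follows from faithfulness of the coproduct injection $\catname{S}(\Sigma) \to \catname{S}(\Sigma) + \catname{Frob}$, a general fact about props which is itself proved by exhibiting a retraction, not by arguing that ``monogamy forbids expressing the comultiplication.'' So the crux is not a combinatorial statement about MDA cospans but an algebraic one about prop coproducts; if you want to complete the argument you should prove (or cite) that injection's faithfulness rather than reason about degrees in the image.
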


The practical meaning of this is that two $\Sigma$-terms $f$ and $g$ are equal modulo symmetric monoidal category equations if and only if they are represented as the same hypergraph with interfaces.
\autoref{sec:appendix:interpretation} explains the functor underlying this equivalence explicitly.
Term rewriting and hence equality modulo $\mathcal{E}$ is then implemented as DPO rewriting over such hypergraphs with interfaces, which we recall.

\subsection{DPOI-Rewriting for Hypergraphs with Interfaces}
\begin{wrapfigure}{r}{0.3\textwidth}
	\centering
	\begin{adjustbox}{scale=0.8}
					\begin{tikzcd}
						{\mathcal{L}} & {i \sqcup j} & {\mathcal{R}} \\
						{\mathcal{G}} & {\mathcal{L}^{\bot}} & {\mathcal{H}} \\
						& {n \sqcup m}
						\arrow["f"', from=1-1, to=2-1]
						\arrow[from=1-2, to=1-1]
						\arrow[from=1-2, to=1-3]
						\arrow[from=1-2, to=2-2]
						\arrow[from=1-3, to=2-3]
						\arrow["\lrcorner"{anchor=center, pos=0.125, rotate=90}, draw=none, from=2-1, to=1-2]
						\arrow[from=2-2, to=2-1]
						\arrow[from=2-2, to=2-3]
						\arrow["\lrcorner"{anchor=center, pos=0.125, rotate=180}, draw=none, from=2-3, to=1-2]
						\arrow[from=3-2, to=2-1]
						\arrow[from=3-2, to=2-2]
						\arrow[from=3-2, to=2-3]
					\end{tikzcd}
	\end{adjustbox}
\end{wrapfigure}
The rewriting of hypergraphs with interfaces (and hence of string diagrams) is formalised as double-pushout rewriting with interfaces (DPOI)~\cite{bonchi_string_2022-1,bonchi_string_2022-2}, whose defining diagram is shown on the right.
A rewrite rule is a pair of discrete cospans with matching interfaces, $i \xrightarrow{} \mathcal{L} \xleftarrow{} j$ and $i \xrightarrow{} \mathcal{R} \xleftarrow{} j$, equivalently encoded as a single span $\mathcal{L} \xleftarrow{} i \sqcup j \xrightarrow{} \mathcal{R}$~\cite[Remark~3.14]{bonchi_string_2022-1}; we make this identification throughout.
Applying the rule to $\mathcal{G}$ (with interface $n \sqcup m$) proceeds in two pushout steps.
A \emph{match} $f\colon \mathcal{L} \to \mathcal{G}$ locates an occurrence of $\mathcal{L}$; the left pushout square computes the \emph{pushout complement} $\mathcal{L}^{\bot}$ --- what remains of $\mathcal{G}$ after excising the matched part, with both $n \sqcup m$ and $i \sqcup j$ contributing to the interface, the latter corresponding to newly created inputs and outputs of the \emph{hole}.
The right pushout square then glues $\mathcal{R}$ into this context, producing the rewritten graph $\mathcal{H}$, with the interface $n \sqcup m$ threaded through both steps.
\begin{definition}[DPOI rewriting]%
	\label{def:dpoi}
	Given a span of morphisms $\mathcal{G} \xleftarrow{} n \sqcup m \xrightarrow{} \mathcal{H}$ in $\catname{Hyp}(\Sigma)$, we say $\mathcal{G}$ rewrites to $\mathcal{H}$ with interface $n \sqcup m$ via rewrite rule $\mathcal{L} \xleftarrow{} i \sqcup j \xrightarrow{} \mathcal{R}$ if there exists an object $\mathcal{L}^{\bot}$ and morphisms completing the commutative diagram above such that the two marked squares are pushouts.
\end{definition}
For soundness and completeness with respect to $\Sigma$-term rewriting modulo SMC, matches are restricted to \emph{convex} matches and pushout complements to \emph{boundary complements} (see~\autoref{sec:appendix:hyp}).
Bonchi et al.~\cite{bonchi_string_2022-2} show that any two $\Sigma$-terms $f$ and $g$ are equal modulo SMC and $\mathcal{E}$ if and only if $\llbracket f \rrbracket \Rrightarrow{}_{\llbracket \mathcal{E} \rrbracket} \llbracket g \rrbracket$, where $\llbracket - \rrbracket$ is the functor from~\autoref{thm:prop-equiv} and $\llbracket \mathcal{E} \rrbracket = \{\langle \llbracket l \rrbracket, \llbracket r \rrbracket \rangle, \langle \llbracket r \rrbracket, \llbracket l \rrbracket \rangle \mid l=r \in \mathcal{E} \}$.
Next, we build a combinatorial representation for closed $\Sigma^{+}$-terms to obtain a notion of DPOI rewriting that also subsumes equality modulo the laws of closed symmetric monoidal $\catname{SLat}$-categories.

\section{Closed E-Hypergraphs}%
\label{sec:closed-e-hyp}
Tiurin et al.~\cite{tiurin2025equivalencehypergraphsdporewriting} introduced \emph{e-hypergraphs} --- a combinatorial representation of morphism terms for free symmetric monoidal $\catname{SLat}$-categories with a modified notion of DPOI rewriting to make it sound and complete for equalities between morphism terms.
The $\catname{SLat}$-enrichment structure was encoded using hierarchical hyperedges in a hypergraph following the approach of a well-established body of literature on hierarchical hypergraphs~\cite{plump:hierarchical-graphs, montanari:gs-lambda, palacz:hierarchical-transform, Gaducci:hierarchical-graphs, Ghica:hierarchical, fscd}.
The last work in loc.\ cit.\ introduced a combinatorial representation of closed $\Sigma$-terms (without enrichment).
Notably, the laws of a $\catname{SLat}$-enrichment used by Tiurin et al.~\cite{tiurin2025equivalencehypergraphsdporewriting} and the laws of~\autoref{def:closed}~\cite{fscd} were not absorbed by the combinatorial representation and instead were implemented on top as structural DPOI rewrites.

We follow an analogous approach by combining combinatorial representations of Alvarez-Picallo et al.~\cite{fscd} and Tiurin et al.~\cite{tiurin2025equivalencehypergraphsdporewriting}.
We next extend the definition of e-hypergraphs for symmetric monoidal $\catname{SLat}$-categories~\cite{tiurin2025equivalencehypergraphsdporewriting} to accommodate new structure brought by closure.

\begin{definition}
	Over a closed symmetric monoidal signature $\Sigma = (\Sigma_{O}, \Sigma_{M})$, we define a closed e-hypergraph $\mathcal{G}$ as a tuple 
	\[(V,E,s,t, l_{V}, l_{E}, \textcolor{e-color}{<}, \textcolor{closed-color}{<},\consistency), \text{ where}\]
	\begin{itemize}
		\item $V$ is a set of vertices.
		\item $E = \colorbox{yellow}{E} \cup \textcolor{e-color}{E} \cup \textcolor{closed-color}{E}$ is a set of hyperedges that is formed of three disjoint sets of hyperedges defined below.
		\item $s,t\colon E \to V^{*}$ are source and target functions.
		\item $l_{E}\colon E \to \Sigma_{M} \sqcup 1$, $l_{V}\colon V \to \Sigma_{O} \cup \text{obj}_{\multimap}$ are label functions, where $\text{obj}_{\multimap}$ consists of objects $A \multimap B$ for $A,B \in obj_{\Sigma_{O}}$.
		\item $\textcolor{e-color}{<} \subseteq \textcolor{e-color}{E} \times (V \sqcup E)$ and $\textcolor{closed-color}{<} \subseteq \textcolor{closed-color}{E} \times (V \sqcup E)$ with $\textcolor{e-color}{E}, \textcolor{closed-color}{E} \subset E$ are disjoint partial orders that are defined as a transitive closure of corresponding immediate predecessor relations that we denote as $\textcolor{e-color}{\;<^{\mu}\;}$ and $\textcolor{closed-color}{\;<^{\mu}\;}$.
		      For $x \in V \sqcup E$, the set of predecessors of $x$ is denoted as $[x) \coloneq \{x^\prime ~|~ \exists y_{1} = x^\prime, \ldots, y_{n} = x,  y_{i} \textcolor{e-color}{\;<^{\mu}\;} y_{i + 1} \text{ or } y_{i} \textcolor{closed-color}{\;<^{\mu}\;} y_{i + 1}\; \}$.
		      We call edges $e$ such that $l(e) = \bot$ hierarchical and edges $e$ and vertices $v$ such that $[e) = \varnothing$ and $[v) = \varnothing$ top-level.
		      Each of these relations should satisfy the following:
		      \begin{enumerate}
			      \item each set of predecessors consists exclusively of hierarchical edges;
			      \item each $x$ has at most one immediate predecessor;
			      \item edges $e$ such that there is no $x$ such that $e \textcolor{e-color}{\;<^{\mu}\;} x$ or $e \textcolor{closed-color}{\;<^{\mu}\;} x$ are labelled;
			      \item the relations are closed under connectivity, \emph{i.e.}, if $v \in s(e)$ then $e^\prime \textcolor{e-color}{\;<^{\mu}\;} e$ if and only if $e^\prime \textcolor{e-color}{\;<^{\mu}\;} v$, similarly for $v \in t(e)$ and $\textcolor{closed-color}{\;<^{\mu}\;}$.
		      \end{enumerate}
		\item $\consistency$ is a consistency relation which is given by the union of a family of equivalence relations $\consistency_p$ on each set $\{x \in V \sqcup E ~|~ p \textcolor{e-color}{<^\mu} x\}$ of elements which share the same parent where each relation is also closed under connectivity, i.e.\ if $v \in s(e)$ or $v \in t(e)$ such that $p \textcolor{e-color}{\;<^{\mu}\;}(v)$ and $p \textcolor{e-color}{\;<^{\mu}\;}(e)$ then $v \consistency_{p} e$.
		      We require that $\consistency_{p} \not = (E_{p} \sqcup V_{p}) \times (E_{p} \sqcup V_{p})$ where $V_{p} = \{ v ~ | ~ p \textcolor{e-color}{\;<^{\mu}\;} v\}$ and $E_{p} = \{ e ~ | ~ p \textcolor{e-color}{\;<^{\mu}\;} e\}$.
	\end{itemize}
	$\textcolor{e-color}{E}$ and $\textcolor{closed-color}{E}$ are then induced by the corresponding relations and consist of exclusively hierarchical edges, so $\colorbox{yellow}{E} = E \setminus \textcolor{e-color}{E} \cup \textcolor{closed-color}{E}$.
\end{definition}

Intuitively, edges from $\textcolor{e-color}{E}$ encode equivalence classes, i.e.\ e-classes, while edges from $\textcolor{closed-color}{E}$ encode $\lambda$-abstraction boxes.
Also note the typing on the $l_{V}$ function.
We work with freely generated closed $\catname{SLat}$-categories, and to avoid unnecessary complications we limit the labelling of vertices to such a type.
This prohibits the labelling of vertices by e.g.\ $A \otimes B$, but allows them to be labelled by $A \otimes B \multimap C \otimes D$.
$\textcolor{closed-color}{<}$ defines closed monoidal structure as defined by Alvarez-Picallo et al.~\cite{fscd} and $(\textcolor{e-color}{<},\consistency)$ defines $\catname{SLat}$ structure as per Tiurin et al.~\cite{tiurin2025equivalencehypergraphsdporewriting}.

Consider an example in the middle part of~\autoref{fig:e-cospan-example}.
The closed e-hypergraph there is given by $V = \{v_{1}, \ldots w_{10}\}$, $E = \{e_{1}, \ldots, e_{7}\}$, $\textcolor{e-color}{E} = \{e_{1}\}$, $\textcolor{closed-color}{E} = \{e_{2}\}$, $l_{E} = {e_{1} \mapsto \bot, e_{2} \mapsto \bot, e_{3} \mapsto g, \ldots, e_{7} \mapsto f}$; the elements of the involved relations are shown in~\autoref{fig:e-cospan-example-rels},
and, in particular, we have $[e_{5}) = \{e_2, e_1\}$.
$\consistency_{e_{1}}$ defines a partition on immediate successors of $e_{1}$ which we delimit by a dashed line.
We depict edges in $\textcolor{e-color}{E}$ with a dashed box, and edges in $\textcolor{closed-color}{E}$ with a round box.
\begin{figure}
	\begin{subfigure}{0.45\textwidth}
	\[
		\adjustbox{scale=0.6}{
			\tikzfig{figures/closed_iso_example_2_2}
		}
	\]
	\subcaption{Example of a cospan of closed e-hypergraphs.}
	\label{fig:e-cospan-example}
	\end{subfigure}
	\hfill
	\begin{subfigure}{0.45\textwidth}
	\[
	\begin{array}{cccc}
		v_{4} \textcolor{e-color}{\;<^{\mu}\;} e_{1}  & v_{7} \textcolor{closed-color}{\;<^{\mu}\;} e_{2} & v_{4} \consistency e_{2}   & v_{4} \not \consistency v_{10} \\
		e_{3} \textcolor{e-color}{\;<^{\mu}\;} e_{1}  & v_{8} \textcolor{closed-color}{\;<^{\mu}\;} e_{2} & e_{2} \consistency u_{1}   & e_{2} \not \consistency e_{6}  \\
		e_{2} \textcolor{e-color}{\;<^{\mu}\;} e_{1}  & e_{5} \textcolor{closed-color}{\;<^{\mu}\;} e_{2} & e_{2} \consistency e_{4}   & e_{4} \not \consistency e_{7}  \\
		\ldots                                        & \ldots                                            & \ldots                     & \ldots                         \\
		v_{12} \textcolor{e-color}{\;<^{\mu}\;} e_{1} & w_{9} \textcolor{closed-color}{\;<^{\mu}\;} e_{2} & v_{11} \consistency v_{12} & w_{6} \not \consistency w_{10}
	\end{array}
	\]
	\subcaption{Corresponding relations of the closed e-hypergraph in the middle.}
	\label{fig:e-cospan-example-rels}
\end{subfigure}
\caption{}
\end{figure}
\begin{definition}[Closed e-hypergraph homomorphism]
	A homomorphism $\phi\colon \mathcal{F}\to\mathcal{G}$ between closed e-hypergraphs $\mathcal{F},\mathcal{G}$ is a pair of functions $\phi_V\colon  V_{\mathcal{F}} \to V_{\mathcal{G}}, \phi_E\colon E_{\mathcal{F}} \to E_{\mathcal{G}}$ such that:
	\begin{enumerate}
		\item $\phi$ is a hypergraph homomorphism;
		\item $\phi_{E}(\textcolor{e-color}{E_{\mathcal{F}}}) \subseteq \textcolor{e-color}{E_{\mathcal{G}}}$ and $\phi_{E}(\textcolor{closed-color}{E_{\mathcal{F}}}) \subseteq \textcolor{closed-color}{E_{\mathcal{G}}}$;
		\item for $v \in V_{\mathcal{F}}$ and $e \in E_{\mathcal{F}}$
		      \begin{equation*}
			      \text{if } e \textcolor{e-color}{\;<_{\mathcal{F}}^{\mu}\;} v \text{ then } \phi(e) \textcolor{e-color}{\;<_{\mathcal{G}}^{\mu}\;} \phi(v),
			      \qquad
			      \text{if } e \textcolor{closed-color}{\;<_{\mathcal{F}}^{\mu}\;} v \text{ then } \phi(e) \textcolor{closed-color}{\;<_{\mathcal{G}}^{\mu}\;} \phi(v),
		      \end{equation*}
		      and for $e_1, e_2 \in E_{\mathcal{F}}$
		      \begin{equation*}
			      \text{if } e_1 \textcolor{e-color}{\;<_{\mathcal{F}}^{\mu}\;} e_2 \text{ then } \phi(e_1) \textcolor{e-color}{\;<_{\mathcal{G}}^{\mu}\;} \phi(e_2),
			      \qquad
			      \text{if } e_1 \textcolor{closed-color}{\;<_{\mathcal{F}}^{\mu}\;} e_2 \text{ then } \phi(e_1) \textcolor{closed-color}{\;<_{\mathcal{G}}^{\mu}\;} \phi(e_2);
		      \end{equation*}
		\item for all $x_1, x_2 \in \mathcal{F}$,  $x_1 ~\consistency_{\mathcal{F}}~ x_2$ implies $\phi(x_1) ~\consistency_{\mathcal{G}}~ \phi(x_2)$.
	\end{enumerate}
\end{definition}
The conditions on closed e-hypergraph homomorphisms require preserving immediate predecessors and the consistency relation.
\begin{definition}[Category of closed e-hypergraphs]
	The category of closed e-hypergraphs $\catname{CEHyp(\Sigma)}$ has closed e-hypergraphs as objects and closed e-hypergraph homomorphisms as morphisms.
\end{definition}

The category of closed e-hypergraphs has coproducts, given by the disjoint union of closed e-hypergraphs, and an initial object given by the empty closed e-hypergraph.
A concrete description of the pushout of two morphisms in this category is given in~\autoref{sec:appendix:pushout}.
Generally,  the pushout of two closed e-hypergraph homomorphisms need not exist,  but it does when certain technical conditions are satisfied; in particular, a pushout for the composition of two cospans with discrete feet always exists.

As in the case of $\catname{S}(\Sigma)$ and $\MdaCospans$, closed e-hypergraphs require interfaces in order to model string diagrams with dedicated inputs and outputs.
Because string diagrams for closed  symmetric monoidal $\catname{SLat}$-categories are nested, containing substring diagrams inside e-class boxes and $\lambda$-abstraction boxes, we need to account for the interfaces in these nested contexts as well.
Nested inputs and outputs of a string diagram do not participate in composition of string diagrams, hence they need to be distinguished from the ones which do.
We thus propose \emph{extended} cospans of discrete closed e-hypergraphs and use the notation $n \setminus m$ to denote the discrete closed e-hypergraph with $n \setminus m$ vertices, in particular
with the vertices of $m$ removed from $n$ when vertices of $m$ is a sub closed e-hypergraph of $n$.

\begin{definition}[Category of closed e-hypergraphs with interfaces]
	The category of closed e-hypergraphs with extended interfaces $\Ecospans$ has discrete \emph{ordered} closed e-hypergraphs as objects,  with Hom sets $\Ecospans(n,m)$ consisting of isomorphism classes (\autoref{def:cospan-isomorphism}) of extended cospans, defined as:
	\[
		n \xrightarrow{f_{ext}} n^\prime \xrightarrow{f_{int}} \mathcal{G} \xleftarrow{g_{int}} m^\prime \xleftarrow{g_{ext}} m,
	\]
	where $\mathcal{G}$ is a closed e-hypergraph,  and $n, n^\prime, m, m^\prime$ are discrete ordered closed e-hypergraphs,  $f_{ext},g_{ext}$ are monomorphisms in $\catname{CEHyp(\Sigma)}$,  and the image of $f_{ext} \fatsemi f_{int}$ and of $g_{ext} \fatsemi g_{int}$ consist exclusively of top-level vertices,  and the vertices in the strictly internal interface (defined below) are not top-level.
	Write $\mathcal G$ to denote the extended cospan when it is clear from context $\mathcal G$ is equipped with extended interfaces.
\end{definition}
We call $n$  \emph{external input interfaces}, $n^\prime$ \emph{internal input interfaces},
and $n^\prime \setminus f_{ext}(n)$  the \emph{strictly internal input interfaces}.
We do analogously for the \emph{output interfaces},  with respect to $m$,  $m^\prime$ and $m^\prime \setminus g_{ext}(m)$.
We conflate $f_{ext}$ with $f_{ext} \fatsemi f_{int}$ when it is clear from context,  and also conflate $n$ and $m$ with their images in $n^\prime$ and $m^\prime$,  and likewise $n, m, n^\prime, m^\prime$ with their images in $\mathcal{G}$.
Given an edge $e \in E_\mathcal{G}$ such that $l(e) = \bot$,  we call the \emph{inputs of $e$} the intersection of the strictly internal input interface of $\mathcal{G}$ with the immediate successors of $e$,  and analogously for the \emph{outputs of $e$}.
Graphically, we depict such extended cospans as in~\autoref{fig:e-cospan-example}: feet of the cospan are depicted within shaded regions.
All non-black vertices form strictly internal interfaces.
We follow the same orientation that we chose for string diagrams --- the input interface of the cospan is at the bottom and the output interface is at the top.

\begin{definition}[Isomorphic cospans]\label{def:cospan-isomorphism}
	Consider a relation
	\[
		R = \bigl\{ x R y \text{ if } f_{\text{int}}(x) \consistency f_{\text{int}}(y) \;\text{or}\; \exists z \;\text{s.t.}\; z \textcolor{closed-color}{\;<^{\mu}\;} f_{\text{int}}(x) \text{ and } z \textcolor{closed-color}{\;<^{\mu}\;} f_{\text{int}}(y)\bigr\}
	\]
	for $x, y \in n^\prime$, with $S$ its reflexive closure.
	The latter partitions $n^\prime$ into non-empty subsets $\{p_{j}\}^{k}_{j=1}$.
	We get an analogous partition for $m^\prime$. 
	In~\autoref{fig:e-cospan-example}, colours denote the partitions for input and output internal interfaces.
	Two extended cospans are isomorphic if there exist isomorphisms $\alpha$, $\beta$ and $\gamma$ making the following diagram commute and such that $\alpha$ and $\gamma$ preserve order within each $p_j$.
	\[
		\scalebox{0.8}{
			\tikzfig{figures/isomorphic_e_cospans}
		}
	\]
\end{definition}
Intuitively, such notion of isomorphism makes it possible to freely rearrange interfaces for disjoint components.
For example, in the input interface of cospan in~\autoref{fig:e-cospan-example}, we can swap
\begin{tikzpicture}[font=\small,scale=0.6]
	\begin{pgfonlayer}{nodelayer}
		\node [style=red node, label={above:$v_{4}$}] (0) at (-0.5, 3.5) {};
		\node [style=red node, label={above:$v_5$}] (1) at (0, 3.5) {};
		\node [style=red node, label={above:$v_6$}] (2) at (0.5, 3.5) {};
	\end{pgfonlayer}
\end{tikzpicture}
with
\begin{tikzpicture}[font=\small,scale=0.6]
	\begin{pgfonlayer}{nodelayer}
		\node [style=blue node, label={above:$v_{7}$}] (0) at (-0.5, 3.5) {};
		\node [style=blue node, label={above:$v_8$}] (1) at (0, 3.5) {};
		\node [style=blue node, label={above:$v_9$}] (2) at (0.5, 3.5) {};
	\end{pgfonlayer}
\end{tikzpicture}
but we can not swap
\begin{tikzpicture}[font=\small,scale=0.6]
	\begin{pgfonlayer}{nodelayer}
		\node [style=red node, label={above:$v_{4}$}] (0) at (-0.5, 3.5) {};
		\node [style=red node, label={above:$v_5$}] (1) at (0, 3.5) {};
	\end{pgfonlayer}
\end{tikzpicture}
with each other as it would result in a non-isomorphic cospan.

\begin{wrapfigure}{r}{0.35\textwidth}
	\centering
	\trimbox{0cm 0cm 0cm 0.75cm}{
		\adjustbox{scale=0.9,center}
		{\tikzfig{figures/pushout_e_cospans}}
	}
\end{wrapfigure}
Composition of two morphisms
$
	n \xrightarrow{f_{ext}} n^\prime \xrightarrow{f_{int}} \mathcal{F} \xleftarrow{f^\prime_{int}} m^\prime \xleftarrow{f^\prime_{ext}} m
$ and $
	m \xrightarrow{g_{ext}} m^\prime\!^\prime \xrightarrow{g_{int}} \mathcal{G} \xleftarrow{g^\prime_{int}} k^\prime \xleftarrow{g^\prime_{ext}} k
$ is computed in two stages.
First, $\mathcal{H}$ is computed as the result of the pushout square shown on the right;
then,  the result of composition is defined as follows:
\[
	n \xrightarrow{f_{ext} \fatsemi \iota_1} n^\prime \sqcup (m^{\prime\prime} \setminus m) \xrightarrow{h_{1}} \mathcal{H} \xleftarrow{h_2} k^\prime \sqcup (m^\prime \setminus m) \xleftarrow{g^\prime_{ext} \fatsemi \iota_1} k
\]
where $h_i$ are defined as follows,  using $|$ to denote the restriction of a function on a discrete closed e-hypergraph.
\begin{align*}
	h_1 = [ f_{int} \fatsemi p_1, ~(g_{int} \fatsemi p_2)|_{m^{\prime\prime} \setminus m} ] \; h_2 = [ g^\prime_{int} \fatsemi p_2, ~(f^\prime_{int} \fatsemi p_1)|_{m^\prime \setminus m} ]
\end{align*}
The identity of composition is given by the obvious extended cospan of identities.
$\Ecospans$ inherits a symmetric monoidal structure from the coproduct (and initial object) structure of $\catname{CEHyp({\Sigma}})$,  analogously to~\autoref{def:cspd}.

For the same reasons as in the case of $\catname{S}(\Sigma)$ we restrict cospans to MDA cospans.
We also need to enforce typing constraints on hierarchical edges so that they represent well-typed closed $\Sigma^{+}$-terms.
\begin{definition}
	We call a cospan
	$
		n \xrightarrow{f_{\text{ext}}} n^\prime \xrightarrow{f_{\text{int}}} \mathcal{G} \xleftarrow{g^\prime_{\text{int}}} m^\prime \xleftarrow{g_{\text{ext}}} m
	$
	monogamous directed acyclic if
	\begin{itemize}
		\item underlying hypergraph of $\mathcal{G}$ is directed acyclic;
		\item in- and out- degrees of every vertex is at most 1;
		\item $f_{\text{int}}$ and $g_{\text{int}}$ are monomorphisms;
		\item vertices with in-degree (respectively, out-degree) of 0 are precisely the image of $f_{\text{int}}$ (respectively, $g_{\text{int}}$).
	\end{itemize}
	MDA cospans and discrete closed e-hypergraphs form a category $\MdaEcospans$.
\end{definition}

\begin{definition}
	We call a monogamous cospan
	\[
		n \xrightarrow{f_{\text{ext}}} n^\prime \xrightarrow{f_{\text{int}}} \mathcal{G} \xleftarrow{g^\prime_{\text{int}}} m^\prime \xleftarrow{g_{\text{ext}}} m
	\]
	well-typed if all hierarchical edges of $\mathcal{G}$ are well-typed:
	\begin{itemize}
		\item for each $e \in \textcolor{e-color}{E}$ consider sets $I$ and $O$ of its input and output vertices partitioned according to $\consistency_{e}$;
		      $e$ is well-typed if for each element $S$ of the partition of $I$ $w(l_{V}^{*}(S)) = w(l^{*}_{V}(s(e)))$ and similarly for $O$ and $t(e)$;
		\item for each $e \in \textcolor{closed-color}{E}$ consider sets $I$ and $O$ of its input and output vertices;
		      $e$ is well-typed if there exists an object $B \in \text{obj}_{\Sigma_{O}}$ such that $w(l^{*}_{V}(s(e))) \otimes B = w(l^{*}_{V}(I))$ and $B \multimap w(l^{*}_{V}(O)) = w(l^{*}_{V}(t(e)))$.
	\end{itemize}

	We denote the category of well typed MDA cospans as $\WellTypedMdaEcospans$.
\end{definition}

Again, considering~\autoref{fig:e-cospan-example}, to make edge $e_{1}$ well-typed, it is necessary that $l_{V}^{*}([v_{4},v_{5},v_{6}]) = l_{V}^{*}([v_{10},v_{11},v_{12}]) = l_{V}^{*}([v_1,v_2,v_3])$ and $l_{V}^{*}([w_{4},w_{5},w_{6}]) = l_{V}^{*}([w_{10},v_{12},v_{11}]) = l_{V}^{*}([w_1,w_2,w_3])$ (note the change of ordering in the output of $e_{1}$).
To make edge $e_{2}$ well-typed, it is necessary that $l_{V}^{*}([v_4,v_5,v_9]) = l_{V}^{*}([v_7,v_8,v_9])$ and $l_{V}([u_1]) = l_{V}([v_9]) \multimap l_{V}^{*}([v_8,v_7,w_9])$.

\section{DPOI-Rewriting for Closed E-Hypergraphs}

Because extended cospans have a more general notion of interface,  including \emph{internal} vertices,  DPOI rewriting as presented in~\autoref{sec:combinatorial-semantics} needs some adjustments, which we define in this section.
The construction in this section closely follows the one introduced by Tiurin et al.~\cite{tiurin2025equivalencehypergraphsdporewriting} with some adjustments to account for $\textcolor{closed-color}{<}$.
We do not expect internal interfaces to be preserved during rewriting: e.g.\ when the semilattice equations are considered as rewrites.
Thus,  we wish a rewrite rule to be a pair of \emph{extended} cospans of closed e-hypergraphs with matching \emph{external} (but not necessarily internal) interfaces:
\[
	n \xrightarrow{} n^\prime \xrightarrow{} \mathcal{L} \xleftarrow{} m^\prime \xleftarrow{} m,
	\qquad
	n \xrightarrow{} n^{\prime\prime} \xrightarrow{} \mathcal{R} \xleftarrow{} m^{\prime\prime} \xleftarrow{} m.
\]
Analogously to~\autoref{sec:combinatorial-semantics}, observing that the cospan
$
	0 \xrightarrow{} 0 \xrightarrow{} \mathcal{L} \xleftarrow{} n^{\prime} \sqcup m^{\prime} \xleftarrow{} n \sqcup m$ expresses the same data as $
	n \xrightarrow{} n^\prime \xrightarrow{} \mathcal{L} \xleftarrow{} m^\prime \xleftarrow{} m
$
allows us to encode rewrite rules as extended cospans of the form
$
	\mathcal{L} \xleftarrow{} n^\prime \sqcup m^\prime \xleftarrow{} n \sqcup m \xrightarrow{} n^{\prime\prime} \sqcup m^{\prime\prime} \xrightarrow{} \mathcal{R},
$
which fits into the DPO formalism. We implicitly make this identification from hereon.

Before introducing \emph{extended} DPOI rewriting,  the definition of boundary complement must guarantee that rewriting yields a monogamous directed acyclic closed e-hypergraph.
First,  we introduce the  ancillary notion of a \emph{down-closed} graph:
\begin{definition}[Convex down-closed subgraph]
	We call a sub closed e-hypergraph $\mathcal G $ of $\mathcal H$ down-closed if for all $e \in E_{\mathcal{G}}$,   all children of $e$ are also in $\mathcal{G}$.
	A sub e-hypergraph $\mathcal{G}$ is convex, if for all $v,w \in V_{\mathcal{G}}$,  any directed path from $v$ to $w$ in $\mathcal{H}$ consists entirely of vertices and edges in $\mathcal{G}$.
\end{definition}

\begin{definition}[Extended boundary complement]
	\label{def:boundary_new}
	For MDA cospans
	$
		i \xrightarrow{} i^\prime \xrightarrow{} \mathcal{L} \xleftarrow{} j^\prime \xleftarrow{} j
	$ and $
		n \xrightarrow{} n^\prime \xrightarrow{} \mathcal{G} \xleftarrow{} k^\prime \xleftarrow{} k
	$ and mono $m : \mathcal{L} \to \mathcal{G}$ in $\catname{CEHyp(\Sigma)}$, a pushout complement $i \sqcup j \to \mathcal{L}^{\bot} \to \mathcal{G}$ in the square of
	\[
		\scalebox{0.75}{

			\tikzfig{figures/DPOI_pushout_complement}
		}
	\]
	is a \emph{boundary complement} if:
	\begin{enumerate}
		\item $m(\mathcal L)$ is a convex down-closed closed e-hypergraph;
		\item $[c_1,c_2]$ is a monomorphism;
		\item for all $v,w$ in $\mathcal{G}$ in the image of $i \sqcup j$,  $v$ and $w$ share the same sequence of predecessors, and either $v,w$ are top-level, or $v \consistency w$, or there exists $e \in \textcolor{closed-color}{E}$ s.t. $v,w \textcolor{closed-color}{\;<^{\mu}\;} e$;
		\item for all $v,w$ in $\mathcal{L}^{\bot}$ in the image of $i \sqcup j$,  $v$ and $w$ share the same sequence of predecessors, and either $v,w$ are top-level or $v \consistency w$, or there exists $e \in \textcolor{closed-color}{E}$ s.t. $v,w \textcolor{closed-color}{\;<^{\mu}\;} e$;
		\item there exist $d_1 : n \to \mathcal{L}^\bot$ and $d_2 : k \to \mathcal{L}^\bot$ making the above triangle commute; and
		\item if the image of $\mathcal{L}$'s external interfaces under $m$ consists exclusively of top-level vertices of $\mathcal{G}$ then there exists a \emph{well-typed} extended MDA cospan
		      \[
			      \hspace{-2em}n \sqcup j \xrightarrow{f_1 \sqcup id_{j}} n^\prime \setminus (i^\prime \setminus i) \sqcup j \xrightarrow{[g_1,c_2]} \mathcal{L}^{\bot} \xleftarrow{[g_2,c_1]} k^\prime \setminus (j^\prime \setminus j) \sqcup i \xleftarrow{f_2 \sqcup id_{i}} k \sqcup i
		      \]
		\item if the image of $\mathcal{L}$'s external interfaces under $m$ consists exclusively of not top-level vertices of $\mathcal{G}$ then there exists a \emph{not necessarily well-typed} extended MDA cospan
		      \[
			      n \xrightarrow{f_1} n^\prime \setminus (i^\prime \setminus i) \sqcup j \xrightarrow{[g_1,c_2]} \mathcal{L}^{\bot} \xleftarrow{[g_2,c_1]} k^\prime \setminus (j^\prime \setminus j) \sqcup i \xleftarrow{f_2} k
		      \]
	\end{enumerate}
	where $f_i$ and $g_i$ are defined as follows.
	The strictly internal interface $i^\prime \setminus i$ of $\mathcal L$ is mapped to the internal interface $n^\prime$ of $\mathcal G$,  since $\mathcal L$ is a down-closed subgraph of $\mathcal G$,  inducing an identification of $i^\prime \setminus i$ in  $n^\prime$.
	Then map $f_1$ is given by $g_{ext}$ with its codomain restricted to $n^\prime \setminus (i^\prime \setminus i)$.\footnote{Noting the image of $g_{ext}$ indeed lies within $n^\prime \setminus (i^\prime \setminus i)$}  The map $g_1$ is derived from the restriction of $g_{int}$ to type $n^\prime \setminus (i^\prime \setminus i) \to \mathcal G$ by further observing that $\mathcal L^\bot$ can be identified within $\mathcal G$ --- and in particular has the internal interface $n^\prime$ of $\mathcal G$ minus $i^\prime \setminus i$.  The maps $f_2$ and $g_2$ are defined similarly.
\end{definition}
Intuitively, the above definition means that when the occurrence of $\mathcal{L}$ within $\mathcal{G}$ is top-level then $\mathcal{L}$'s external interfaces become the part of $\mathcal{L}^{\bot}$'s external interfaces and when the occurrence is nested, those interfaces become the part of $\mathcal{L}^{\bot}$'s strict internal interfaces.
In both cases when removing the occurrence of $\mathcal{L}$ from $\mathcal{G}$, $\mathcal{L}$'s strict internal interfaces are removed from strict internal interfaces of $\mathcal{G}$ to form internal interfaces of the corresponding MDA cospans above.

The conditions (3) and (4) require that top-level connected components in $\mathcal{L}$ do not end up inside edges of different types within $\mathcal{G}$ and $\mathcal{L}^{\bot}$ under $m$ and $[c_1,c_2]$.
Conditions (2) and (7) are the original boundary complement conditions (forgetting the extra interfaces) introduced by Bonchi et al.~\cite{bonchi_string_2022-2}.
Intuitively these conditions ensure that inputs (respectively, outputs) of $\mathcal{R}$ are glued to the outputs (respectively, inputs) of $\mathcal{L}^{\bot}$.
Convexity and down-closed-ness conditions ensures that the subgraph corresponds to a proper subterm modulo SMC.
\begin{proposition}%
	\label{prop:boundary_unique}
	The boundary complement of~\autoref{def:boundary_new} when it exists is unique.
\end{proposition}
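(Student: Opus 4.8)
The plan is to derive the statement from the uniqueness of pushout complements along monomorphisms in the underlying category of plain hypergraphs, and then to argue that every remaining piece of data carried by an extended boundary complement---the hierarchy orders $\textcolor{e-color}{<}$ and $\textcolor{closed-color}{<}$, the consistency relation $\consistency$, and the external and internal interface maps---is forced by that underlying data together with conditions (1)--(7) of Definition~\ref{def:boundary_new}. Since both $\mathcal{L}^{\bot}$ and extended cospans are only ever considered up to isomorphism, ``unique'' here means that any two boundary complements for a fixed match are related by an isomorphism of extended cospans that commutes with the maps to $\mathcal{G}$ and from $i+j$.

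First I would pass to underlying hypergraphs. A boundary complement yields, after forgetting all decorations, a pushout complement $i+j \to (\mathcal{L}^{\bot})' \to \mathcal{G}'$ of $i+j \to \mathcal{L}' \xrightarrow{m'} \mathcal{G}'$ in $\catname{Hyp}(\Sigma)$. Because the match is a convex mono, $m'$ is mono, and in $\catname{Hyp}(\Sigma)$ (which is adhesive) pushout complements along monos are unique up to a unique compatible isomorphism, exactly as used in~\cite{bonchi_string_2022-1, bonchi_string_2022-2}. Hence the vertex set, edge set, source and target maps, and edge and vertex labels of $\mathcal{L}^{\bot}$, together with $[c_1,c_2]$ and the map $\mathcal{L}^{\bot}\to\mathcal{G}$, are fixed up to a unique isomorphism $\psi$.

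Next I would recover the hierarchical and consistency structure. Condition (2) gives that $[c_1,c_2]$ is mono, and using condition (1) (the match hits a convex, down-closed subgraph) one checks the induced map $\mathcal{L}^{\bot}\to\mathcal{G}$ is mono. Since e-hypergraph homomorphisms preserve $\textcolor{e-color}{<^{\mu}}$, $\textcolor{closed-color}{<^{\mu}}$ and $\consistency$, the relations on $\mathcal{L}^{\bot}$ are contained in the restrictions of those of $\mathcal{G}$ along this mono; conversely, reading off the concrete pushout construction for e-hypergraphs (Appendix~\ref{sec:appendix:pushout}) on the boundary square shows that any predecessor pair or consistency pair of $\mathcal{G}$ with both endpoints in the (down-closed, convex) image of $\mathcal{L}^{\bot}$ must already be present in $\mathcal{L}^{\bot}$, since otherwise the square would fail to be a pushout in $\catname{EHyp(\Sigma)}$. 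So $\textcolor{e-color}{<}$, $\textcolor{closed-color}{<}$ and $\consistency$ on $\mathcal{L}^{\bot}$ are exactly these restrictions and are uniquely determined, and $\psi$ automatically respects them.

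Finally I would pin down the extended interface. Conditions (6)--(7) determine whether the external interface of the resulting MDA cospan is $n$ and $k$ or is enlarged by the external interface of $\mathcal{L}$; either way it is fixed. The internal interfaces are $n'\setminus(i'\setminus i)$ and $k'\setminus(j'\setminus j)$, which depend only on $\mathcal{G}$'s interfaces and the match, and $f_1,f_2,g_1,g_2$ are the canonical (co)restrictions of the interface maps of $\mathcal{G}$ spelled out immediately after Definition~\ref{def:boundary_new}; the maps $d_1,d_2$ of condition (5) are unique because they factor $\mathcal{G}$'s interface maps through the mono $\mathcal{L}^{\bot}\to\mathcal{G}$. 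Consequently $\psi$ respects all of this data, so it is an isomorphism of extended cospans, which is the asserted uniqueness. The step I expect to be the real obstacle is the third one: making fully rigorous, via the appendix's pushout description, that no spurious extra consistency identifications or extra immediate-predecessor pairs can occur in $\mathcal{L}^{\bot}$ while still completing the pushout---this is precisely where the non-adhesive features of $\catname{EHyp(\Sigma)}$ (closure under connectivity, non-degeneracy of each $\consistency_{p}$) and the boundary conditions (3)--(4) have to be invoked with care.
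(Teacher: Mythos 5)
The paper states this proposition without proof, so there is no author argument to compare yours against; I can only assess your outline on its own terms. Its architecture --- uniqueness of the underlying complement in $\catname{Hyp}(\Sigma)$, followed by an argument that the hierarchy orders, the consistency relation and the interface data are forced --- is the natural one, and is presumably what the authors intend given that the construction is advertised as closely following~\cite{ghica2024equivalencehypergraphsegraphsmonoidal}. But two steps do not go through as written. First, the appeal to adhesivity is misapplied: the general result that pushout complements in an adhesive category are unique requires the \emph{rule leg} $i+j \to \mathcal{L}$ to be mono, not the match $m$, and for MDA cospans $[a_1,a_2]\colon i+j\to\mathcal{L}$ need not be mono (a vertex of in- and out-degree $0$ lies in the image of both interface maps). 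This is exactly why~\cite{bonchi_string_2022-1,bonchi_string_2022-2} prove uniqueness of \emph{boundary} complements as a separate statement, using that $[c_1,c_2]$ is mono together with the monogamy conditions; you should invoke that result (or reproduce its argument) rather than the bare adhesivity lemma. Relatedly, $\mathcal{L}^{\bot}\to\mathcal{G}$ need not be mono for the same reason, so ``restriction along this mono'' should be ``pullback along this map''.

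Second, the step you yourself flag as the obstacle is indeed where the content lies, and your closing sentence points the worry in the wrong direction. Since e-hypergraph homomorphisms \emph{preserve} $\textcolor{e-color}{<^{\mu}}$, $\textcolor{closed-color}{<^{\mu}}$ and $\consistency$, the relations on $\mathcal{L}^{\bot}$ can never exceed the pullback of those on $\mathcal{G}$, so there is no danger of ``spurious extra identifications''. The genuine danger is a \emph{smaller} relation: the explicit pushout construction of Appendix~\ref{sec:appendix:pushout} assigns parents and consistency classes to parentless or unconstrained elements by inheritance along undirected paths and then takes a closure, so one must rule out that a complement in which, say, two children of an e-box lie in distinct $\consistency$-classes (or an element is left top-level) still pushes out to the same $\mathcal{G}$ because the missing pairs are reinstated by closure through $\mathcal{L}$ and the shared interface. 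Closing this requires combining conditions (3)--(4) of Definition~\ref{def:boundary_new} (interface vertices of $\mathcal{L}^{\bot}$ share predecessors and are mutually consistent or co-located in a $\textcolor{closed-color}{E}$-box) with the closure-under-connectivity axioms of e-hypergraphs and the down-closedness and convexity of $m(\mathcal{L})$; your sketch names these ingredients but does not assemble them, so as it stands the proposal is an outline of a proof rather than a proof.
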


We are now ready to define \emph{convex extended DPOI (EDPOI) rewriting} for $\catname{CEHyp({\Sigma})}$.
It is analogous to~\autoref{def:convex_dpo},  except we must construct internal interfaces explicitly.
More precisely, when removing the occurrence of $\mathcal{L}$ from $\mathcal{G}$, \emph{i.e.}, by computing the pushout complement, the internal interfaces of the resulting closed e-hypergraph should be modified to \emph{exclude} the vertices corresponding to $\mathcal{L}$'s strictly internal interfaces (since the vertices they map to have been removed).
Then, when gluing $\mathcal{R}$ into the hole,  the internal interfaces of the resulting closed e-hypergraph should be modified to \emph{include} the strictly internal interfaces of $\mathcal{R}$ (since new internal interfaces for them to map to have been added).
\begin{definition}[Convex EDPOI rewriting]\label{def:dpoi-e}
	Given an extended span of morphisms
	$
		\mathcal{G} \xleftarrow{} n^\prime \sqcup k^\prime \xleftarrow{} n \sqcup k \xrightarrow{} n^{\prime\prime} \sqcup k^{\prime\prime} \xrightarrow{} \mathcal{H}
	$
	in $\catname{CEHyp}(\Sigma)$, we say $\mathcal{G}$ \emph{rewrites (convexly) to} $\mathcal{H}$ (\emph{with external interface} $n \sqcup k$ and \emph{taking internal interface} $n^\prime \sqcup k^\prime$ \emph{to} $n^{\prime\prime} \sqcup k^{\prime\prime}$) --- denoted by $\mathcal{G} \Rrightarrow \mathcal H$  --- \emph{via a rewrite rule}
	$
		\mathcal{L} \xleftarrow{} i^\prime \sqcup j^\prime \xleftarrow{} i \sqcup j \xrightarrow{} i^{\prime\prime} \sqcup j^{\prime\prime} \xrightarrow{} \mathcal{R}
	$
	if there exists an object $\mathcal{C}$ and morphisms which complete the following commutative diagram
	\[
		\scalebox{0.75}{
			\tikzfig{figures/DPOI_square}
		}
	\]
	such that the following conditions hold:
	\begin{enumerate}
		\label{dpoi-e:assumptions}
		\item $i \sqcup j \to \mathcal{C} \to \mathcal{G}$ is a boundary complement;
		\item the internal interfaces of $\mathcal H$ are such that:
		      $
			      n^{\prime\prime} = n^\prime \setminus (i^\prime \setminus i) \sqcup (i^{\prime\prime} \setminus i) $ and $ k^{\prime\prime} = k^\prime \setminus (j^\prime \setminus j) \sqcup  (j^{\prime\prime} \setminus j)
		      $
		\item the map $f_1 = [g_1,h_1]\colon n^{\prime\prime} \to \mathcal H$ in the diagram above consists of $g_1$ as defined in~\autoref{def:boundary_new} of boundary complement,  and $h_1$ which is the restriction of the composite $i^{\prime\prime} \to \mathcal R \to \mathcal H$ to $i^{\prime\prime} \setminus i$,  and similarly for the map $f_2\colon k^{\prime\prime} \to \mathcal H$.
	\end{enumerate}
	Given  a set $\mathfrak{R}$ of EDPOI rewrite rules and closed e-hypergraphs with extended interfaces $\mathcal G$ and $\mathcal H$,  we write $\mathcal G \Rrightarrow_\mathfrak{R} \mathcal H$ if there exists a EDPOI rewrite in $\mathfrak R$ such that via it $\mathcal G$ rewrites to $\mathcal H$,  and we write $\mathcal G \Rrightarrow^*_\mathfrak{R} \mathcal H$ for its reflexive, transitive and symmetric closure.
\end{definition}

An example of EDPOI rewrite can be seen in~\autoref{sec:appendix:dpoi}.
The enhanced definition of extended DPOI ensures that the combinatorial representation of morphisms in closed symmetric monoidal $\catname{SLat}$-categories is sound and complete for rewriting of closed $\Sigma^{+}$-terms.

\section{Soundness and completeness}

In this section we explore soundness and completeness of EDPOI rewriting with respect to closed $\Sigma^{+}$-terms.
First we recall the rewriting of $\Sigma$-terms modulo the laws of symmetric monoidal categories.

\begin{definition}[Closed $\Sigma$-term rewrite rule]
	A $\Sigma$-term rewrite rules is a pair $\langle l, r \rangle$ where $l$ and $r$ are $\Sigma$ terms with matching inputs and outputs (as lists of input and output types).
	When $l$ and $r$ are closed $\Sigma$-terms, we call such a rewrite rule closed.
\end{definition}

In practice such rewrite rules are generated by the equations of the corresponding theory.

\begin{definition}%
	\label{def:rewrite}
	For $\Sigma$-terms, $f$ \emph{rewrites} to $g$ modulo the laws of symmetric monoidal categories via a rewrite rule $\langle l, r \rangle$, writing $\leadsto_{\langle l, r \rangle}$, if they are representable as
	$
		f = c_{1} \fatsemi (id_{k} \otimes l) \fatsemi c_{2} $ and $ g = c_{1}\fatsemi(id_{k} \otimes r)\fatsemi c_{2}
	$.
\end{definition}
This notion generalises to rewriting of closed $\Sigma$-terms.

\begin{definition}
	For closed-$\Sigma$-terms $f,g$, and a closed rewrite rule $\langle l, r \rangle$, $f \leadsto_{\langle l, r \rangle} g$ modulo the laws of symmetric monoidal categories if they are representable as either in~\autoref{def:rewrite}, or as
	$
		f = c_{1}\fatsemi(id_{k} \otimes d)\fatsemi c_{2} $ and $ g = c_{1}\fatsemi(id_{k} \otimes e)\fatsemi c_{2}
	$
	and $d = \Lambda(f')$, $e = \Lambda(g')$ and $f' \leadsto_{\langle l, r \rangle} g'$.
\end{definition}
\begin{lemma}
	\label{lemma:normal_form}
	Every closed $\Sigma^{+}$-term $f$ can be represented as
	$
		f_{1} + \ldots + f_{n}
	$ modulo the laws of symmetric monoidal $\catname{SLat}$-categories and distributivity law (\autoref{law:distributivity}) where none of $f_{i}$ contain the join operator.
\end{lemma}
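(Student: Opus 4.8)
The plan is to prove the lemma by structural induction on the closed $\Sigma^{+}$-term $f$, pushing every occurrence of the join $+$ outward until it sits at the top of the term. The tools are the distributivity of composition and tensor over $+$ coming from the semilattice enrichment (the equations accompanying Definitions~\ref{def:slat-cat} and~\ref{def:enriched-prop}), the associativity and commutativity equations for $+$, and the distributivity law~\ref{law:distributivity} for $\Lambda$. Concretely, I would show that every closed $\Sigma^{+}$-term $f : A \to B$ equals, modulo these equations, a term $f_{1} + \cdots + f_{n}$ with $n \geq 1$ and each $f_{i} : A \to B$ a closed $\Sigma$-term, i.e.\ join-free.

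The base cases are immediate: if $f$ is a generator $\phi \in \Sigma_{M}$, an identity $\id_{A}$, a symmetry $\sym_{A,B}$, or an evaluation morphism $\textbf{ev}_{A,B}$, then $f$ contains no join, so $n = 1$ and $f_{1} = f$. For the inductive step I would treat four cases. When $f = g + h$, the induction hypothesis gives $g = g_{1} + \cdots + g_{p}$ and $h = h_{1} + \cdots + h_{q}$ with all summands join-free, and reassociating and reordering with the equations for $+$ yields $f = g_{1} + \cdots + g_{p} + h_{1} + \cdots + h_{q}$, which is of the required shape. When $f = g ; h$, I would write $g = \sum_{i=1}^{p} g_{i}$ and $h = \sum_{j=1}^{q} h_{j}$ by the induction hypothesis and then iterate $(x + y);z = x;z + y;z$ and $x;(y + z) = x;y + x;z$ to obtain $f = \sum_{i,j} g_{i};h_{j}$, each summand being join-free and typed $A \to B$. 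The case $f = g \otimes h$ is handled the same way, using instead $(x + y) \otimes z = x \otimes z + y \otimes z$ and $x \otimes (y + z) = x \otimes y + x \otimes z$ from Definition~\ref{def:enriched-prop}. Finally, when $f = \Lambda_{X,A,B}(h)$, I would write $h = \sum_{j=1}^{q} h_{j}$ by the induction hypothesis and apply the $n$-ary form of the distributivity law~\ref{law:distributivity} to get $f = \sum_{j=1}^{q} \Lambda_{X,A,B}(h_{j})$, with each $\Lambda_{X,A,B}(h_{j})$ join-free.

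I do not expect a serious obstacle: the argument is a routine induction whose only real bookkeeping is (i) promoting the binary distributivity laws used in the composition, tensor and $\Lambda$ cases to their $n$-ary forms, which is an inner induction on the number of summands, and (ii) checking that typing is preserved throughout, which is automatic since the $\Sigma^{+}$-grammar only forms $x + y$ from parallel $x, y$, so every summand produced carries the type $A \to B$ of $f$. Idempotence of $+$ plays no role in establishing existence of the representation, although it could be invoked afterwards to collapse repeated summands.
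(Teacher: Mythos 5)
Your proof is correct and is exactly the argument the paper intends (the paper states the lemma without proof, but derives the distributivity law~\ref{law:distributivity} precisely so that, together with the $\catname{SLat}$-enrichment equations for $;$ and $\otimes$, joins can be pushed to the top level by structural induction). Your case analysis covers all the term formers of closed $\Sigma^{+}$-terms, and the two bookkeeping points you flag ($n$-ary distributivity and type preservation) are handled correctly.
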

This leads to the following, for rewriting for closed $\Sigma^{+}$-terms.
\begin{definition}
	For closed $\Sigma^{+}$-term $f,g$, and a closed $\Sigma$-term rewrite rule $\langle l, r \rangle$ $f \leadsto_{\langle l, r \rangle} g$ modulo the laws of symmetric monoidal $\catname{SLat}$-categories and distributivity law (\autoref{law:distributivity}) if they are representable as
	$
		f = f_{1} + \ldots + f_{i} + \ldots + f_{n} $ and $ g = f_{1} + \ldots + f_{j} + \ldots + f_{n}
	$
	such that there is a permutation $\sigma$ such that there exist indices $i,j$ and $f_{\sigma(i)} \leadsto f_{j}$.
\end{definition}
Next, to formulate the correspondence between rewriting systems we first define interpretation $\llbracket - \rrbracket$ of closed $\Sigma^{+}$ terms in $\WellTypedMdaEcospans$.
It follows by first defining the interpretation of generators by extending the interpretation of plain $\Sigma$-terms as morphisms in $\MdaCospans$ with the interpretation of $\llbracket \text{ev}_{A,B}\rrbracket$ and $\llbracket \Lambda_{A,B,C} \rrbracket$ by following the work of Alvarez-Picallo et al.~\cite{fscd} (\autoref{sec:appendix:interpretation}).

Then we can make $\WellTypedMdaEcospans$ into $\catname{SLat}$-category by defining a join of two cospans and introducing EDPOI rewrite rules to satisfy the laws of symmetric monoidal $\catname{SLat}$-categories --- one EDPOI rewrite rule for each axiom instance --- as was done by Tiurin et al.~\cite{tiurin2025equivalencehypergraphsdporewriting}.
Under this interpretation, the cospan from~\autoref{fig:e-cospan-example} corresponds to the following closed $\Sigma^{+}$-term $(g\fatsemi f) + (((\Lambda((id \otimes id\fatsemi\sym) \otimes f)) \otimes g)\fatsemi\textbf{ev})$.
Finally, we add a structural rewrite schema rule corresponding to the distributivity law (\autoref{law:distributivity}).

We collect all such rewrite schemas into a set $\mathcal{S}$ and then determine a symmetric monoidal $\catname{SLat}$-category $\WellTypedMdaEcospans / \mathcal{S}$ by this quotienting.
Note that it is not a \emph{closed} symmetric monoidal $\catname{SLat}$-category.
To make it so, we would need to also quotient by the DPO rewrites that arise from the equations in~\autoref{def:closed}.
We do not perform such quotienting as these equations bear operational meaning, i.e.\ they model $\beta$-reduction when interpreting $\lambda$-calculus.
\begin{definition}[Quotient by rewrites]
	Given a set of EDPOI rewrite rules $\mathcal{E}$,  we denote by $\WellTypedMdaEcospans/\mathcal{E}$ the $\catname{SLat}$-category $\WellTypedMdaEcospans$ quotiented by the relation $f \sim g \quad \text{if} \quad f \Rrightarrow^{*}_{\mathcal{E}} g$.
\end{definition}
This explicit quotienting allows us to only consider cospans of the form $f_1 + \ldots + f_{n}$, similar to~\autoref{lemma:normal_form}, such that carrier of $f_{i}$ does not contain any hierarchical edges from $\textcolor{e-color}{E}$.
\begin{proposition}[Proposition 26 and Proposition 27~\cite{fscd}]
	\label{prop:fscd}
	For two closed $\Sigma$-terms $f$ and $g$ and a closed rewrite rule $\langle l, r \rangle$,  $f \leadsto_{\langle l, r \rangle} g$ if and only if $\llbracket f \rrbracket \Rrightarrow_{\langle \llbracket l \rrbracket, \llbracket r \rrbracket \rangle} \llbracket g \rrbracket$, where the latter is defined in a sub-category of $\WellTypedMdaEcospans$ where carriers only contain edges from $\colorbox{yellow}{E} \cup \textcolor{closed-color}{E}$.
\end{proposition}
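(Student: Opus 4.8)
The plan is to read Proposition~\ref{prop:fscd} as a transcription of Propositions~26 and~27 of~\cite{fscd} into the notation of the present paper, and to prove it by exhibiting the ``e-box-free'' fragment named in the statement as (equivalent to) exactly the combinatorial setting of~\cite{fscd}, under which the interpretation $\llbracket-\rrbracket$ and the rewriting relation $\Rrightarrow$ restrict to the corresponding notions there; the proposition then follows by transporting the cited result along this equivalence.

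First I would pin down the subcategory. When the carrier of every summand contains no hierarchical edge from $\textcolor{e-color}{E}$, every set $\{x \in V + E \mid p \textcolor{e-color}{<^{\mu}} x\}$ indexed by an e-box parent $p$ is empty, so the consistency relation $\consistency$ and the order $\textcolor{e-color}{<}$ carry no data, and an e-hypergraph degenerates to a tuple $(V,E,s,t,l_{V},l_{E},\textcolor{closed-color}{<})$ --- that is, a hierarchical hypergraph with closed structure in the sense of~\cite{fscd}. Under this collapse an extended cospan $n \to n' \to \mathcal{G} \leftarrow m' \leftarrow m$ is precisely a cospan of hypergraphs whose external interface is $n, m$ and whose strictly internal interfaces $n' \setminus n$, $m' \setminus m$ are the dangling wires sitting inside the $\textcolor{closed-color}{E}$-boxes; the MDA and well-typedness conditions then coincide verbatim with those of~\cite{fscd} once the $\consistency$-clauses are dropped. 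This gives a monoidal equivalence between the stated subcategory of $\WellTypedMdaEcospans$ and the category of~\cite{fscd}.

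Next I would check that the two extra layers of structure introduced in the present paper restrict to the corresponding notions of~\cite{fscd} on this fragment. In the extended boundary complement (Definition~\ref{def:boundary_new}), the disjunct ``$v \consistency w$'' in clauses~(3) and~(4) is vacuous, leaving only the requirement that paired interface vertices share their sequence of predecessors and, when not top-level, sit under a common $\textcolor{closed-color}{E}$-box --- which is exactly the boundary condition of~\cite{fscd} --- while clauses~(6) and~(7) reproduce its well-typed and not-necessarily-well-typed cases. Likewise the internal-interface bookkeeping of Definition~\ref{def:dpoi-e} (conditions~(2)--(3), the set differences $n' \setminus (i' \setminus i) + (i'' \setminus i)$, etc.) transcribes how~\cite{fscd} updates $\lambda$-box interfaces during a rewrite. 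Hence $\mathcal{G} \Rrightarrow \mathcal{H}$ in the fragment via $\langle \llbracket l \rrbracket, \llbracket r \rrbracket \rangle$ holds iff the corresponding DPO rewrite holds in~\cite{fscd}. Finally, $\llbracket-\rrbracket$ on closed $\Sigma$-terms is built from the interpretation of plain $\Sigma$-terms as MDA cospans of plain hypergraphs together with the clauses for $\textbf{ev}$ and $\Lambda$ in Fig.~\ref{fig:ev_and_lambda}; none of these introduces an $\textcolor{e-color}{E}$-edge (only $\Lambda$ creates a $\textcolor{closed-color}{E}$-box), so $\llbracket f \rrbracket$ always lies in the fragment, and under the equivalence above it agrees generator-by-generator with the interpretation of~\cite{fscd}, hence on all closed $\Sigma$-terms by functoriality. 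Combining the three facts, $f \leadsto_{\langle l, r \rangle} g$ iff (by Propositions~26--27 of~\cite{fscd}) $\llbracket f \rrbracket \Rrightarrow_{\langle \llbracket l \rrbracket, \llbracket r \rrbracket \rangle} \llbracket g \rrbracket$ in~\cite{fscd}, iff the same holds in the fragment of $\WellTypedMdaEcospans$, which is the claim.

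The main obstacle is bookkeeping rather than conceptual: the present paper packages interfaces as five-object chains $n \to n' \to \mathcal{G} \leftarrow m' \leftarrow m$ and phrases boundary complements and EDPOI rewriting through explicit set-difference manipulations, whereas~\cite{fscd} uses a different but informationally equivalent presentation. The delicate point is to verify that clause~(7) of Definition~\ref{def:boundary_new} (the nested, not-necessarily-well-typed case) and the internal-interface reconstruction of Definition~\ref{def:dpoi-e} neither create nor destroy rewrites relative to~\cite{fscd} --- in particular that convexity and the down-closed condition interact with $\textcolor{closed-color}{<}$ in exactly the same way on both sides. Once that compatibility is pinned down, the proposition is immediate.
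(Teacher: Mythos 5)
Your proposal is correct and takes essentially the same route as the paper: the paper offers no proof of this proposition at all, presenting it as a direct import of Propositions~26 and~27 of the cited work, valid on the e-box-free fragment by construction. Your argument is exactly the (implicit) justification for that citation --- identifying the fragment with the setting of the cited work, checking that the consistency clauses become vacuous, and verifying that the interpretation and the rewriting relation restrict appropriately --- spelled out in more detail than the paper bothers to give.
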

\begin{proposition}%
	\label{prop:quotient-structural}
	For two closed $\Sigma^{+}$-terms $f$ and $g$ and a closed rewrite rule $\langle l, r \rangle$,  $f \leadsto_{\langle l, r \rangle} g$ if and only if $\llbracket f \rrbracket \Rrightarrow_{\langle \llbracket l \rrbracket, \llbracket r \rrbracket \rangle} \llbracket g \rrbracket$ in $\WellTypedMdaEcospans / \mathcal{S}$.
\end{proposition}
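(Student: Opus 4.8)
The plan is to reduce the statement to the join-free case already established in Proposition~\ref{prop:fscd}. By Lemma~\ref{lemma:normal_form}, every closed $\Sigma^{+}$-term is equal, modulo SMC laws, $\catname{SLat}$-equations and distributivity~\ref{law:distributivity}, to a sum $f_1 + \cdots + f_n$ in which no $f_k$ contains a join; and, as noted after the definition of $\WellTypedMdaEcospans/\mathcal{S}$, the $\mathcal{S}$-quotient lets us present $\llbracket f \rrbracket$ as $\llbracket f_1 \rrbracket + \cdots + \llbracket f_n \rrbracket$ where the carrier of each $\llbracket f_k \rrbracket$ contains no edge from $\textcolor{e-color}{E}$. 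Under this presentation the summands $\llbracket f_k \rrbracket$ sit exactly as the consistency classes $\consistency_e$ of the unique top-level e-box $e$ of $\llbracket f \rrbracket$ (when $n = 1$ there is no such box and we are immediately in the setting of Proposition~\ref{prop:fscd}). The whole argument then amounts to showing that a single $\Sigma^{+}$-rewrite step corresponds, up to the structural $\mathcal{S}$-rewrites, to a single EDPOI step localised inside one such consistency class, and conversely.

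For the forward direction I would unfold the definition of $f \leadsto_{\langle l, r\rangle} g$: after rewriting $f$ and $g$ into normal form, $f \equiv f_1 + \cdots + f_i + \cdots + f_n$, $g \equiv f_1 + \cdots + f_j + \cdots + f_n$, with a permutation $\sigma$ and indices $i,j$ such that $\sigma(f_i) \leadsto_{\langle l, r\rangle} f_j$; note $l$ and $r$ are join-free since $l = r \in \mathcal{E}$ is an equation of closed $\Sigma$-terms. Proposition~\ref{prop:fscd} then gives $\llbracket \sigma(f_i) \rrbracket \Rrightarrow_{\langle \llbracket l \rrbracket, \llbracket r \rrbracket \rangle} \llbracket f_j \rrbracket$ in the subcategory whose carriers use only edges from $\colorbox{yellow}{E} \cup \textcolor{closed-color}{E}$. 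I would then replay this DPOI square inside the top-level e-box $e$ of $\llbracket f \rrbracket$, where $\llbracket \sigma(f_i) \rrbracket$ appears as the $i$-th consistency class: the non-top-level branch of Definition~\ref{def:boundary_new} (clauses (4) and (7)) is precisely designed to admit a match nested inside a hierarchical edge, and the internal-interface bookkeeping of Definition~\ref{def:dpoi-e} rebuilds $e$ with its $i$-th class replaced by $\llbracket f_j \rrbracket$. Finally the isomorphism of extended cospans induced by $\sigma$, together with the $\mathcal{S}$-rewrites witnessing $f \equiv f_1 + \cdots + f_n$ and $g \equiv f_1 + \cdots + f_n$, give $\llbracket f \rrbracket \Rrightarrow_{\langle \llbracket l \rrbracket, \llbracket r \rrbracket\rangle} \llbracket g \rrbracket$ in $\WellTypedMdaEcospans/\mathcal{S}$.

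For the converse I would choose normal-form representatives (which exist in every $\mathcal{S}$-class), so that $\llbracket f \rrbracket$ is presented as $\llbracket f_1 \rrbracket + \cdots + \llbracket f_n \rrbracket$ with $\textcolor{e-color}{E}$-free carriers, and transport the given EDPOI step along the $\mathcal{S}$-rewrites relating representatives. Because $l$ is a closed $\Sigma$-term, $\llbracket l \rrbracket$ contains no edge from $\textcolor{e-color}{E}$; hence a convex, down-closed match $\llbracket l \rrbracket \to \llbracket f \rrbracket$ admitting a boundary complement cannot straddle the boundary of the top-level e-box $e$, so its image lies within one consistency class $\consistency_e$, i.e.\ within one summand $\llbracket f_i \rrbracket$. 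The EDPOI square therefore restricts to an ordinary convex DPOI square $\llbracket f_i \rrbracket \Rrightarrow_{\langle \llbracket l \rrbracket, \llbracket r \rrbracket\rangle} \llbracket f_j \rrbracket$ in the $\textcolor{e-color}{E}$-free subcategory, and Proposition~\ref{prop:fscd} yields $f_i \leadsto_{\langle l, r\rangle} f_j$, whence $f \equiv f_1 + \cdots + f_i + \cdots + f_n \leadsto_{\langle l, r\rangle} f_1 + \cdots + f_j + \cdots + f_n \equiv g$; the $\mathcal{S}$-rewrites used are, axiom by axiom, interpretations of the $\catname{SLat}$-equations and distributivity~\ref{law:distributivity}, which are exactly the equations modulo which the $\Sigma^{+}$-rewriting relation is taken, so this chain is a legitimate $\Sigma^{+}$-rewrite.

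The hard part I expect is the localisation claim used in both directions: that an EDPOI step with a join-free rule on a normal-form e-hypergraph is confined to a single consistency class of the top-level e-box. In the converse direction this requires showing the match cannot cross the e-box boundary, which I would argue from monogamy and acyclicity of the cospan, well-typedness of $e$, and clauses (3)--(4) of Definition~\ref{def:boundary_new}, which forbid a connected component of $\mathcal{L}$ from being split across hierarchical edges of differing type. In the forward direction the subtlety is checking that the nested EDPOI square produces $\llbracket g \rrbracket$ on the nose up to $\mathcal{S}$, i.e.\ that the internal-interface arithmetic $n'' = n' \setminus (i' \setminus i) + (i'' \setminus i)$ of Definition~\ref{def:dpoi-e} reassembles the e-box exactly as the normal form of $g$ prescribes. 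A comparatively routine ingredient, which I would cite rather than prove, is that $\mathcal{S}$ is sound and complete for the $\catname{SLat}$-equations and distributivity, so that term-level equality modulo those laws matches $\Rrightarrow^{*}_{\mathcal{S}}$ on cospans; this is essentially the content of the remark following the definition of $\WellTypedMdaEcospans/\mathcal{S}$.
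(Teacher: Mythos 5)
Your proposal is correct and follows essentially the same route as the paper: decompose $f$ and $g$ into join-free normal forms via Lemma~\ref{lemma:normal_form}, observe that each summand $\llbracket f_i \rrbracket$ has an $\textcolor{e-color}{E}$-free carrier, and reduce the single rewrite step to Proposition~\ref{prop:fscd}. In fact your write-up is considerably more detailed than the paper's three-sentence proof, which silently elides the localisation and interface-bookkeeping subtleties you correctly flag as the non-trivial part.
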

\begin{proof}
	By definition, we have
	$f = f_{1} + \ldots + f_{n} $ and $  g = f_{1} + \ldots + f_{n}$,
	such that $f_{i} \leadsto f_{j}$ for some $i, j$.
	This yields two cospans of the form $\llbracket f_{1} \rrbracket + \ldots + \llbracket f_{n} \rrbracket$ for $\llbracket f \rrbracket$ and $\llbracket g \rrbracket$; because every $f_{i}$ is a $\Sigma$-term, and carriers in every $\llbracket f_{i} \rrbracket$ have no edges from $\textcolor{e-color}{E}$, we can apply~\autoref{prop:fscd}.
\end{proof}
Practically, this quotienting means that we perform DPO rewrites modulo the equations of $\catname{SLat}$-SMC and modulo the distributivity law (\autoref{law:distributivity}).
We can either rewrite the normal forms as mentioned above that contain all possible redexes, or we can genuinely search for redexes modulo these laws by traversing the e-hypergraph through both types of boxes when searching for a redex, which makes this representation practically viable even though it does not absorb all the possible equations.

The proposition above allows us to reason about theories of closed $\Sigma^{+}$-terms $\catname{CS}(\Sigma,\mathcal{E})^{+}$.
Given a set of equations $\mathcal{E}$ of closed $\Sigma$-terms, we construct $\llbracket \mathcal{E} \rrbracket = \{\langle \llbracket l \rrbracket, \llbracket r \rrbracket \rangle, \langle \llbracket r \rrbracket, \llbracket l \rrbracket \rangle \;|\; l = r \;\in\; \mathcal{E}\}$ and then we can perform DPO rewriting in $\WellTypedMdaEcospans / \mathcal{S}$ which formalises, \textit{e.g.}, the example in~\autoref{fig:e-string-substitution}.

\section{Application --- $\lambda^\approx_{\text{lsub}}$-calculus}
\label{sec:application}

In this section, we show the utility of our theory by demonstrating how it can accommodate the \emph{linear substitution calculus} $\lambda_{\text{lsub}}$ of Accattoli et al.~\cite[§ 2]{accattoli2014nonstandard}.
We summarise the linear substitution calculus as the $\lambda$-calculus with an additional term former representing explicit substitution; i.e.\ terms are generated by the grammar
\[
	t \coloneqq x \mid t t \mid \lambda x. t \mid t [x/t],
\]
where $x$ is a variable drawn from some countably infinite set, $t u$ is an application, and $\lambda x. t$ is an abstraction, as in standard $\lambda$-calculus.
The term $t [x / u]$ is an explicit substitution of the variable $x$ for the term $u$ in the term $t$.

The notions of context $C[-]$, bound and free variable are as usual, with the addition that $\textsf{fv} (t [x/u]) \coloneq \textsf{fv} (t) \setminus \{ x \} \cup \textsf{fv} (u)$.
Write $C \llbracket t \rrbracket$ for the term given by substitution of the hole in the context $C[-]$ for the term $t$, under the additional restriction that the free variables of $t$ are not captured (via abstraction or explicit substitution) by $C$.
In addition to the $\beta$- and $\eta$-reduction rules of standard $\lambda$-calculus, there are reduction rules to account for this explicit substitution which \enquote{perform} the substitution or commute it with $\beta$-reduction:
\begin{align*}
	(\lambda x. t) L u              \quad & \to_{d\beta} \quad t[x/u] L,                                                \\
	C \llbracket x \rrbracket [x/u] \quad & \to_{ls} \quad C \llbracket u \rrbracket [x/u],                             \\
	t [x/u]                         \quad & \to_{gc} \quad t,                               & x \notin \textsf{fv} (t),
\end{align*}
where $L$ is a finite list of substitutions of the form $[x/u] [y/v] \ldots$.

In programming terms, this explicit substitution can be used to model \textbf{let}-bindings and sharing.
For instance, the ML-style term $\textbf{let} \; x = 1 \; \textbf{in} \; \text{plus} (x, x)$ is formally represented as the term $\text{plus} (x, x) [x / 1]$ in this calculus\footnote{Technically speaking, in this case we mean the extension of this calculus with flat natural numbers equipped with a binary operation $\text{plus}$.}.
In this case, the variable $x$ is valued in the constant $1$, but one could imagine instead that $x$ is to be substituted for the result of some expensive computation which we do not desire to repeat twice unnecessarily --- in this sense, the sharing of both operands of the operation $\text{plus}$ is captured by the explicit substitution.

\emph{Graphical equivalence} $\sim$ of $\lambda_{\text{lsub}}$-terms is augmentation of $\alpha$-equi\-va\-lence of terms with the additional stipulations that
\begin{align}
	t [x/u] [y/v] \quad        & \sim \quad t [y/v] [x/u],        & x \notin \textsf{fv} (v) \land y \notin \textsf{fv} (u), \nonumber \\
	\label{eq:subst-lambda}
	(\lambda y. t) [x/u] \quad & \sim \quad \lambda y. (t [x/u]), & y \notin \textsf{fv} (u), \tag{$\ast$}                             \\
	(t v) [x/u] \quad          & \sim \quad t [x/u] v,            & x \notin \textsf{fv} (v), \nonumber
\end{align}
closed under contexts, transitivity, symmetry, and reflexivity.
Put simply, it is the stipulation that \textbf{let}-bindings may be freely reordered without changing the meaning of the term\footnote{Recall that our setting is \emph{pure}, in the sense that no sequencing of effects is meaningful.}.
The $\lambda^\sim_{\text{lsub}}$-calculus is given by these graphical equivalence classes of terms with the reduction relation defined by that of the $\lambda_{\text{lsub}}$-calculus modulo graphical equivalence.

This calculus is explicitly designed to capture the representation of terms as proof nets: for two terms $t, u$ in the $\lambda_{\text{lsub}}$-calculus, $t \sim u$ if and only if both map to the same proof net~\cite{accattoliJumpingBoxGraphical2011}.
\begin{proposition}{\cite[§ 5.1.1]{ghica2024stringdiagramslambdacalculifunctional}}
	Let \emph{string diagram equivalence} $\approx$ of $\lambda_{\text{lsub}}$-terms be the extension of graph equivalence $\sim$ by taking the symmetric closure by $\to_{ls}$ and $\to_{gc}$, thus identifying terms under elimination of explicit substitution.
	It defines the corresponding $\lambda^\approx_{\text{lsub}}$-calculus.
	We then have that $t \approx u$ if and only if both map to the same string diagram.
\end{proposition}
This allows us to exploit the string diagram setting of e-hypergraphs to absorb many\footnote{The reason this is not \emph{all} reductions is because e-hypergraphs quotient away only the \emph{structural} isomorphisms of symmetric monoidal closed $\catname{SLat}$-categories (e.g.\ associators, unitors, etc.) which generate the \enquote{isotopy of string diagrams}; even in the graphical equivalence relation $\sim$, while~\autoref{eq:subst-lambda} is validated in every symmetric monoidal closed $\catname{SLat}$-category, it is not so by isotopy, but rather the adjunction $ - \otimes A \dashv A \multimap -$. In other words, terms floating freely through the boundaries of rounded boxes is validated by our semantics, but produces concretely different e-hypergraphs --- this is the purpose of rectification by quotienting along $\mathcal{S}$ in~\autoref{prop:quotient-structural}, and manifests as additional DPO rewrites to be imposed.} of the reductions arising due to the equivalence $\approx$ into our representation rather than explicitly providing them as equations attached to a closed symmetric monoidal theory presenting the $\lambda^\approx_{\text{lsub}}$-calculus.

We will define a monoidal signature whose closed symmetric monoidal theory is the $\lambda^\approx_{\text{lsub}}$-calculus extended by tupling of terms and flat natural numbers equipped with basic arithmetic operations.
Let $\Sigma \coloneq (\{ \mathbf{N} \}, \{ \comonoid, \counit \} \cup \mathbb{N} \cup \{ \text{plus}, \text{mult} \}, t)$ where
\begin{align*}
	t\colon \{ \comonoid, \counit \} \cup \mathbb{N} \cup \{ \text{plus}, \text{mult} \} & \to {\{ \mathbf{N} \}}^* \times {\{ \mathbf{N} \}}^*                                                    \\
	t(n)                                                                                 & = ((), (\mathbf{N})),                                & n \in \mathbb{N},                                \\
	t(o)                                                                                 & = ((\mathbf{N}, \mathbf{N}), (\mathbf{N})),          & o \in \{ \comonoid, \text{plus}, \text{mult} \}, \\
	t(\counit)                                                                           & = ((\mathbf{N}), ()).
\end{align*}
The operations that the signature $\Sigma$ avails are copy and delete, denoted by $\comonoid$ and $\counit$ respectively, constants for each natural number, and the binary arithmetic operations $+$ and $\times$.
We interpret the $\Sigma$-terms $\llbracket \tau \rrbracket$ as follows: operations as themselves, $\id_{I}$ and $\id_{\mathbf{N}}$ as the identity function $\lambda x. x$ on the empty tuple $()$ and $\mathbf{N}$ respectively, $\sym_{\mathbf{N}, \mathbf{N}}$ as the swap function $\lambda (x, y). (y, x)$, composition $\tau_1 \fatsemi \tau_2$ as the composition function $\lambda x. \llbracket \tau_2 \rrbracket \left(\llbracket \tau_1 \rrbracket x\right)$, and $\tau_1 \otimes \tau_2$ as tupling $\left(\llbracket \tau_1 \rrbracket, \llbracket \tau_2 \rrbracket\right)$.

Note that, in contrast to conventional presentations of the $\lambda$-calculus as a language of (slotted~\cite[Example 1]{slotted-egraphs}) e-graphs~\cite[Figure 10]{EggPaper}, here there is no encoding of the term formers for application, abstraction, and explicit substitution.
Instead, these constructs are handled ambiently by our semantic setting: any closed symmetric monoidal category (and in particular, $\catname{CS}(\Sigma, \mathcal{E})$ for any set of equations $\mathcal{E}$) admits an internal language in the multiplicative fragment of intuitionistic linear logic (a particular type of linear $\lambda$-calculus)~\cite[§ 1.7.3]{abramskyIntroductionCategoriesCategorical2010}.
In other words, with respect to~\autoref{fig:egraph-strings}, this allows us to \emph{natively} represent abstraction by the rounded boxes of ($\Lambda(f)$), and application by the $\textsf{ev}$ operation with $\beta$- and $\eta$-conversion arising via equality of morphisms in the category (that is, without having to provide them as data in the form of equations in $\mathcal{E}$).
For explicit substitution, we note that the linearity is not desirable; this is why the copy-delete comonoid $(\comonoid, \counit)$ is present: by Fox's theorem~\cite{foxCoalgebrasCartesianCategories1976}, a Cartesian category is precisely a monoidal one admitting such a \emph{natural} copy-delete structure.
Thus, we must endow $\mathcal{E}$ with equations stating that $(\comonoid, \counit)$ forms a comonoid, and moreover that it is natural: $\comonoid$ behaves uniformly as a copy map, and $\counit$ behaves as deletion.

\begin{figure}
	\centering
	\begin{subfigure}{0.3\linewidth}
		\begin{adjustbox}{scale=0.8}
			\begin{lstlisting}[language=ML]
let x = 1 in
let y = 2 in
plus(x, y)
                \end{lstlisting}
		\end{adjustbox}
		\hrule
		\begin{adjustbox}{scale=0.8}
			\begin{lstlisting}[language=ML]
let y = 2 in
let x = 1 in
plus(x, y)
                \end{lstlisting}
		\end{adjustbox}
		\hrule
		\begin{adjustbox}{scale=0.8}
			\begin{lstlisting}[language=ML]
plus(1, 2)
                \end{lstlisting}
		\end{adjustbox}
		\caption{Three programs for $1 + 2$}%
		\label{fig:plus-1-2-programs}
	\end{subfigure}
	\hspace{1em}
	\begin{subfigure}{0.2\linewidth}
		\[
			\begin{adjustbox}{scale=0.8}$
					\begin{aligned}
						        & \text{plus}(x, y)[y/2][x/1] \\
						\approx &                             \\
						        & \text{plus}(x, y)[x/1][y/2] \\
						\approx &                             \\
						        & \text{plus}(1, 2)
					\end{aligned}
				$\end{adjustbox}
		\]
		\caption{As terms in $\lambda_{\text{lsub}}$}%
		\label{fig:plus-1-2-terms}
	\end{subfigure}
	\hspace{2em}
	\begin{subfigure}{0.3\linewidth}
		\centering
		\scalebox{0.8}{
			\tikzfig{figures/plus-1-2}
		}
		\caption{String-diagrammatic representation}%
		\label{fig:plus-1-2-string-diagram}
	\end{subfigure}
	\caption{$\approx$-equivalent terms}%
	\label{fig:plus-1-2}
\end{figure}

\autoref{fig:plus-1-2-programs} shows some programs which are semantically equivalent but syntactically distinct.
As alluded to in~\autoref{sec:vs-e-graphs-with-binding}, the traditional (slotted) e-graph approach to modelling the equivalence of these terms (\autoref{fig:plus-1-2-terms}) relies on explicitly encoding substitution operations as e-graph nodes, and their behaviour (the string diagram equivalence relation $\approx$) as rewrite rules, as in~\autoref{fig:e-graph-substitution} --- not only is this more complicated, but it also leads to a blow-up in the state space required to perform equality saturation in any extension of the $\lambda^\approx_{\text{lsub}}$-calculus, because we are really rewriting $\lambda_{\text{lsub}}$-terms and computational resources are spent rewriting between equivalent terms (an empirical analysis is provided in~\autoref{sec:e-graph-explicit-subst}).
In contrast, in our framework, $\lambda^\approx_{\text{lsub}}$-terms are directly modelled as string diagrams (\autoref{fig:plus-1-2-string-diagram}\footnote{Note that dashed boxes are elided for convenience here because no rewriting has occurred.}).
We do however note that the slotted approach is more flexible, in the sense that it may model more theories with binders (e.g.\ $\pi$-calculus), whereas our approach is essentially specialised to $\lambda$-calculus, relying heavily on the metatheoretic internal language of the ambient setting.

As a second example, we may wish to incorporate the optimisation ${\text{plus} (x, x) \leadsto \text{mult} (x, 2)}$ in our toy theory.
This corresponds to endowing $\mathcal{E}$ with the equation of string diagrams:
\[
	\scalebox{0.7}{
		\tikzfig{figures/plus-copy}
	}
	\quad
	=
	\quad
	\scalebox{0.7}{
		\tikzfig{figures/mult-2}
	}
	\;
	.
\]
Now, in our semantic model, this rewrite can be non-destructively applied to model one step of e-graph saturation:
\[\adjustbox{scale=0.5}{
		\tikzfig{figures/plus-copy-rewrite-1-2}
	}
	\quad
	\equiv
	\quad
	\adjustbox{scale=0.5}{
		\tikzfig{figures/plus-copy-rewrite-2-2}
	}
	\quad
	\to
	\quad
	\adjustbox{scale=0.5}{
		\tikzfig{figures/plus-copy-rewrite-3-2}
	}
	\;
	.
\]
Here, $\equiv$ is a definitional equality arising from the idempotence in the Hom semilattice, and $\to$ is an application of DPO rewriting using the previous equation along the right branch.
In isolation, this isn't particularly interesting, but the important observation is that this rewriting all happens \emph{locally} in the context of a bigger diagram.

From a practical perspective, the encoding of string diagrams as (e-)hypergraphs provides a normal form for string diagram equivalence classes of terms.
As a data structure, this provides an efficient representation upon which to build an implementation of this theory as a fully fledged \enquote{e-graphs for string diagram rewriting}, using DPO rewriting as a primitive to implement the standard e-graph operations.
We leave this as future work, noting its applications to not only $\lambda^\approx_{\text{lsub}}$-calculus, but also other monoidal theories where there are two levels of rewriting:
\begin{enumerate}
	\item \enquote{bureaucratic} reassociation and reparenthesisation, absorbed into isotopy of string diagrams (here, a fragment of string diagram equivalence of $\lambda_{\text{lsub}}$-terms);
	\item algebraic identities arising from the semantics of the theory (e.g. ${\text{plus} (x, x) \leadsto \text{mult} (x, 2)}$ for arithmetic).
\end{enumerate}
One such theory is the ZX calculus~\cite{coeckeInteractingQuantumObservables2011}.


\bibliographystyle{eptcs}
\bibliography{bibliography}

\appendix

\section{$\catname{SLat}$}
\label{sec:appendix:slat}

In this section we will define the category of semilattices that we use as a base for enrichment throughout the paper.

\begin{definition}[Semilattice]
	A \textit{semilattice} $(S,\join)$ is a set $S$ equipped with a binary operation $\join : S \times S \to S$ satisfying the following properties:
    \begin{align*}
        a \join (b \join c) &\equiv (a \join b) \join c \quad \text{(associativity)}\\
        a \join b &\equiv b \join a \quad \text{(commutativity)}\\
        a \join a &\equiv a \quad \text{(idempotence)}
    \end{align*}
\end{definition}

Note that we do not require the existence of a unit for $+$, adopting this requirement would mean that every hom-object must be non-empty.
Semilattices that satisfy this extra requirement are sometimes called \textit{bounded}, i.e., they are idempotent commutative monoids.

\begin{definition}[Semilattice homomorphism]
	A homomorphism between two semilattices $S_{1}$ and $S_{2}$ is a map $\phi : S_1 \to S_2$ such that $\phi(s \join_{S_{1}} s') = \phi(s) \join_{S_{2}} \phi(s')$ for all $s,s' \in S_{1}$.
\end{definition}

\begin{definition}[Category of semilattices]
	Semilattices with their respective homomorphisms form a category that we denote $\catname{SLat}$.
\end{definition}

\begin{definition}\label{def:slat-tensor}
We define a tensor product of semilattices $S_1$ and $S_2$ as follows.
$S_1 \otimes S_2$ is the free semilattice on the set $S_1 \times S_2$ --- finite non-empty formal joins of pairs $(s_1,s_2)$ with $s_{1} \in S_{1}$, $s_2 \in S_{2}$ --- quotiented by the following bilinearity relations
\begin{itemize}
    \item $(s_1 \join_{S_{1}} s_1', s_2) \equiv (s_1,s_2) \join_{S_{1} \otimes S_{2}} (s'_1, s_2)$
    \item $(s_1, s_2 \join_{S_{2}} s'_2) \equiv (s_1, s_2) \join_{S_{1} \otimes S_{2}} (s_1,s'_2)$
\end{itemize}
Writing $\eta_{S_1,S_2} : S_1 \times S_2 \to S_1 \otimes S_2$, $(s_1,s_2) \mapsto [(s_1,s_2)]$, for the inclusion of pairs, the tensor product is characterised by the universal property that every \emph{bimorphism} out of $S_1 \times S_2$---a function $b : S_1 \times S_2 \to T$ that is a homomorphism in each variable separately---factors as $b = \bar{b} \circ \eta_{S_1,S_2}$ for a unique homomorphism $\bar{b} : S_1 \otimes S_2 \to T$.
On morphisms, given $f : S_1 \to T_1$ and $g : S_2 \to T_2$, the composite $\eta_{T_1,T_2} \circ (f \times g)$ is a bimorphism, and we define $f \otimes g : S_1 \otimes S_2 \to T_1 \otimes T_2$ to be the unique homomorphism with
\[
(f \otimes g) \circ \eta_{S_1,S_2} = \eta_{T_1,T_2} \circ (f \times g).
\]
The unit for this tensor product is $\I = \{*\}$ --- a one-element semilattice.
\end{definition}

\begin{remark}\label{rem:slat-tensor-functorial}
The action of $\otimes$ on morphisms is functorial: $\id \otimes \id = \id$ and
\[
(f \otimes g) \circ (h \otimes k) = (f \circ h) \otimes (g \circ k).
\]
Both equations follow from the uniqueness in the universal property, since each side restricts along $\eta$ to the same bimorphism; for the second,
\[
\begin{aligned}
(f \otimes g) \circ (h \otimes k) \circ \eta
&= (f \otimes g) \circ \eta \circ (h \times k) \\
&= \eta \circ (f \times g) \circ (h \times k) \\
&= \eta \circ \big((f \circ h) \times (g \circ k)\big),
\end{aligned}
\]
which is exactly the bimorphism defining $(f \circ h) \otimes (g \circ k)$. By the same reasoning the unitors $S \otimes \I \cong S$ (given by $(s,*) \mapsto s$), the associators and the symmetry are the unique homomorphisms induced by the evident bimorphisms, and the coherence axioms hold because the corresponding diagrams already commute at the level of bimorphisms, where they reduce to coherence for the cartesian product in $\catname{Set}$. This equips $\catname{SLat}$ with a symmetric monoidal structure.
\end{remark}

\begin{remark}
    The above definition of a tensor product should not be confused with a tensor given by $\times$, which is constructed in the same vein as the tensor product above with the additional quotienting
    \[
        (s_1,s_2) \join_{S_1 \times S_2} (s_1',s_2') \equiv (s_1 \join_{S_1} s_1', s_2 \join_{S_2} s_2')~.
    \]
    The reason for using the former construction is that it has better theoretical properties, in particular, the free semilattice functor is strong monoidal (\cref{cor:slat-free-monoidal}).
\end{remark}

\begin{definition}\label{def:slat-internal-hom}
For semilattices $S_1$ and $S_2$, the \emph{internal hom} $[S_1, S_2]$ is the set $\catname{SLat}(S_1, S_2)$ of homomorphisms $S_1 \to S_2$, made into a semilattice by the pointwise join
\[
(f \join g)(x) \defeq f(x) \join g(x).
\]
This is well-defined, as $f \join g$ is again a homomorphism:
\begin{align*}
    (f \join g)(x \join y) &= f(x \join y) \join g(x \join y)\\
    &= f(x) \join f(y) \join g(x) \join g(y)\\
    &= f(x) \join g(x) \join f(y) \join g(y)\\
    &= (f \join g)(x) \join (f \join g)(y).
\end{align*}
\end{definition}
  \begin{proposition}\label{prop:slat-smcc}
    $\catname{SLat}$ is a closed symmetric monoidal category.
  \end{proposition}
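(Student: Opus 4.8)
The plan is to equip $\catname{SLat}$ with a symmetric monoidal structure whose tensor classifies bi-homomorphisms, and then to exhibit, for each semilattice $T$, a right adjoint $T \multimap -$ to $- \otimes T$; since closure of a symmetric monoidal category is exactly the existence of such right adjoints, this suffices. For the \emph{internal hom} I would take $B \multimap C \coloneqq \catname{SLat}(B,C)$, the set of semilattice homomorphisms, equipped with the \emph{pointwise} join $(h+k)(b) \coloneqq h(b) + k(b)$; a short computation using only associativity, commutativity and idempotence shows that $h + k$ is again a homomorphism and that $+$ makes $B \multimap C$ a semilattice. For the \emph{tensor} $S \otimes T$ I would use the universal property of classifying bi-homomorphisms, presented concretely as the free semilattice on the set $S \times T$ — that is, finite \emph{nonempty} subsets of $S \times T$ under union, there being no unit to adjoin — quotiented by the least congruence forced by $(s+s') \otimes t = s \otimes t + s' \otimes t$ and $s \otimes (t+t') = s \otimes t + s \otimes t'$, writing $s \otimes t$ for the class of $\{(s,t)\}$. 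For the \emph{unit} I would take $I$ to be the one-element semilattice $\{*\}$: a bi-homomorphism $I \times S \to U$ is exactly a homomorphism $S \to U$, so $I \otimes S \cong S \cong S \otimes I$ by Yoneda.

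The remaining steps, in order, are: (i) prove the universal property, namely that postcomposition with the canonical bi-homomorphism $S \times T \to S \otimes T$ is a bijection between $\catname{SLat}(S \otimes T, U)$ and $\mathrm{BiHom}(S,T;U)$, natural in all arguments; (ii) derive the associator, unitors and braiding from the evident bijections between iterated bi-homomorphisms $\mathrm{TriHom}(S,T,U;V)$ on the one hand and the symmetry $S \times T \cong T \times S$ on the other, and check the pentagon, triangle and hexagon identities — these hold automatically because every one of these structure maps is determined by its action on generators $s \otimes t$, where it reduces to the corresponding identity in $\catname{Set}$; and (iii) the closure itself, which is immediate currying: $\catname{SLat}(S \otimes T, U) \cong \mathrm{BiHom}(S,T;U) \cong \catname{SLat}(S, T \multimap U)$, where a bi-homomorphism $b$ corresponds to $s \mapsto b(s,-)$ — this lands in $\catname{SLat}(T,U)$ because $b$ is additive in its second argument, and is itself a homomorphism $S \to (T \multimap U)$ because $b$ is additive in its first. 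Naturality of this composite bijection in $S$, $T$ and $U$ then yields the adjunction $- \otimes T \dashv (T \multimap -)$, and hence that $\catname{SLat}$ is symmetric monoidal closed.

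The main obstacle is step (i): one must verify that the congruence generated by the two bilinearity relations is \emph{exactly} what is needed to classify bi-homomorphisms — collapsing neither too much nor too little — which is the single place where the generators-and-relations presentation has to be reasoned about concretely rather than handled formally through universal properties. A secondary point that needs care, and which departs from the classical \emph{unital} (bounded) case, is that we work with \emph{unbounded} semilattices: the free functor $\catname{Set} \to \catname{SLat}$ sends the singleton to the one-element semilattice rather than the two-element one, so the monoidal unit is the one-element (equivalently, terminal) semilattice, and the unitor isomorphisms must be checked against this choice.
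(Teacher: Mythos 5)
Your proposal is correct and takes essentially the same route as the paper: the tensor is the free semilattice on $S \times T$ modulo the bilinearity congruence, the unit is the one-element semilattice, and the internal hom is $\catname{SLat}(B,C)$ under pointwise join. You are in fact more explicit than the paper's own sketch, which defines the same data but does not spell out the universal property of $\otimes$ or the hom-tensor adjunction; your worry about step (i) is unfounded, since a homomorphism out of the quotient of the free semilattice is exactly a map on generators respecting the generating relations, i.e.\ a bi-homomorphism.
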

  \begin{proof}
    The symmetric monoidal structure is given by the tensor product of \cref{def:slat-tensor}, with functoriality and coherence established in \cref{rem:slat-tensor-functorial}.
    For closedness, take the internal hom $[S_2, T]$ of \cref{def:slat-internal-hom}; the universal property of the tensor gives bijections, natural in $S_1$ and $T$,
    \[
    \catname{SLat}(S_1 \otimes S_2, T) \;\cong\; \{\text{bimorphisms } S_1 \times S_2 \to T\} \;\cong\; \catname{SLat}(S_1, [S_2, T]),
    \]
    where the first is the universal property of $\otimes$ and the second is currying: a bimorphism $b : S_1 \times S_2 \to T$ corresponds to the homomorphism $s_1 \mapsto b(s_1, -)$ into $[S_2, T]$.
    Hence $- \otimes S_2 \dashv [S_2, -]$, so $\catname{SLat}$ is closed.
  \end{proof}

\begin{remark}
The above is a special case of the result by Keigher~\cite{keigher_symmetric_nodate} that says that a category of Eilenberg-Moore algebras induced by a commutative monad on $\catname{Set}$ is canonically symmetric monoidal closed.
\end{remark}
\begin{definition}
	The forgetful functor $U : \catname{SLat} \to \catname{Set}$ is given by $\catname{SLat}(\I_{\catname{SLat}}, -) : \catname{SLat} \to \catname{Set}$.
\end{definition}

Intuitively, the above functor returns the underlying set of a given semilattice $S$ as each morphism from $\{*\} \to S$ picks out an element of $S$.

\begin{proposition}[Special case of Proposition 6.4.6~\cite{Borceux_1994}]\label{prop:slat-free}
	The forgetful functor $U : \catname{SLat} \to \catname{Set}$ has a left adjoint free functor $F : \catname{Set} \to \catname{SLat}$.
\end{proposition}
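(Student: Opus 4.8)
The plan is to exhibit the left adjoint explicitly rather than invoking an abstract adjoint functor theorem. Define $F : \catname{Set} \to \catname{SLat}$ on objects by letting $F(X)$ be the set of finite \emph{non-empty} subsets of $X$, equipped with the operation of union; union is plainly associative, commutative and idempotent, so $F(X)$ is a semilattice, and the restriction to non-empty subsets is forced because our semilattices carry no unit. On a function $f : X \to Y$ put $F(f)(A) = \{ f(x) \mid x \in A \}$, which is again a finite non-empty subset of $Y$; functoriality of $F$ is immediate. Treating $U$ as the underlying-set functor (which it is, since for any $S$ every map $\{*\} \to S$ is a homomorphism by idempotence), the candidate unit of the adjunction is the family of functions $\eta_X : X \to U(F(X))$, $\eta_X(x) = \{x\}$, visibly natural in $X$.

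To establish the adjunction it suffices to check that each $\eta_X$ is universal: for every semilattice $S$ and every function $f : X \to U(S)$ there is a unique semilattice homomorphism $\bar f : F(X) \to S$ with $U(\bar f) \circ \eta_X = f$. Set $\bar f(\{x_1, \dots, x_n\}) = f(x_1) + \dots + f(x_n)$. This is well defined: the value does not depend on the chosen enumeration of the set, nor on repetitions, precisely because $+$ in $S$ is commutative, associative and idempotent. It is a homomorphism, since for finite non-empty $A, B \subseteq X$ idempotence gives $\bar f(A \cup B) = \sum_{z \in A \cup B} f(z) = \sum_{x \in A} f(x) + \sum_{y \in B} f(y) = \bar f(A) + \bar f(B)$, and $\bar f(\{x\}) = f(x)$ by construction. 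For uniqueness, any homomorphism $g : F(X) \to S$ with $g(\{x\}) = f(x)$ must send $\{x_1, \dots, x_n\} = \{x_1\} \cup \dots \cup \{x_n\}$ to $g(\{x_1\}) + \dots + g(\{x_n\}) = f(x_1) + \dots + f(x_n) = \bar f(\{x_1,\dots,x_n\})$, so $g = \bar f$. This universal property is exactly the statement $F \dashv U$; the counit $\varepsilon_S : F(U(S)) \to S$ is the homomorphism $\{s_1,\dots,s_n\} \mapsto s_1 + \dots + s_n$, and the triangle identities follow directly from the defining equation of $\bar f$.

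The verifications above are all routine, and the only point that needs genuine care is the bookkeeping about emptiness: because $\catname{SLat}$ consists of possibly-unbounded semilattices there is no neutral element to absorb the empty subset, so $F(X)$ must exclude $\varnothing$, and one should check that $F(f)$ and $\bar f$ are never evaluated on $\varnothing$ (they are not, since images and unions of non-empty sets are non-empty). Alternatively, one may observe that $\catname{SLat}$ is the category of algebras for the finitary algebraic theory with one binary operation subject to associativity, commutativity and idempotence, hence is monadic over $\catname{Set}$ via $U$, so $U$ has a left adjoint; the construction above then merely identifies the free algebra concretely. Either route yields the claim.
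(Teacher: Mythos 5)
Your proof is correct, but it takes a different route from the paper's. You build the free semilattice concretely as the set of finite non-empty subsets of $X$ under union, exhibit the unit $\eta_X(x) = \{x\}$, and verify its universal property by direct element-level computation (with the right attention to the exclusion of $\varnothing$, which is indeed the only delicate point in the unbounded setting). The paper instead defines $F(A) = \coprod_A I_{\catname{SLat}}$, the $A$-indexed coproduct of copies of the one-element semilattice, and obtains the adjunction purely formally from the chain of natural isomorphisms $\catname{SLat}(\coprod_A I_{\catname{SLat}}, B) \cong \prod_A \catname{SLat}(I_{\catname{SLat}}, B) \cong \catname{Set}(A, U(B))$, using that the hom-functor sends colimits in its first argument to limits and that $U$ is by definition the representable functor $\catname{SLat}(I_{\catname{SLat}}, -)$. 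The two constructions produce isomorphic objects --- a formal sum of elements drawn from $A$-many copies of $\{*\}$, modulo associativity, commutativity and idempotence, is exactly a finite non-empty subset of $A$ --- so the difference is one of proof strategy rather than substance. Your hands-on verification is more self-contained and makes the free object tangible; the paper's argument is shorter once the coproduct of semilattices is in place, and its representable description of $U$ is reused later (e.g.\ in showing $F$ interacts well with the monoidal structure). Your closing remark that $\catname{SLat}$ is monadic over $\catname{Set}$ as a finitary algebraic theory is also a valid third route, and is essentially what the cited Proposition 6.4.6 of Borceux packages in general.
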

\begin{proof}
	The functor $F$ is defined by letting $F(A) = \coprod_{A} \I_{\catname{SLat}}$, the coproduct of $A$-many copies of the unit semilattice.
	The coproduct $S_{1} \coprod S_{2}$ of two semilattices has as elements the formal joins $s_{1} \join s_{2} \join \ldots \join s_{n}$ with each $s_{i}$ drawn from $S_{1}$ or $S_{2}$, quotiented by the semilattice axioms together with the relations of $S_{1}$ and $S_{2}$.
	The adjunction is then given by the following natural isomorphisms
	\begin{align*}
		\catname{SLat}(\coprod_{A}(\I_{\catname{SLat}}), B) & \cong \prod_{A}(\catname{SLat}(\I_{\catname{SLat}}, B))       \\
		                                                   & \cong \catname{Set}(A,\catname{SLat}(\I_{\catname{SLat}}, B)) \\
		                                                   & \cong \catname{Set}(A, U(B))
	\end{align*}
	The first isomorphism holds because the contravariant hom-functor sends colimits in its first argument to limits, so the coproduct becomes a product.
	The second is the isomorphism $\prod_{A} X \cong \catname{Set}(A, X)$. By the Yoneda lemma it suffices to give a bijection natural in $Y$:
	\[
	\catname{Set}\big(Y, \textstyle\prod_A X\big) \cong \prod_A \catname{Set}(Y, X) \cong \catname{Set}\big(\textstyle\coprod_A Y, X\big) \cong \catname{Set}(Y \times A, X) \cong \catname{Set}\big(Y, \catname{Set}(A,X)\big),
	\]
	using, in order, the universal property of the product, that of the coproduct (with $\coprod_A Y \cong Y \times A$), and currying.
    The last bijection is the definition of $U$.
	As all three bijections are natural in $B$, their composite exhibits $F \dashv U$.
\end{proof}

\begin{corollary}\label{cor:slat-free-monoidal}
	The free functor $F : \catname{Set} \to \catname{SLat}$ is strong monoidal as a functor $(\catname{Set}, \times) \to (\catname{SLat}, \otimes)$.
\end{corollary}
\begin{proof}
	We have $F(\I_{\catname{Set}}) \cong \I_{\catname{SLat}}$, and
	\begin{align*}
		F(A) \otimes F(B) & \cong (\coprod_{A} \I_{\catname{SLat}}) \otimes (\coprod_{B} \I_{\catname{SLat}}) \\
		                  & \cong \coprod_{A} (\I_{\catname{SLat}} \otimes \coprod_{B} \I_{\catname{SLat}})   \\
		                  & \cong \coprod_{A} (\coprod_{B} (\I_{\catname{SLat}} \otimes \I_{\catname{SLat}})) \\
		                  & \cong \coprod_{A} (\coprod_{B} \I_{\catname{SLat}})                              \\
		                  & \cong \coprod_{A \times B} \I_{\catname{SLat}}                                   \\
		                  & \cong F(A \times B),
	\end{align*}
	naturally in $A$ and $B$, where we used that $- \otimes X$ and $X \otimes -$ preserve colimits, being left adjoints by the symmetry and monoidal closedness of $\catname{SLat}$ (\cref{prop:slat-smcc}).
\end{proof}

While a monoidal base suffices to define enriched categories, we assume that the base is symmetric monoidal closed, so that the standard category-theoretic machinery (such as the tensoring of enriched categories) is available in the enriched setting.

\section{String diagram rewrite theory}
\label{sec:appendix:hyp}

\subsection{Hypergraphs}

\begin{proposition}[\cite{bonchi_string_2022-1}]
$\catname{Hyp}(\Sigma)$ is equivalently the slice category $\catname{Hyp} \downarrow G_{\Sigma}$, where $\catname{Hyp}$ is the category of (unlabelled) hypergraphs --- equivalently, the category of finite presheaves over the index category $\mathbf{I}$ whose objects are the pairs $(k, l) \in \mathbb{N} \times \mathbb{N}$ together with a distinguished object $\star$, with $k + l$ morphisms $\{ i\colon \star \to (k, l) \mid 0 \leq i < k + l \}$ --- and $G_{\Sigma}$ is the hypergraph with a single vertex $v$ and, for each generator $g : n \to m$ of $\Sigma$, one hyperedge $e_g$ in which $v$ occurs $n$ times in $s(e_g)$ and $m$ times in $t(e_g)$.
\end{proposition}
Intuitively, $\mathbf{I}$ presents a hypergraph by its vertices (the object $\star$) and its hyperedges sorted by arity (an object $(k,l)$ is the sort of hyperedges with $k$ sources and $l$ targets), the $k+l$ maps picking out each source and target vertex.
The hyperedges of $G_{\Sigma}$ are then the available labels, and a morphism $H \to G_{\Sigma}$ in $\catname{Hyp}$ labels each hyperedge of $H$ by the generator it is sent to, automatically matching arities.
As a presheaf category, $\catname{Hyp}$ has all finite colimits --- the initial object is the empty hypergraph and coproducts are disjoint unions --- and these are created by the forgetful functor out of the slice, so $\catname{Hyp}(\Sigma)$ inherits them.

The single vertex of $G_{\Sigma}$ reflects that $\Sigma$ here carries a single object type.
For a many-sorted signature with a set of types (colours) $C$, one works instead with the category $\catname{Hyp}_{C}$ of $C$-\emph{coloured} hypergraphs --- finite presheaves over the analogous index category $\mathbf{I}_{C}$ in which the single vertex object $\star$ is replaced by one object $\star_{c}$ per colour $c \in C$, and the arity objects record the colour of each source and target --- and takes $G_{\Sigma}$ to have one vertex per colour, attaching each generator's sources and targets to the vertices of the corresponding colours.
The resulting slice $\catname{Hyp}_{C} \downarrow G_{\Sigma} =: \catname{Hyp}_{C,\Sigma}$ then enforces the typing discipline of~\autoref{def:hypergraph} for free, since a labelling morphism must respect sources and targets and hence colours.

\begin{proposition}
	$\HypI{\Sigma}$ is a symmetric monoidal category.
\end{proposition}
\begin{proof}
Composition of two cospans $n \xrightarrow{f} \mathcal{F} \xleftarrow{g} m \fatsemi m \xrightarrow{f^\prime} \mathcal{G} \xleftarrow{g^\prime} k$ is given by the pushout
$
	n \xrightarrow{f} \mathcal{F} \sqcup_{g,f^\prime} \mathcal{G} \xleftarrow{g^\prime} k
$,
computed by identifying vertices in $\mathcal{F} \sqcup \mathcal{G}$ that share pre-image in $m$.
The latter is a pushout in $\catname{Hyp}(\Sigma)$ and always exists since $\catname{Hyp}(\Sigma)$ has all finite colimits.
	\begin{enumerate}
	\item the tensor product given by coproduct in $\catname{Hyp}(\Sigma)$:
	      \begin{align*}
		      n \xrightarrow{f}               & \mathcal{F} \xleftarrow{g} m                                                                                                                                       \\
		                                      & \otimes \qquad \quad \Coloneqq n \sqcup n^\prime \xrightarrow{f \sqcup f^\prime} \mathcal{F} \sqcup \mathcal{G} \xleftarrow{g \sqcup g^\prime} m \sqcup m^\prime ; \\
		      n^\prime \xrightarrow{f^\prime} & \mathcal{G} \xleftarrow{g^\prime} m^\prime
	      \end{align*}
	\item the monoidal unit is given by the initial object of $\catname{Hyp}(\Sigma)$;
	\item symmetry is inherited from coproduct in $\catname{Hyp}(\Sigma)$~\cite{MonoidalCoproduct}.
\end{enumerate}
\end{proof}

\begin{definition}[Category of MDA Hypergraphs with Interfaces~\cite{bonchi_string_2022-2}]%
	\label{def:monogamy_hyp}
	We call a cospan $n \xrightarrow{f} \mathcal{G} \xleftarrow{g} m$ in $\HypI{\Sigma}$ monogamous if:
	\begin{enumerate}
		\item $\mathcal{G}$ is directed acyclic;
		\item $f$ and $g$ are monomorphisms;
		\item the in-degree and out-degree of every vertex is at most~$1$;
		\item vertices with in-degree $0$ are precisely the image of $f$;
		\item vertices with out-degree $0$ are precisely the image of $g$.
	\end{enumerate}
	We denote by $\MdaCospans$ the subcategory of $\HypI{\Sigma}$ of monogamous directed acyclic (MDA) cospans.
\end{definition}

\autoref{fig:hypergraph-with-interfaces} gives an example of an MDA cospan, which is such because it \enquote{looks like} the string diagram \comonoid.
Analogously, we can see that~\autoref{fig:hypergraph} (equipped with the interfaces choosing $\{ v_1 \}$ and $\{ v_4, v_5, v_6, v_7 \}$) fails to form an MDA cospan --- it is not monogamous for many reasons: $v_1$ has out-degree 0, $v_1, v_5, v_7$ have in-degree 2, and $v_3$ has in-degree 0; additionally, it fails be be directed acyclic due to the path along $e_4$ from $v_7$ to itself\footnote{Recall that loops are not part of the syntax of string diagrams for symmetric monoidal categories in the absence of a trace --- a relaxation of monogamy to allow for this is explored by Ghica et al.~\cite{ghica_rewriting_2023}.}.

\begin{figure}
	\begin{subfigure}[c]{0.3\linewidth}
		\centering


        \[
        \adjustbox{scale=0.8}{
        \tikzfig{figures/hypergraph_example}
        }
        \]
		\caption{Hypergraph with 7 vertices and 4 hyperedges, all directed bottom-to-top; source/target vertices are ordered left-to-right.}%
		\label{fig:hypergraph}
	\end{subfigure}
	\hfill
	\begin{subfigure}[c]{0.3\linewidth}
		\centering



        \[
        \adjustbox{scale=0.8}{
        \tikzfig{figures/cospan_hypergraph_example}
        }
        \]
		\caption{Hypergraph with interfaces}%
		\label{fig:hypergraph-with-interfaces}
	\end{subfigure}
	\caption{Example hypergraphs}%
	\label{fig:hypergraph-examples}
\end{figure}

\subsection{DPOI-Rewriting for Hypergraphs with Interfaces}

\begin{definition}[Boundary complement]%
	\label{def:boundary_original}
	For MDA cospans $i \xrightarrow{a_1} \mathcal{L} \xleftarrow{a_2} j$ and $n \xrightarrow{b_1} \mathcal{G} \xleftarrow{b_2} m$ and monomorphism $f\colon \mathcal L \to \mathcal G$, a pushout complement $i \sqcup j \to \mathcal{L}^\bot \to \mathcal{G}$ as depicted in the square below
	\[
		\adjustbox{scale=0.8}{
			\tikzfig{figures/DPOI_boundary_complement}
		}\]
	is called a boundary complement if $[c_1, c_2]$ is a monomorphism and there exist $d_1\colon n \to \mathcal{L}^{\bot}$ and $d_2\colon m \to \mathcal{L}^{\bot}$ making the above triangle commute and such that
	$
		n \sqcup j \xrightarrow{[d_1,c_2]} \mathcal{L}^{\bot} \xleftarrow{[d_2,c_1]} m \sqcup i
	$
	is an MDA cospan.
\end{definition}

Intuitively, requiring boundary complements ensures that inputs are glued exclusively to outputs and vice-versa in the two pushout squares.

\begin{example}[Example 29~\cite{bonchi_string_2022-2}]
	Consider the following signature $\Sigma = \{\alpha_1 : 0 \to 1, \alpha_2 : 1 \to 0, \alpha_3 : 1 \to 1\}$ and the rewrite rule $id \to \alpha_3$ for the example term $\alpha_1 \fatsemi \alpha_2$.
	Without the boundary complement condition, we get, for example, the following two possible DPOI squares.
	\[
	\adjustbox{scale=0.5}{
	\tikzfig{figures/ambiguous_pushout_complement_1}
	}
	\qquad
	\adjustbox{scale=0.5}{
	\tikzfig{figures/ambiguous_pushout_complement_2}
	}
	\]
	The second DPOI square does not satisfy the boundary complement condition as the input interface is mapped to the input vertex (and similarly for the output interface) and therefore, once pushout square is complete, the resuling cospan does not represent a valid SMC morphism as it requires wires to be able to go backwards, i.e. it requires compact closed structure.
	If we were to consider non-monic maps into $\mathcal{L}^{\bot}$ we would have more DPOI squares non of which would result in a valid morphism (due to in- or out- degrees of vertices being greater than 1).
\end{example}

\begin{definition}[Convex match]
	A subgraph $\mathcal{H}$ of a hypergraph $\mathcal{G}$ is a convex subgraph if for all $v_i, v_j \in V_{\mathcal{H}}$ every path from $v_i$ to $v_j$ is also in $\mathcal{H}$.
	A match $f\colon \mathcal{L} \to \mathcal{G}$ is a convex match if it is a monomorphism and its image is a convex subgraph of $\mathcal{G}$.
\end{definition}

\begin{definition}[Convex DPOI rewriting]%
	\label{def:convex_dpo}
	Let $\mathfrak{R}$ be a set of DPOI rewrite rules.
	Then, given $\mathcal{G} \xleftarrow{} n \sqcup m$ and $\mathcal{H} \xleftarrow{} n \sqcup m$ in $\catname{Hyp(\Sigma)}$, $\mathcal{G}$ rewrites convexly into $\mathcal{H}$ with interface $n \sqcup m$, and we write $(\mathcal G \xleftarrow{} n \sqcup m ) \Rrightarrow_{\mathfrak{R}}  (\mathcal H \xleftarrow{} n \sqcup m )$, if
	\begin{enumerate}
		\item there exists a rule $\mathcal{L} \xleftarrow{} i \sqcup j \xrightarrow{} \mathcal{R}$ in $\mathfrak{R}$ and object $\mathcal{L}^{\bot}$ and cospan arrows $i \sqcup j \xrightarrow{} \mathcal{L}^{\bot} \xleftarrow{} n \sqcup m$ such that the DPOI diagram of~\autoref{def:dpoi} commutes and its marked squares are pushouts;
		\item $f\colon L \to \mathcal G$ is a convex match;
		\item $i \sqcup j \to \mathcal{L}^{\bot} \to \mathcal G$ is a boundary complement in the leftmost pushout.
	\end{enumerate}
\end{definition}

\begin{example}[Example 32~\cite{bonchi_string_2022-2}]
	Consider the following signature $\Sigma \{e_1 : 1 \to 2, e_2 : 2 \to 1, e_3 : 1 \to 1, e_4 : 1 \to 1\}$ and the following rewrite rule
	\[
	id \otimes e_{1} \fatsemi \sym \otimes id \fatsemi id \otimes e_{2} \quad \to \quad \sym \fatsemi e_4 \otimes e_4
	\]
	represented as the following span of hypergraphs
	\[
	\adjustbox{scale=0.65}{
	\tikzfig{figures/convex_match_example_rule}~.
	}
	\]
	Now consider the cospan of hypergraphs corresponding to the following $\Sigma$-term $e_1 \fatsemi e_3\otimes id \fatsemi e_2$.
	\[
	\adjustbox{scale=0.65}{
	\tikzfig{figures/convex_match_example_source}~.
	}
	\]
	Applying the rule to the cospan above results in the DPOI square from~\autoref{fig:non_convex_dpoi_example}.
	While the DPOI square is valid and the pushout complement satisfies the boundary complement condition, the match (highlighted in red) is not convex as there exists a path from the input of $e_3$ to its output that does not belong to the match itself.
	The consequence of this is that this rewrite step does not correspond to a term rewriting modulo SMC --- there is no way to identify such a redex without at least traced monoidal structure.
	That is, there is no decomposition of $e_1 \fatsemi e_3\otimes id \fatsemi e_2$ as $c_1 \fatsemi id_{n}\otimes(id \otimes e_1 \fatsemi \sym\otimes id \fatsemi id \otimes e_2) \fatsemi c_2$.
	There is, however, a match modulo traced symmetric monoidal structure --- $\text{tr}(e_3\otimes id \fatsemi \mathcolorbox{yellow}{e_1 \otimes id \fatsemi \sym \otimes id \fatsemi  id \otimes e_2})$.
	Requiring convex matches prevents such situations from arising.
\end{example}

\begin{figure}
	\[
	\adjustbox{scale=0.65}{
	\tikzfig{figures/convex_match_dpo}
	}
	\]
	\caption{DPOI diagram for a non-convex match}%
	\label{fig:non_convex_dpoi_example}
\end{figure}

\section{Pushout in $\catname{EHyp}(\Sigma)$}
\label{sec:appendix:pushout}

In constructing the pushout the functional versions of e-hypergraph relations will be useful.
\begin{remark}
	$\textcolor{e-color}{<}$, $\textcolor{closed-color}{<}$ and $\consistency$ can be considered as (partial) functions defined on $V_{\mathcal{F}} \sqcup  E_{\mathcal{F}}$, \textit{i.e.}, on the coproduct of vertices and edges.
	To make things well-typed, we will use corresponding coproduct injections $\iota_{V} : {V_{\mathcal{F}}} \to V_{\mathcal{F}} \sqcup  E_{\mathcal{F}}$ and $\iota_{E} : {E_{\mathcal{F}}} \to V_{\mathcal{F}} \sqcup  E_{\mathcal{F}}$ when passing either a vertex or an edge into these functions.
	For example, the immediate predecessor of a vertex $x$ can be written functionally as $\textcolor{e-color}{<_{\mathcal{F}}^{\mu}}(\iota_{V_{\mathcal{F}}}(x))$.
\end{remark}

\begin{remark}
	Using functional notation we can reformulate the notion of a homomorphism between two e-hypergraphs in the following way.
\end{remark}
\begin{definition}
	\label{def:e-homo-2}
	A \emph{homomorphism} $\phi: \mathcal{F} \to \mathcal{G}$ of e-hypergraphs $\mathcal{F},\mathcal{G}$ is a pair of functions $\phi_V : V_{\mathcal{F}} \to V_{\mathcal{G}}, \phi_E : E_{\mathcal{F}} \to E_{\mathcal{G}}$ such that

	\begin{enumerate}
		\item $\phi$ is hypergraph homomorphism.

		\item When $x$ is not a top-level vertex such that $\textcolor{e-color}{<}(\iota_{V}(x))$ is defined,
		      \[
			      \phi_{E}(\textcolor{e-color}{<_{\mathcal{F}}^{\mu}}(\iota_{V_{\mathcal{F}}}(x))) = \textcolor{e-color}{<_{\mathcal{G}}^{\mu}}(\phi_{V} \fatsemi \iota_{V_{\mathcal{G}}}(x))
		      \]
		      and
		      \[
			      \phi_{E}(\textcolor{e-color}{<_{\mathcal{F}}^{\mu}}(\iota_{E_{\mathcal{F}}}(x))) = \textcolor{e-color}{<_{\mathcal{G}}^{\mu}}(\phi_{E} \fatsemi \iota_{E_{\mathcal{G}}}(x))
		      \] when $x$ is a not top-level edge.
		\item When $x$ is not a top-level vertex such that $\textcolor{closed-color}{<}(\iota_{V}(x))$ is defined,
		      \[
			      \phi_{E}(\textcolor{closed-color}{<_{\mathcal{F}}^{\mu}}(\iota_{V_{\mathcal{F}}}(x))) = \textcolor{closed-color}{<_{\mathcal{G}}^{\mu}}(\phi_{V} \fatsemi \iota_{V_{\mathcal{G}}}(x))
		      \]
		      and
		      \[
			      \phi_{E}(\textcolor{closed-color}{<_{\mathcal{F}}^{\mu}}(\iota_{E_{\mathcal{F}}}(x))) = \textcolor{closed-color}{<_{\mathcal{G}}^{\mu}}(\phi_{E} \fatsemi \iota_{E_{\mathcal{G}}}(x))
		      \] when $x$ is a not top-level edge.
		\item
		      When $x \in E_{\mathcal{F}}$
		      \[
			      [\phi_{V} \fatsemi \iota_{V_{\mathcal{G}}}, \phi_{E} \fatsemi \iota_{E_{\mathcal{G}}} ]^{*}(\consistency_{\mathcal{F}}(\iota_{E_{\mathcal{F}}}(x)))
			      \subseteq
			      \consistency_{\mathcal{G}}(\phi_{E} \fatsemi \iota_{E_{\mathcal{G}}}(x))
		      \]
		      where $\phi_{V} \fatsemi \iota_{V_{\mathcal{G}}} : V_{\mathcal{F}} \to V_{\mathcal{G}} \sqcup  E_{\mathcal{G}}$, and similarly for $\phi_{E} \fatsemi \iota_{E_\mathcal{G}}$ so that $[\phi_{V} \fatsemi \iota_{V_{\mathcal{G}}}, \phi_{E} \fatsemi \iota_{E_{\mathcal{G}}}] : V_{\mathcal{F}} \sqcup  E_{\mathcal{F}} \to  V_{\mathcal{G}} \sqcup  E_{\mathcal{G}}$.
		\item When $x \in V_{\mathcal{F}}$
		      \[
			      [\phi_{V} \fatsemi \iota_{V_{\mathcal{G}}}, \phi_{E} \fatsemi \iota_{E_{\mathcal{G}}}]^{*}(\consistency_{\mathcal{F}}(\iota_{V_{\mathcal{F}}}(x)))
			      \subseteq
			      \consistency_{\mathcal{G}}(\phi_{V} \fatsemi \iota_{V_{\mathcal{G}}}(x)).
		      \]
	\end{enumerate}
\end{definition}

\begin{theorem}[Existence of pushouts in $\catname{EHyp}(\Sigma)$]
	\label{th:existence_of_pushouts}
	Consider the following span in $\catname{EHyp}(\Sigma)$
	\[\begin{tikzcd}
			Z && X \\
			\\
			Y
			\arrow["f", from=1-1, to=1-3]
			\arrow["g"', from=1-1, to=3-1]
		\end{tikzcd}\]
	such that
	\begin{enumerate}
		\label{pushout:assumptions}
		\item $Z$ is a \textit{discrete} e-hypergraph.
		\item \label{assumption:equal_predecessors} $[f_{V}(v_i)) = [f_{V}(v_j))$ and $[g_{V}(v_i)) = [g_{V}(v_j))$ for all $v_{i},v_{j}$ in $V_{Z}$.
		\item \label{assumption:non_ambiguous_predecessors} If $[f_{V}(v)) \not = \varnothing$ then $[g_{V}(v)) = \varnothing$ and if $[g_{V}(v)) \not = \varnothing$ then $[f_{V}(v)) = \varnothing$.
		\item $\consistency(f_{V}(v_i)) = \consistency(f_{V}(v_j))$ and $\consistency(g_{V}(v_i)) = \consistency(g_{V}(v_j))$ for all $v_i,v_j$ in $V_{Z}$.
	\end{enumerate}
	then the pushout $X \sqcup _{f,g} Y$ exists.
\end{theorem}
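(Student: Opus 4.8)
The plan is to construct the pushout by hand: first form the pushout $\mathcal{G}$ of the underlying hypergraphs $X' \xleftarrow{g'} Z' \xrightarrow{f'} Y'$ in $\catname{Hyp}(\Sigma)$, which exists since $\catname{Hyp}(\Sigma)$ has all finite colimits, and then equip $\mathcal{G}$ with hierarchical and consistency data so that it becomes the pushout in $\catname{EHyp}(\Sigma)$. Because $Z$ is discrete, the underlying pushout identifies only vertices: $E_{\mathcal{G}} = E_X + E_Y$ and $V_{\mathcal{G}} = (V_X + V_Y)/{\approx}$ with $\approx$ generated by $f_V(z) \approx g_V(z)$ for $z \in V_Z$; write $q_X : X \to \mathcal{G}$ and $q_Y : Y \to \mathcal{G}$ for the resulting cocone of hypergraph morphisms and $\bar v$ for the class of a merged vertex. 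A useful preliminary observation is that assumptions \ref{assumption:equal_predecessors} and \ref{assumption:non_ambiguous_predecessors} together force every merged vertex to be top-level on at least one side, and more precisely that one of $f_V(V_Z)$, $g_V(V_Z)$ consists entirely of top-level vertices while the other consists of vertices all sharing a single chain of predecessors; say, swapping $X$ and $Y$ if necessary, that every $g_V(v)$ is top-level in $Y$ and every $f_V(v)$ has the same predecessor chain $C$ in $X$.

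Next I would define the hierarchical relations on $\mathcal{G}$. Since no edges are merged, I take $\textcolor{e-color}{<^{\mu}_{\mathcal{G}}}$ and $\textcolor{closed-color}{<^{\mu}_{\mathcal{G}}}$ to be the relations generated by the images under $q_X, q_Y$ of the corresponding immediate-predecessor relations of $X$ and $Y$, then closed under the connectivity axiom (item~4 in the definition of an e-hypergraph): whenever a merged vertex lies inside the box $C$ coming from $X$, every $Y$-edge incident to it, then every vertex incident to those edges, and so on, must also be placed inside $C$. Assumption \ref{assumption:equal_predecessors} is exactly what makes this propagation conflict-free — all merged vertices sit in the \emph{same} box $C$, and the part of $Y$ that must move is precisely the top-level connected component of $Y$ meeting $g_V(V_Z)$ — so the result is a well-defined pair of partial orders in which every element has at most one immediate predecessor. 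The consistency relation $\consistency_{\mathcal{G}}$ is defined in the same spirit: on each parent-class it is the equivalence closure of the union of the relations inherited from $X$ and $Y$ together with the connectivity-forced pairs; the fourth hypothesis makes the merged vertices carry compatible consistency data, and since these vertices are top-level in $Y$ they contribute no new $Y$-side pairs, so non-degeneracy of each $\consistency_p$ is inherited. One then checks the remaining structural axioms of an e-hypergraph (predecessor sets are hierarchical, maximal edges are labelled, closure under connectivity) — each holds because it held in $X$ and $Y$ and the construction was designed to preserve it — and that the labelling of $\mathcal{G}$ is well defined because $f$ and $g$ preserve labels.

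Finally, for the universal property: $q_X$ and $q_Y$ are e-hypergraph homomorphisms because the structure on $\mathcal{G}$ was taken to be \emph{generated} by their images, so it preserves $<^{\mu}$ and $\consistency$; and given any competing cocone $X \xrightarrow{u} \mathcal{W} \xleftarrow{w} Y$ in $\catname{EHyp}(\Sigma)$ with $u \circ f = w \circ g$, the underlying-hypergraph pushout yields a unique hypergraph morphism $h : \mathcal{G} \to \mathcal{W}$ with $h \circ q_X = u$ and $h \circ q_Y = w$, and $h$ automatically preserves the hierarchical and consistency relations of $\mathcal{G}$ since each generator of those relations is the image of a relation preserved by $u$ or $w$. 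I expect the main obstacle to be the middle step: pinning down the connectivity closure that defines the hierarchical structure on $\mathcal{G}$ and verifying that it never forces an element to acquire two distinct immediate predecessors nor collapses some $\consistency_p$ to the full relation — this is exactly the delicate bookkeeping that hypotheses \ref{assumption:equal_predecessors}, \ref{assumption:non_ambiguous_predecessors} and the consistency hypothesis are there to control, and it is what the explicit construction below carries out in detail.
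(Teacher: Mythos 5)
Your proposal is correct and follows essentially the same route as the paper's proof: the paper likewise forms the coproduct and quotients only the vertices (equivalently, the underlying-hypergraph pushout, since $Z$ is discrete), then defines $\textcolor{e-color}{<^{\mu}}$, $\textcolor{closed-color}{<^{\mu}}$ and $\consistency$ on the quotient by inheriting them from whichever side has them defined and closing under undirected-path connectivity, with hypotheses (2)--(4) guaranteeing that this inheritance is unambiguous. Your preliminary observation that the hypotheses force all merged vertices to be top-level on one side and to share a single predecessor chain on the other is exactly the disambiguation the paper's case analysis relies on.
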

\begin{proof}

	We next explicitly construct a pushout.
	Consider the diagram below.
	\[
		\adjustbox{scale=1.25}{
			\begin{tikzcd}
				Z \arrow[r, "f"] \arrow[d, "g"']                                   & X \arrow[d, "\sfrac{\iota_1}{\sim R}"] \arrow[rdd, "j_1", bend left] &   \\
				Y \arrow[r, "\sfrac{\iota_2}{\sim R}"] \arrow[rrd, "j_2"', bend right] & \sfrac{X\sqcup Y}{\sim R} \arrow[rd, "u"]                              &   \\
				&                                                                  & Q
			\end{tikzcd}}
	\]
	Then, the pushout of e-hypergraphs $X$ and $Y$ is computed in two steps.
	First, a coproduct of $X\sqcup Y$ is computed which is
	\[
		X \sqcup  Y = \{V_{X} \sqcup  V_{Y}, E_{X} \sqcup  E_{Y}, s_{X\sqcup Y}, t_{X\sqcup Y}, \consistency_{X\sqcup Y}, \textcolor{e-color}{<_{X\sqcup Y}}, \textcolor{closed-color}{<_{X\sqcup Y}}\}
	\]
	where $s_{X\sqcup Y} : E_{X} \sqcup  E_{Y} \to (V_{X} \sqcup  V_{Y})^{*}$ which can be defined as a copairing $[s'_{X}, s'_{Y}] : E_{X} \sqcup  E_{Y} \to (V_{X} \sqcup  V_{Y})^{*}$ and $s'_{X} : E_{X} \to (V_{X} \sqcup  V_{Y})^{*}$, $s'_{Y} : E_{Y} \to (V_{X} \sqcup  V_{Y})^{*}$ defined as $s'_{X} = s_{X} \fatsemi \iota_{1,V}^{*}$ and $s'_{Y} = s_{Y} \fatsemi \iota_{2,V}^{*}$ where $\iota_1,\iota_2$ are corresponding coproduct injections, similarly for $t_{X\sqcup Y}, \textcolor{e-color}{<_{X\sqcup Y}}, \textcolor{closed-color}{<_{X\sqcup Y}}, \consistency_{X\sqcup Y}$.
	We will omit labels as they are irrelevant to pushout construction.
	Consider relations
	\begin{align*}
		S_{V} & = \{
		\;(x_i,y_j) \in (V_{X} \sqcup  V_{Y}) \times (V_{X} \sqcup  V_{Y})\; |                                               \\
		      & \;\exists z \in V_{Z} \; . \; x_i = f_{V} \fatsemi \iota_{1,V}(z) \text{ and } y_j = g_{V} \fatsemi \iota_{2,V}(z) \\
		      & \;\text{ where $x_i \in V_{X}$ and $y_j \in V_{Y}$ }
		\}                                                                                                       \\
		      & \cup                                                                                             \\
		      & \{
		\;(y_j,x_j) \in (V_{X} \sqcup  V_{Y}) \times (V_{X} \sqcup  V_{Y})\; |                                               \\
		      & \;\exists z \in V_{Z} \; . \; x_i = f_{V} \fatsemi \iota_{1,V}(z) \text{ and } y_j = g_{V} \fatsemi \iota_{2,V}(z) \\
		      & \;\text{ where $x_i \in V_{X}$ and $y_j \in V_{Y}$ }
		\}                                                                                                       \\
		      & \cup                                                                                             \\
		      & \{
		\;(x,x)\;\text{where}\; x \in V_{X} \sqcup  V_{Y}
		\}                                                                                                       \\
		\\
		S_{E} & = \{(x,x) \text{ where } x \in E_{X} \sqcup  E_{Y}\}
	\end{align*}
	and let relations $R_{V},R_{E}$ be their transitive closures respectively.

	We then quotient the set vertices and edges in $X \sqcup  Y$ by $R_{V}$ and $R_{E}$ respectively.

	\begin{align*}
		\sfrac{X \sqcup  Y}{\sim (R_{V},R_{E})} = & (
		\sfrac{V_{X} \sqcup  V_{Y}}{\sim (R_{V},R_{E})},                                                             \\
		                                    & \sfrac{E_{X} \sqcup  E_{Y}}{\sim (R_{V},R_{E})},                       \\
		                                    & \sfrac{s_{X\sqcup Y}}{\sim (R_{V},R_{E})},                             \\
		                                    & \sfrac{t_{X,Y}}{\sim (R_{V},R_{E})},                             \\
		                                    & \consistency_{\sfrac{X\sqcup Y}{\sim (R_{V},R_{E})}},                  \\
		                                    & \textcolor{e-color}{<_{\sfrac{X \sqcup  Y}{\sim (R_{V},R_{E})}}}       \\
		                                    & \textcolor{closed-color}{<_{\sfrac{X \sqcup  Y}{\sim (R_{V},R_{E})}}})
	\end{align*}

	We will refer to $\sim (R_{V},R_{E})$ just as $\sim$, \textit{e.g.}, by writing $\sfrac{V_{X} \sqcup  V_{Y}}{\sim}$ and concrete relation will be clear from the context.
	Where necessary we will refer to $S_{V}$ and $S_{R}$ as $\sim_{S}$.
	We have
	\[
		\sfrac{s_{X\sqcup Y}}{\sim} : \sfrac{E_{X} \sqcup  E_{Y}}{\sim} \to (\sfrac{V_{X} \sqcup  V_{Y}}{\sim})^{*}
	\]
	and there is an obvious surjective function $[]_{V} : (V_{X} \sqcup  V_{Y}) \to (\sfrac{V_{X} \sqcup  V_{Y}}{\sim})$ that maps elements to their equivalence classes and $[]_{V}^{*}$ is its extension to sequences.
	There is also $[] : E_{X} \sqcup  E_{Y} \to \sfrac{E_{X} \sqcup  E_{Y}}{\sim}$ and we will omit subscripts as the correct type will be clear from the argument.
	We then define
	\[
		\sfrac{s_{X\sqcup Y}}{\sim}([e]) = s_{X\sqcup Y} \fatsemi []^{*}(e) = [s_{X\sqcup Y}(e)]^{*}
	\]
	We will also use subscripts when it is important to tell if an element of $E_{X} \sqcup  E_{Y}$ has a pre-image in either $E_{X}$ or $E_{Y}$ by writing $e_{x}$ or $e_{y}$.
	Similarly, we will write $v_{x}$ to refer to a vertex with a pre-image in $V_{X}$.
	Likewise,
	\[
		t_{\sfrac{X\sqcup Y}{\sim}}([e]) = [t_{X\sqcup Y}(e)]^{*}
	\]

	These definitions make $([]_{V},[]_{E})$ automatically a homomorphism with respect to source and target maps.
	These maps are also automatically well-defined functions since $[e_1] = [e_2]$ implies $e_1 = e_2$ since the mappings for edges are mono.

	Recall that $\textcolor{e-color}{<}$ is essentially a transitive closure of $\textcolor{e-color}{<^{\mu}}$ and hence we can first define
	\[\textcolor{e-color}{<_{\sfrac{X\sqcup Y}{\sim}}^{\mu}}(\iota_{\sfrac{E_{X} \sqcup  E_{Y}}{\sim}}[e])
	\] and
	\[
		\textcolor{e-color}{<_{\sfrac{X\sqcup Y}{\sim}}^{\mu}}(\iota_{\sfrac{V_{X} \sqcup  V_{Y}}{\sim}}[u])
	\] where $\iota_*$ are injections into $\sfrac{V_{X} \sqcup  V_{Y}}{\sim} \sqcup  \sfrac{E_{X} \sqcup  E_{Y}}{\sim}$.
	We will consider several cases.
	First, assume that $u$ is an element of $V_{X\sqcup Y}$ and
	\begin{enumerate}
		\item If there exists $v$ such that $\textcolor{e-color}{<_{X\sqcup Y}^{\mu}}(\iota_{V_{X} \sqcup  V_{Y}}(v))$ is defined and $u \sim v$
		      we let
		      \[
			      \textcolor{e-color}{<_{\sfrac{X\sqcup Y}{\sim}}^{\mu}}(\iota_{\sfrac{V_{X} \sqcup  V_{Y}}{\sim}}([u])) = [\textcolor{e-color}{<_{X\sqcup Y}^{\mu}}(\iota_{V_{X} \sqcup  V_{Y}}(v))]
		      \]
		\item \label{def:child_respects_connectivity} $u$ has no pre-image in $V_{Z}$ and $<_{X\sqcup Y}^{\mu}(\iota_{V_{X} \sqcup  V_{Y}}(u))$ is undefined (\textit{i.e.}, $[u) = \varnothing$).
		      If there exists $v$ such that $\textcolor{e-color}{<_{X\sqcup Y}^{\mu}}(\iota_{V_{X} \sqcup  V_{Y}}(v)) = e'$ and such that there is an \textit{undirected} path from $[u]$ to $[v]$, then we define
		      \[
			      \textcolor{e-color}{<_{\sfrac{X\sqcup Y}{\sim}}^{\mu}}(\iota_{\sfrac{V_{X} \sqcup  V_{Y}}{\sim}}[u]) = <_{\sfrac{X\sqcup Y}{\sim}}^{\mu}(\iota_{\sfrac{V_{X} \sqcup  V_{Y}}{\sim}}[v])
		      \]
	\end{enumerate}

	Otherwise, we leave $\textcolor{e-color}{<_{\sfrac{X\sqcup Y}{\sim}}^{\mu}}(\iota_{\sfrac{V_{X} \sqcup  V_{Y}}{\sim}}([u]))$ undefined.
	Now, assume that $e$ is an element of $E_{X\sqcup Y}$.
	Consider two cases.
	\begin{enumerate}
		\item  $\textcolor{e-color}{<_{X\sqcup Y}^{\mu}}(\iota_{E_{X} \sqcup  E_{Y}}(e)) = e'$.
		      Then we define
		      \[
			      \textcolor{e-color}{<_{\sfrac{X\sqcup Y}{\sim}}^{\mu}}(\iota_{\sfrac{E_{X} \sqcup  E_{Y}}{\sim}}[e]) = [\textcolor{e-color}{<_{X\sqcup Y}^{\mu}}(\iota_{E_{X} \sqcup  E_{Y}}(e))]
		      \]
		\item $[e) = \varnothing$ and there exists $v$ such that $<_{X\sqcup Y}^{\mu}(\iota_{V_{X} \sqcup  V_{Y}}(v)) = e'$ and such that there is an undirected path from $[e]$ to $[v]$, we define
		      \[
			      \textcolor{e-color}{<_{\sfrac{X\sqcup Y}{\sim}}^{\mu}}(\iota_{\sfrac{E_{X} \sqcup  E_{Y}}{\sim}}[e]) = \textcolor{e-color}{<_{\sfrac{X\sqcup Y}{\sim}}^{\mu}}(\iota_{\sfrac{V_{X} \sqcup  V_{Y}}{\sim}}[v])
		      \]
	\end{enumerate}
	Otherwise we leave $\textcolor{e-color}{<_{\sfrac{X\sqcup Y}{\sim}}^{\mu}}(\iota_{E_{X} \sqcup  E_{Y}}{\sim}([e]))$ undefined.
	Clearly, all the cases above are disjoint.
	We similarly define $\textcolor{closed-color}{<_{\sfrac{X\sqcup Y}{\sim}}^{\mu}}$.

	Let's now define $\consistency_{\sfrac{X\sqcup Y}{\sim}}$.
	We can consider the consistency relation from the coproduct as a function
	\[
		\consistency_{X\sqcup Y} : (V_{X} \sqcup  V_{Y}) \sqcup  (E_{X} \sqcup  E_{Y}) \to 2^{(V_{X} \sqcup  V_{Y}) \sqcup  (E_{X} \sqcup  E_{Y})}
	\]
	Quotienting the values of the function gives us
	\[
		\consistency_{X\sqcup Y}' : V_{X} \sqcup  V_{Y} \sqcup  E_{X} \sqcup  E_{Y} \to 2^{(\sfrac{V_{X} \sqcup  V_{Y}}{\sim} \sqcup  \sfrac{(E_{X} \sqcup  E_{Y})}{\sim})}
	\]
	which is essentially $[ []_{V} \fatsemi \iota_{\sfrac{V_{X} \sqcup  V_{Y}}{\sim}}, []_{E} \fatsemi \iota_{\sfrac{E_{X} \sqcup  E_{Y}}{\sim}}]^{*}$ (a copairing extended to sequences that we will further denote as $[ []_{V}^{\consistency} []_{E}^{\consistency}]$) applied to the return value of $\consistency_{X\sqcup Y}$.
	We then first define an auxiliary relation $\consistency^{\hashtag}$ similarly to how $<^{\mu}_{\sfrac{X\sqcup Y}{\sim}}$ was defined. We begin with defining $\consistency^{\hashtag}_{\sfrac{X\sqcup Y}{\sim}}$ for vertices.

	\begin{enumerate}
		\item If there exists $v$ such that $\consistency_{X\sqcup Y}(\iota_{V_{X} \sqcup  V_{Y}}(v)) \not = \varnothing$ and $u \sim v$, we let
		      \ifdefined \ONECOLUMN
			      \[
				      \consistency_{\sfrac{X\sqcup Y}{\sim}}^{\hashtag}(\iota_{\sfrac{V_{X} \sqcup  V_{Y}}{\sim}}([u]))
				      =
				      [[]_{V}^{\consistency},[]_{E}^{\consistency}]^{*}(\consistency_{X\sqcup Y}(\iota_{V_{X} \sqcup  V_{Y}}(v)))
			      \]
		      \else
			      \begin{align*}
				       & \consistency_{\sfrac{X\sqcup Y}{\sim}}^{\hashtag}(\iota_{\sfrac{V_{X} \sqcup  V_{Y}}{\sim}}([u]))           \\
				       & =                                                                                               \\
				       & [[]_{V}^{\consistency},[]_{E}^{\consistency}]^{*}(\consistency_{X\sqcup Y}(\iota_{V_{X} \sqcup  V_{Y}}(v)))
			      \end{align*}
		      \fi
		\item $u$ has no pre-image in $V_{Z}$ and $\consistency_{X\sqcup Y}(\iota_{V_{X} \sqcup  V_{Y}}(u)) = \varnothing$ and there exists $v$ such that $\consistency_{X\sqcup Y}(\iota_{V_{X} \sqcup  V_{Y}}(v)) \not = \varnothing$ and such that there is an undirected path from $[u]$ to $[v]$.
		      Then we let
		      \[
			      \consistency_{\sfrac{X\sqcup Y}{\sim}}^{\hashtag}(\iota_{\sfrac{V_{X} \sqcup  V_{Y}}{\sim}}([u])) = \consistency_{\sfrac{X\sqcup Y}{\sim}}^{\hashtag}(\iota_{\sfrac{V_{X} \sqcup  V_{Y}}{\sim}}([v]))
		      \]
	\end{enumerate}

	Next we define $\consistency_{\sfrac{X\sqcup Y}{\sim}}^{\hashtag}$ for edges.

	\begin{enumerate}
		\item If $\consistency_{X\sqcup Y}(\iota_{E_{X} \sqcup  E_{Y}}(e)) \not = \varnothing$, then
		      \begin{align*}
			      \consistency_{\sfrac{X\sqcup Y}{\sim}}^{\hashtag}(\iota_{\sfrac{E_{X} \sqcup  E_{Y}}{\sim}}([e]_{E})) =
			      [[]_{V}^{\consistency}, []_{E}^{\consistency}]^{*}(\consistency_{X\sqcup Y}(\iota_{E_{X} \sqcup  E_{Y}}(e)))
		      \end{align*}
		\item $\consistency_{X\sqcup Y}(\iota_{E_{X} \sqcup  E_{Y}}(e)) = \varnothing$ and there exists $v$ such that $\consistency_{X\sqcup Y}(\iota_{V_{X} \sqcup  V_{Y}}(v)) \not = \varnothing$ and such that there is an undirected path from $[e]$ to $[v]$
		      then
		      \[
			      \consistency_{\sfrac{X\sqcup Y}{\sim}}^{\hashtag}(\iota_{\sfrac{E_{X} \sqcup  E_{Y}}{\sim}}([e]_{E})) = \consistency_{\sfrac{X\sqcup Y}{\sim}}^{\hashtag}(\iota_{V_{X} \sqcup  V_{Y}}([v]))
		      \]
	\end{enumerate}

	The well-definedness of this construction follows by the same argument as the well-definedness of $<_{\sfrac{X\sqcup Y}{\sim}}^{\mu}$.
	Then we define
	\ifdefined \ONECOLUMN
		\[
			\consistency_{\sfrac{X\sqcup Y}{\sim}}(\iota_{\sfrac{V_{X} \sqcup  V_{Y}}{\sim}}([v])) \qquad \text{and} \qquad \consistency_{\sfrac{X\sqcup Y}{\sim}}(\iota_{\sfrac{E_{X} \sqcup  E_{Y}}{\sim}}([e]))
		\]
	\else
		\[
			\consistency_{\sfrac{X\sqcup Y}{\sim}}(\iota_{\sfrac{V_{X} \sqcup  V_{Y}}{\sim}}([v]))
		\]
		and
		\[
			\consistency_{\sfrac{X\sqcup Y}{\sim}}(\iota_{\sfrac{E_{X} \sqcup  E_{Y}}{\sim}}([e]))
		\]
	\fi

	as closures of $\consistency^{\hashtag}_{\sfrac{X\sqcup Y}{\sim}}$ as below.
	\[
		(\consistency^{\hashtag}_{\sfrac{X\sqcup Y}{\sim}}(\iota_{\sfrac{*}{\sim}}([x])))^{c}
	\]
	where $c$ denotes a closure and $\iota_{\sfrac{*}{\sim}}$ is $\iota_{\sfrac{V_{X} \sqcup  V_{Y}}{\sim}}$ or $\iota_{\sfrac{E_{X} \sqcup  E_{Y}}{\sim}}$ depending on whether $[x_i]$ comes from $V_{X} \sqcup  V_{Y}$ or $E_{X} \sqcup  E_{Y}$, is the smallest set such that
	\begin{itemize}
		\item $[x] \in (\consistency^{\hashtag}_{\sfrac{X\sqcup Y}{\sim}}(\iota_{\sfrac{*}{\sim}}([x])))^{c}$ if $\consistency^{\hashtag}_{\sfrac{X\sqcup Y}{\sim}}(\iota_{\sfrac{*}{\sim}}([x])) \not = \varnothing$
		\item if $[y] \in \consistency^{\hashtag}_{\sfrac{X\sqcup Y}{\sim}}(\iota_{\sfrac{*}{\sim}}([x]))$ then
		      \[
			      [x] \in (\consistency^{\hashtag}_{\sfrac{X\sqcup Y}{\sim}}(\iota_{\sfrac{*}{\sim}}([y])))^{c}
		      \]
		\item for any sequence $([x_1], \ldots, [x_n])$ such that
		      \ifdefined \ONECOLUMN
			      \[[x_i] \in \consistency^{\hashtag}_{\sfrac{X\sqcup Y}{\sim}}(\iota_{\sfrac{*}{\sim}}([x_{i\sqcup 1}])) \qquad \text{or} \qquad [x_{i\sqcup 1}] \in \consistency^{\hashtag}_{\sfrac{X\sqcup Y}{\sim}}(\iota_{\sfrac{*}{\sim}}([x_{i}]))\]
		      \else
			      \[
				      [x_i] \in \consistency^{\hashtag}_{\sfrac{X\sqcup Y}{\sim}}(\iota_{\sfrac{*}{\sim}}([x_{i\sqcup 1}]))
			      \] or
			      \[
				      [x_{i\sqcup 1}] \in \consistency^{\hashtag}_{\sfrac{X\sqcup Y}{\sim}}(\iota_{\sfrac{*}{\sim}}([x_{i}]))
			      \]
		      \fi
		      for $i < n$ both
		      \ifdefined \ONECOLUMN
			      \[
				      [x_1] \in (\consistency^{\hashtag}_{\sfrac{X\sqcup Y}{\sim}}(\iota_{\sfrac{*}{\sim}}([x_{n}])))^{c} \qquad \text{and} \qquad [x_n] \in (\consistency^{\hashtag}_{\sfrac{X\sqcup Y}{\sim}}(\iota_{\sfrac{*}{\sim}}([x_{1}])))^{c}
			      \]
		      \else
			      \[
				      [x_1] \in (\consistency^{\hashtag}_{\sfrac{X\sqcup Y}{\sim}}(\iota_{\sfrac{*}{\sim}}([x_{n}])))^{c}
			      \]
			      and
			      \[
				      [x_n] \in (\consistency^{\hashtag}_{\sfrac{X\sqcup Y}{\sim}}(\iota_{\sfrac{*}{\sim}}([x_{1}])))^{c}
			      \]
		      \fi
	\end{itemize}

	This construction is analogous to how a pushout was constructed by Tiurin et al.~\cite{tiurin2025equivalencehypergraphsdporewriting} and we refer the reader for more detailed proofs there.
	The only addition is $\textcolor{closed-color}{<_{\sfrac{X\sqcup Y}{\sim}}^{\mu}}$ that is constructed analogously to $\textcolor{e-color}{<_{\sfrac{X\sqcup Y}{\sim}}^{\mu}}$.
\end{proof}
The construction can be illustrated with the example in~\autoref{fig:pushout_example}.
Vertices and edges that are being identified as they have a common preimage inherit the relations from the vertex or edge that has it defined either directly or via a path.
In particular, a coproduct of two feet contains\begin{tikzpicture}
	\adjustbox{scale=0.8}{
		\begin{pgfonlayer}{nodelayer}
			\node [style=node, label={above:$v_1$}] (0) at (-5, 3.5) {};
			\node [style=node, label={above:$v_2$}] (1) at (-4.5, 3.5) {};
			\node [style=node, label={above:$v_3$}] (2) at (-4, 3.5) {};
			\node [style=node, label={above:$v'_1$}] (3) at (-3.5, 3.5) {};
			\node [style=node, label={above:$v'_2$}] (4) at (-3, 3.5) {};
			\node [style=node, label={above:$v'_3$}] (5) at (-2.5, 3.5) {};
		\end{pgfonlayer}
	}
\end{tikzpicture}
such that $\textcolor{closed-color}{<_{X\sqcup Y}}$ is only defined for $v_1,v_2,v_3$.
Therefore, we assign $\textcolor{closed-color}{<_{\sfrac{X\sqcup Y}{\sim}}}(\iota_{\sfrac{V_{X} \sqcup  V_{Y}}{\sim}}[v_1']) = [\textcolor{closed-color}{<_{X\sqcup Y}}(\iota_{V_{X} \sqcup  V_{Y}}(v_1))]$ and so on.
This motivates the requirement for conditions (2), (3), and (4) of~\autoref{th:existence_of_pushouts}, otherwise such inheritance would be ambiguous.

\begin{figure}[t]
	\[
		\adjustbox{scale=0.6}{
			\tikzfig{figures/pushout_example}
		}
	\]
	\caption{Pushout in $\catname{EHyp}(\Sigma)$}
	\label{fig:pushout_example}
\end{figure}

\section{Interpretation of $\Sigma$-terms as cospans of hypergraphs}
\label{sec:appendix:interpretation}
\begin{figure*}
	\[
		\adjustbox{width=\textwidth}{
			\tikzfig{figures/interpretation}
		}
	\]
	\caption{Base cases for $[-] : \textbf{SMT}(\Sigma) \to \MdaCospans$}
	\label{fig:base_cases}
\end{figure*}

The base cases for interpretation function $[-] : \textbf{S}(\Sigma) \to \MdaCospans$ are given in~\autoref{fig:base_cases}.
Then it is defined inductively by letting $[f \otimes g] \Coloneqq [f] \otimes [g]$ and $[f \fatsemi g] \Coloneqq [f] \fatsemi [g]$ and by freeness induces the corresponding functor.
We obtain the interpretation function $\llbracket - \rrbracket : \Sigma^{+} \to \WellTypedMdaEcospans$ by combining the above with the interpretations of $\llbracket \textbf{ev}_{A,B} \rrbracket$ and $\llbracket \Lambda_{A,B,C} \rrbracket$ shown in~\autoref{fig:ev_and_lambda} and by interpreting $\llbracket f + g \rrbracket$ as shown in~\autoref{fig:f+g}.
The rest again follows by induction and freeness.
Note that unlike $[-]$, $\llbracket - \rrbracket$ does not get promoted to a functor $\catname{CS}^{+}(\Sigma) \to \WellTypedMdaEcospans$; this is because target of the to-be functor is not a \emph{closed} symmetric monoidal $\catname{SLat}$-category, although we will turn it into a symmetric monoidal $\catname{SLat}$-category.
This is achieved by quotienting morphisms of $\WellTypedMdaEcospans$ by EDPOI rewrite rules that arise from the interpretation of $\catname{SLat}$ SMC axioms.
For example, the EDPOI rule corresponding to the distributivity of $( \fatsemi )$ over $+$ is given by the following rewrite rule.
\[
	\adjustbox{width=\linewidth}{
		\tikzfig{figures/semilattice_rule_1_2}~.
	}
\]
Additionally, we quotient by the distributivity law (\autoref{law:distributivity}).
The rewrite (schema) rule corresponding to it looks as follows.

\[
\adjustbox{width=\linewidth}{
		\tikzfig{figures/semilat_distributivity_law1}~.
	}
\]
The collection of these rules is denoted $\mathcal{S}$ and defines the corresponding category $\WellTypedMdaEcospans / \mathcal{S}$ where morphisms (cospans) are identified if they can be rewritten into each other.

\begin{figure}
	\begin{subfigure}[c]{0.45\linewidth}
		\[
			\adjustbox{scale=0.6}{
				\tikzfig{figures/ev_interpretation_2}
			}
		\]
	\end{subfigure}
	\hfill
	\begin{subfigure}[c]{0.45\linewidth}
		\[
			\adjustbox{scale=0.6}{
				\tikzfig{figures/lambda_interpretation_2}
			}
		\]
	\end{subfigure}
	\caption{$\llbracket \textsf{ev} \rrbracket$ (left) and $\llbracket \Lambda(f) \rrbracket$ (right)}
	\label{fig:ev_and_lambda}
\end{figure}

\begin{figure}
	\[
		\adjustbox{scale=0.5}{
			\tikzfig{figures/f_plus_g_new_2}
		}
	\]
	\captionsetup{belowskip=-1ex}
	\caption{$+$ of two morphisms in $\WellTypedMdaEcospans$}
	\label{fig:f+g}
\end{figure}

\section{EDPOI rewriting}%
\label{sec:appendix:dpoi}

Consider an example of EDPOI rewriting in~\autoref{fig:dpoi-example} as defined in~\autoref{def:dpoi-e}.
The rule in the top half of the diagram is an instance of a $\beta$-reduction rule as defined for string diagrams~\cite{ghica2024stringdiagramslambdacalculifunctional}.
Note that the match satisfies all the conditions of the boundary complement as all vertices from the interface have the same parent.
First, to compute the boundary complement, we remove from the source e-hypergraph everything that has no pre-image in the interface above.
The mda-cospan for the complement then will be
\[
	[u_1] \to [u_1, v_1, v_2] \sqcup  [v_3] \to \mathcal{L}^{\bot} \xleftarrow{} [v_3] \sqcup  [v_1,v_2] \xleftarrow{} [u_2] ~.
\]
Note how we removed strict internal interfaces of $\mathcal{L}$ from internal interfaces of $\mathcal{G}$, yielding $[u_1,v_1,v_2]$ and similarly for output interfaces.
After computing the pushout for the second half of the diagram we add strict internal interfaces of $\mathcal{R}$ but since there are none, we are done as per condition (2) of~\autoref{def:dpoi-e}.

\begin{figure}
		\[
			\adjustbox{width=0.6\linewidth}{
				\tikzfig{figures/dpo_example}
			}
		\]
	\caption{$\beta$-rule EDPOI example.}
	\label{fig:dpoi-example}
\end{figure}

\section{Modelling $\lambda_{\text{lsub}}$-calculus in slotted e-graphs}%
\label{sec:e-graph-explicit-subst}


\begin{listing}
	\begin{minted}{rust}
pub enum Lambda {
   Lam(Bind<AppliedId>) = "lam",
   App(AppliedId, AppliedId) = "app",
   // subst $x t u := t[x / u]
   Subst(Bind<AppliedId>, AppliedId) = "subst",
   Var(Slot) = "var",
}
	\end{minted}
	\caption{$\lambda_{\text{lsub}}$ syntax definition in slotted e-graphs}
	\label{listing:define_language}
\end{listing}

\begin{listing}
	\begin{minted}{rust}
fn beta() -> Rewrite<Lambda> {
    let pat = "(app (lam $1 ?b) ?t)";
    let outpat = "(subst $1 ?b ?t)";
    Rewrite::new("beta", pat, outpat)
}
	\end{minted}
	\caption{$\lambda_{\text{lsub}}$ $d\beta$-rule}
	\label{listing:lsub_beta_rule}
\end{listing}

Recall the grammar, the reduction rules and the equational theory of $\lambda_{\text{lsub}}$ from~\autoref{sec:application}.
These can be directly encoded in the framework of slotted e-graphs of Schneider et al.~\cite{slotted-egraphs} as can be seen in \autoref{listing:define_language}, \autoref{listing:lsub_beta_rule}, \autoref{listing:lsub_ls_rules}, \autoref{listing:gc_rule}, and \autoref{listing:graphical_equivalence}.
We use \mintinline{rust}{"subst"} to denote explicit substitution and use the built-in substitution mechanism provided by slotted e-graphs to encode $C\llbracket x \rrbracket[x / u] \to C\llbracket u \rrbracket[x/u]$ rules.
Notably, since equivalence is symmetric, the corresponding rules have two versions --- \textit{left} and \textit{right}.
We check for the side conditions using conditional rewrites provided by the slotted e-graphs framework.

To show the effect of those particular sets of rewrites when performing equality saturation for $\lambda_{\text{lsub}}$ we apply certain subsets of these rewrites to a simple term that encodes $2 + 2$ using Church encoding.
When doing so, we keep track of the number of e-nodes in the saturated e-graph as well as the number of times each rewrite was applied during saturation (we count only the applications that change the e-graph).
We also note whether a particular rewrite is absorbed by using string-diagrammatic (e-hypergraph) representation of $\lambda_{\text{lsub}}$-terms.
The idea is to show how much workload can be absorbed by using the above-mentioned representation that represents certain equivalences on the nose.
\autoref{tbl:slotted} shows the number of e-nodes after saturation with different sets of rewrite rules associated with $\lambda_{\text{lsub}}$ enabled.
\texttt{lambda\_subst} is the distributivity of substitution over lambda abstraction (encoded using two rewrite rules) which is the middle equation of graphical equivalence from~\autoref{sec:application}, \texttt{subst\_comm} is the commutativity of substitutions which corresponds to the first equation and \texttt{app\_subst} is the third equation.
We can see that enabling all those rules --- that is truly performing rewriting modulo graphical equivalence --- significantly increases the number of e-nodes in the e-graph.
Since \texttt{subst\_comm} and \texttt{app\_subst} are absorbed by the e-hypegraph representation we theoretically can get away with fewer nodes if we used this representation.

Another metric that we use is the number of times each rewrite was applied during saturation which is shown in \autoref{tbl:slotted-rewrite-applications}.
$gc$ rule is encoded using \texttt{subst\_no\_var} rewrite rule.
We can see that the rules that are absorbed by the e-hypergraph representation (highlighted in grey) comprise more than half of all the rewrite applications during saturation when all the rules are enabled.
So again, using a more suitable representation can significantly reduce the workload of the equality saturation in this case.

\begin{remark}
Even though $gc$ rule is not absorbed by the e-hypergraph representation (despite being a part of string diagram equivalence), we can argue that this rule can be disregarded when using e-hypergraphs.
Consider a substitution $y[x / v]$ which corresponds to the following composition in the internal language
\[
\inferrule*{\Gamma_1,y : T, \Gamma_2 \vdash v : T' \qquad \Gamma_1, x : T', y : T, \Gamma_2 \vdash y : T}{
	\Gamma_1, y : T, \Gamma_2 \vdash y[x / v] : T \equiv y
}
\]
This composition is represented using the string diagram in~\autoref{fig:subst_no_var_string}.
While string diagrammatically this composition equals the string diagram for just $y$ since we are working modulo the equations for copy and delete, they are not isomorphic as e-hypergraphs.
However, since the output of $v$ gets deleted anyway, we can ignore any rewrites that involve $v$ and its inputs.
The deletion morphisms for $\Gamma_1, \Gamma_2$ that follow the copy can be absorbed by the e-hypergraph representation by following the approach of Milosavljevi\'{c} et al.~\cite{zanassi_comonoid}.
\end{remark}

\begin{listing}
	\begin{minted}{rust}
fn subst_no_var() -> Rewrite<Lambda> {
   let pat = "(subst $x ?t1 ?t2)";
   let outpat = "?t1";
   Rewrite::new_if("subst_no_var", pat, outpat, |subst, _ | {
      !subst["t1"].slots().contains(&Slot::named("x"))
   })
}
	\end{minted}
	\caption{$\lambda_{\text{lsub}}$ $gc$-rule}
	\label{listing:gc_rule}
\end{listing}

\begin{figure}
	\centering
	\tikzfig{figures/subst_no_var_string}
	\caption{String diagram for $y[x/v]$}
	\label{fig:subst_no_var_string}
\end{figure}

\begin{table*}
	\begin{tabular}{lcccc}
		Set of equations & $d\beta$ + $ls$ + $gc$ & $d\beta$ + $ls$ + $gc$    & $d\beta$ + $ls$ + $gc$     & $d\beta$ + $ls$ + $gc$\\
		                 &                        & + \texttt{lambda\_subst}  & + \texttt{lambda\_subst}   & + \texttt{lambda\_subst}\\
		                 &                        &                           & + \texttt{subst\_comm}     & + \texttt{subst\_comm}\\
						 &                        &                           &                            & + \texttt{subst\_app}\\						 
		\# of e-nodes    & 88                     & 185                       & 244                        & 846
	\end{tabular}
	\caption{Number of e-nodes after saturating the e-graph for $2 + 2$ with different sets of rewrite rules for $\lambda_{\text{lsub}}$}
	\label{tbl:slotted}
\end{table*}

\begin{table*}
	\begin{tabular}{l|ccccc}
		Rewrite rule & \# of applications     & \# of applications         & \# of applications      & \# of applications       & Absorbed?\\
		             &($d\beta$ + $ls$ + $gc$)&($d\beta$ + $ls$ + $gc$     &($d\beta$ + $ls$ + $gc$  &($d\beta$ + $ls$ + $gc$   &\\
		             &                        & + \texttt{lambda\_subst})  & + \texttt{lambda\_subst}& + \texttt{lambda\_subst} &\\
		             &                        &                            & + \texttt{subst\_comm}) & + \texttt{subst\_comm}   &\\
		             &                        &                            &                         & + \texttt{subst\_app})   &\\
		\hline
		\texttt{beta} & 14 & 57 & 49 & 80 & No\\
		\texttt{subst\_no\_var} & 35 & 113 & 118 & 197 & No\\
		\rowcolor{gray!30}
		\texttt{ls\_rule\_*} & 48 & 141 & 140 & 302 & Yes\\
		\texttt{subst\_lambda} & --- & 76 & 109 & 336 & No\\
		\rowcolor{gray!30}
		\texttt{subst\_comm} & --- & --- & 104 & 321 & Yes\\
		\rowcolor{gray!30}
		\texttt{subst\_app} & --- & --- & --- & 263 & Yes\\
	\end{tabular}
	\caption{Number of times each rewrite rule was applied during saturation of the e-graph for $2 + 2$}
	\label{tbl:slotted-rewrite-applications}
\end{table*}


\clearpage

\begin{listing*}
	\centering
\begin{minipage}{0.8\textwidth}
	\begin{minted}{rust}
fn ls_rule_var() -> Rewrite<Lambda> {
   let pat = "(subst $x (var $x) ?t)";
   let outpat = "(subst $x (var $x)[(var $x) := ?t] ?t)";
   Rewrite::new("ls_var", pat, outpat)
}

fn ls_rule_app_l() -> Rewrite<Lambda> {
   let pat = "(subst $x (app ?u ?v) ?t)";
   let outpat = "(subst $x (app ?u[(var $x) := ?t] ?v) ?t)";
   Rewrite::new_if("ls_app_l", pat, outpat, |subst, _| {
      subst["u"].slots().contains(&Slot::named("x")) 
	  && !subst["v"].slots().contains(&Slot::named("x"))
    })
}

fn ls_rule_app_r() -> Rewrite<Lambda> {
   let pat = "(subst $x (app ?u ?v) ?t)";
   let outpat = "(subst $x (app ?u ?v[(var $x) := ?t]) ?t)";
   Rewrite::new_if("ls_app_r", pat, outpat, |subst, _| {
      !subst["u"].slots().contains(&Slot::named("x")) 
	  && subst["v"].slots().contains(&Slot::named("x"))
   })
}

fn ls_rule_lam() -> Rewrite<Lambda> {
   let pat = "(subst $x (lam $y ?b) ?t)";
   let outpat = "(subst $x (lam $y ?b[(var $x) := ?t]) ?t)";
   Rewrite::new_if("ls_lam", pat, outpat, |subst, _| {
      subst["b"].slots().contains(&Slot::named("x"))
   })
}

fn ls_rule_subst_l() -> Rewrite<Lambda> {
   let pat = "(subst $x (subst $y ?u ?t1) ?t2)";
   let outpat = "(subst $x (subst $y ?u ?t1[(var $x) := ?t2]) ?t2)";
   Rewrite::new_if("ls_subst_l", pat, outpat, |subst, _| {
      subst["t1"].slots().contains(&Slot::named("x"))
   })
}

fn ls_rule_subst_r() -> Rewrite<Lambda> {
    let pat = "(subst $x (subst $y ?u ?t1) ?t2)";
    let outpat = "(subst $x (subst $y ?u[(var $x) := ?t2] ?t1) ?t2)";
    Rewrite::new_if("ls_subst_r", pat, outpat, |subst, _| {
          subst["u"].slots().contains(&Slot::named("x"))
    })
}
	\end{minted}
\end{minipage}
	\caption{$\lambda_{\text{lsub}}$ $ls$-rules}
	\label{listing:lsub_ls_rules}
\end{listing*}

\clearpage

\begin{listing*}
	\centering
	\begin{minipage}{0.8\textwidth}
	\begin{minted}{rust}
fn subst_comm() -> Rewrite<Lambda> {
   let pat = "(subst $y (subst $x ?u ?t1) ?t2)";
   let outpat = "(subst $x (subst $y ?u ?t2) ?t1)";
   Rewrite::new_if("subst_comm", pat, outpat, |subst,_| {
      !subst["t2"].slots().contains(&Slot::named("x")) 
      && !subst["t1"].slots().contains(&Slot::named("y"))
   })
}

fn lambda_subst_l() -> Rewrite<Lambda> {
   let pat = "(subst $x (lam $y ?b) ?t)";
   let outpat = "(lam $y (subst $x ?b ?t))";
   Rewrite::new("lambda_subst_l", pat, outpat)
}

fn lambda_subst_r() -> Rewrite<Lambda> {
   let pat = "(lam $y (subst $x ?b ?t))";
   let outpat = "(subst $x (lam $y ?b) ?t)";
   Rewrite::new_("lambda_subst_r", pat, outpat, |subst, _| {
      !subst["t"].slots().contains(&Slot::named("y"))
   })
}

fn subst_app_l() -> Rewrite<Lambda> {
   let pat = "(subst $x (app ?t1 ?t2) ?t3)";
   let outpat = "(app (subst $x ?t1 ?t3) ?t2)";
   Rewrite::new_if("subst_app_2", pat, outpat, |subst,_| {
      !subst["t2"].slots().contains(&Slot::named("x"))
   })
}
fn subst_app_r() -> Rewrite<Lambda> {
   let pat = "(app (subst $x ?t1 ?t3) ?t2)";
   let outpat = "(subst $x (app ?t1 ?t2) ?t3)";
   Rewrite::new_if("subst_app_2_2", pat, outpat, |subst,_| {
      !subst["t2"].slots().contains(&Slot::named("x"))
   })
}
	\end{minted}
\end{minipage}
	\caption{Graphical equivalence rules for $\lambda_{\text{lsub}}$}
	\label{listing:graphical_equivalence}
\end{listing*}







\end{document}